\documentclass[11pt,letterpaper]{article}
\PassOptionsToPackage{margin=1in}{geometry}
\usepackage[T1]{fontenc}
\usepackage[utf8]{inputenc}
\usepackage[english]{babel}
\usepackage{lmodern,microtype,amsmath,amssymb,amsthm,mathtools}
\usepackage{geometry}
\usepackage{booktabs,array,enumitem,tikz}
\usepackage{algorithm,algpseudocode,needspace}
\floatstyle{ruled}
\restylefloat{algorithm}
\usepackage[hidelinks]{hyperref}
\DeclareRobustCommand{\algname}[1]{\textnormal{\textsc{#1}}}
\hypersetup{pdftitle={Almost Lossless Transfer for Cuttable Distributed Problems},
  pdfauthor={Wojciech Wawrzyniak}}
\setlist{itemsep=.2em}
\newtheorem{theorem}{Theorem}[section]
\newtheorem{lemma}[theorem]{Lemma}
\newtheorem{fact}[theorem]{Fact}
\newtheorem{observation}[theorem]{Observation}
\newtheorem{proofclaim}{Claim}[theorem]

\newtheorem{corollary}[theorem]{Corollary}
\newtheorem{maintheorem}[theorem]{Theorem}
\theoremstyle{definition}
\newtheorem{definition}[theorem]{Definition}

\newcommand{\eps}{\varepsilon}

\newcommand{\C}{\mathcal C}
\newcommand{\D}{\mathcal D}
\newcommand{\PP}{\mathcal P}
\newcommand{\N}[2]{N^{#1}[#2]}
\newcommand{\OPT}{\operatorname{OPT}}
\newcommand{\Bsingle}{B_1}
\newcommand{\Bshift}{B_2}
\newcommand{\dist}{\operatorname{dist}}

\title{Almost Lossless Transfer\\for Cuttable Distributed Problems}
\author{Wojciech Wawrzyniak\\[0.4em]
\normalsize Faculty of Mathematics and Computer Science\\
\normalsize Adam Mickiewicz University\\
\normalsize Uniwersytetu Poznańskiego 4\\
\normalsize 61-614 Poznań, Poland}
\date{11 October 2026}

\begin{document}
\maketitle

\begin{abstract}
\normalsize
We study constant-round LOCAL approximation for additive local
vertex-subset minimization problems. Given a $k$-uniform
$\alpha$-approximation on a hereditary source class closed under
disjoint union, Bonamy, Das,
Gavoille, Picavet, Suomela, and Wesolek obtained ratio
$\alpha(d+1)+1$ on locally nice target classes of asymptotic dimension $d$.
We prove two transfers with ratios $\alpha+1+\eta$ and $\alpha+\eta$
for every $\eta>0$. These bounds require the stronger local niceness
assumption at a larger scale that depends on $\eta$.

The first algorithm keeps their repair construction and increases
the radius used to detect errors. We average over partitions of the graph
to bound the retained cost without the factor $d+1$.
These partitions are used only in the analysis.
The second algorithm tests several deletion widths
around the errors. In each component of the repair zone, it chooses
a smallest union of retained choices and an exact repair.
Averaging over the widths bounds the cost counted twice and yields
ratio $\alpha+\eta$.

Both transfers are deterministic and computable, use arbitrary distinct
numerical identifiers, and need no knowledge of the graph size.
Their outputs also satisfy local uniformity guarantees.
For 2-domination on graphs of bounded Euler genus, the same planar
source gives ratio $19$ through the earlier transfer and $6+\eta$
through ours. For minimum dominating set, our transfer gives ratio
$7+\eps$ on graphs formed by attaching a clique of arbitrary size
at one vertex of a planar graph, using the source algorithm of Amiri
and Siebertz. This class satisfies our target assumptions and allows
nonempty error sets. Its Euler genus is unbounded, so their guarantee
for graphs of bounded Euler genus does not cover the whole class.
\end{abstract}

\clearpage
\section{Introduction}\label{page:main-start}

A local approximation algorithm may work well on one graph class
and still provide no guarantee on a larger class.
We can nevertheless run its local rule on a graph in that larger class.
Three questions then arise. Is the output correct?
How does its cost compare with the optimum?
How can the remaining constraints be repaired in a constant number of rounds?

Minimum dominating set illustrates why such a transfer matters.
A dominating set places a selected vertex at distance at most one from
every vertex of a network. Such sets provide a basic way to select local
coordinators in a network \cite{LPW13}.
On general graphs, a constant number of LOCAL rounds cannot guarantee
a constant approximation ratio \cite{KMW16}.
The following upper bounds with a nonconstant number of rounds use
the identifier conventions of the cited papers, with identifiers
bounded polynomially in $n$.
With a polylogarithmic number of rounds, the randomized approximation
scheme for covering integer programs \cite{GKM17} applies to MDS.
Combined with a deterministic network decomposition
\cite{RG20}, it gives a deterministic approximation
scheme for MDS, also in a polylogarithmic number of rounds.
For planar graphs, a $(1+\eps)$-approximation is possible in
$O(\log^* n)$ rounds, with a matching obstruction to sub-$\log^* n$
deterministic approximation schemes \cite{CHW08}.
The upper bound extends to every fixed excluded-minor class
\cite{CHWW18}.
Thus a constant round bound requires a different trade-off between
locality and approximation.

For constant-round algorithms on planar graphs, the first
approximation ratio was 130 \cite{LPW13}.
A refined analysis of the same algorithm gives 52 \cite{Waw14}.
Later results reduced the ratio to $11+\eps$ \cite{HSV22,HKOdMSV25}.
Wawrzyniak's preprint gives ratio $8+\eps$ \cite{Wawrzyniak26}.
The later preprint of Amiri and Siebertz gives ratio $7+\eps$ \cite{AS26}.
Together with the lower bound $7-\eps$ of \cite{HLS14}, this determines
the infimum of deterministic constant-round approximation ratios
for planar MDS as seven.
Amiri and Siebertz state that Wawrzyniak's result motivated them to complete their work
\cite{AS26}.

The result of Amiri and Siebertz applies to all graphs of Euler genus
at most any fixed integer $g\ge0$, including planar graphs when $g=0$
\cite[Theorem~8.4]{AS26}. For every $\eps>0$, their deterministic LOCAL
algorithm returns a dominating set whose size is at most $7+\eps$ times
the minimum size of a dominating set. Its number of rounds depends only
on $g$ and $\eps$, and it requires no embedding. We use this algorithm
on bounded-Euler-genus graphs as the source for our transfer.

Bonamy and coauthors  address the transfer questions posed at the start
of this introduction through a series of meta-theorems. The statements appear in the conference
version~\cite{BDGPSW26}, and the detailed proofs in~\cite{BDGPSW26full}.
An important contribution of their work is a general transfer method
that applies to many local minimization problems. It is the starting
point for our results.

Our transfer theorems need a cost bound for every region, not only
for the whole graph. We express this requirement as
\textbf{$k$-uniformity}: for any vertex set $W$, the number of vertices
selected in $W$ is at most $\alpha$ times the minimum cost of satisfying
the constraints within distance $k$ of $W$.
For cuttable problems, an ordinary source approximation yields this
regional bound. Cuttability makes this
possible by relating the optimum for regional constraints to the
optimum on a suitable induced subgraph.

The transfer in \cite{BDGPSW26full} assumes that the target class has
\textbf{asymptotic dimension} at most $d$. At each scale $s$, every graph
$G$ in the class admits a colouring with $d+1$ colours. Joining
same-coloured vertices at distance at most $s$ gives components of
diameter at most $f(s)$ in $G$, for one function $f$ shared by the class.
The colouring need not be proper and is used only in the analysis.
Classes excluding a fixed minor have asymptotic dimension at most two
\cite{BBE24}.

Bonamy and coauthors obtain approximation ratio $\alpha(d+1)+1$
\cite{BDGPSW26full}. They also suggested that a finer
analysis of their algorithms could improve the approximation ratio
\cite[Section~6]{BGPW25}.

\paragraph{Our results.}
We prove two transfer theorems.

The algorithm \algname{SingleRepair} keeps the construction of
Meta-Theorem~I and achieves ratio $\alpha+1+\eta$ for every $\eta>0$.
This requires increasing the error-detection radius $\tau$ and assuming
local niceness at the larger scale. The improvement follows from
a tighter analysis of the cost of the retained source output.

Our new algorithm \algname{ShiftedRepair} achieves ratio $\alpha+\eta$
for every $\eta>0$.
For this further improvement, we change the repair step and require
local niceness at a larger scale than in \algname{SingleRepair}.
Within each connected component of the repair zone, the algorithm
tests $m$ buffer widths around the errors.
For each width, it deletes the source output within the buffer and
computes an exact repair. It then chooses the variant with the fewest
vertices in the union of the retained set and the repair.

Both transfers require local niceness at a larger scale than
Meta-Theorem~I of \cite{BDGPSW26}, and this scale depends on $\eta$.
This is a stronger assumption on the target class.
We do not prove these ratios under local niceness only at the original scale.
Section~\ref{sec:limits} gives the exact increase in scale and compares
the uniformity radii and round bounds.

Our contribution is the transfer with arbitrarily small additive loss.
For 2-domination, the planar 6-approximation of Czygrinow and coauthors
\cite{CHSWW17} gives ratio $19$ on bounded-Euler-genus graphs through
\cite[Corollary~4.5]{BDGPSW26full}. Our transfers give $7+\eta$ and
$6+\eta$ on the same graph classes, with the larger scales described
above. Table~\ref{tab:scale-example} compares the resulting bounds.
For MDS,  Section~\ref{sec:seven-source} transfers the source  guarantees
of Amiri and Siebertz  and gives a target class with unbounded Euler
genus
 and  nonempty error  sets.

There is also a useful case in which no repairs are needed.
If every tested ball belongs to the source class, the source algorithm
itself gives a $k$-uniform $(\alpha+\eta/2)$-approximation on the target
class in the same $r$ rounds, by Corollary~\ref{cor:iterate-no-errors}.
The whole target graph may still lie outside the source class.

\section{Model and definitions}\label{sec:model}

We consider finite, simple, undirected graphs. Distances are measured
in the whole graph unless we explicitly specify a different graph.
Vertices in different connected components have distance $\infty$.
For $v\in V(G)$, the \textbf{open neighbourhood} is
$N_G(v)\coloneqq\{u\in V(G):uv\in E(G)\}$.
For $W\subseteq V(G)$ and an integer $a\ge0$, its
\textbf{closed neighbourhood} $N_G^a[W]$ consists of vertices at
distance at most $a$ from some vertex of $W$.
For a single centre, we use the term \textbf{$a$-ball} and write
$N_G^a[v]\coloneqq N_G^a[\{v\}]$.
We omit the subscript $G$ when the graph is clear from the context.
In particular, $\N0W=W$ and $\N a\varnothing=\varnothing$.

The notation $G[W]$ denotes the subgraph of $G$ induced by $W$.
Thus $G[N_G^a[v]]$ is the subgraph induced by the ball $N_G^a[v]$
with centre $v$ and radius $a$.
The \textbf{weak diameter} of a set uses distances in the whole graph
$G$, so shortest paths may leave the set.
The power $G^s$ has the same vertices as $G$ and joins two distinct
vertices if their distance in $G$ is at most $s$.
We use $\mathbb N\coloneqq\{0,1,2,\ldots\}$.

\subsection{The local model}

We use the deterministic \textbf{computable LOCAL model}, with
synchronous communication between neighbours and no bound on message
size or local computation time.
All vertices use the same rule without knowing $n\coloneqq|V(G)|$.
Local rules must terminate on every finite input view, including
outside the source class.

Vertices have distinct identifiers in $\mathbb N$, unrestricted by $n$.
Each edge endpoint has a port number, distinct at each vertex.
Vertex and edge labels are finite strings encoded as natural numbers.
Induced subgraphs keep the original identifiers, labels, and ports.

\begin{definition}\label{def:extended-view}
For $v\in V(G)$ and an integer $a\ge0$, the \textbf{extended $a$-view}
at $v$ is the rooted graph
\[
 \bigl(N_G^{a+1}[v],\,\{e\in E(G):e\cap N_G^a[v]\ne\varnothing\},\,v\bigr),
\]

together with  the identifiers of all vertices in $N_G^{a+1}[v]$,
vertex labels on $N_G^a[v]$,
 and the labels and  endpoint ports of the
displayed edges.

\end{definition}

This information can be collected in $a+1$ rounds.
The extended $r$-view at $v$ determines its output in any deterministic
$r$-round algorithm.

\subsection{The problem and local uniformity}

We fix a \textbf{vertex-subset minimization problem} $\Pi$,
whose solutions are sets of vertices.
For each input graph $G$, the problem specifies which sets
$Y\subseteq V(G)$ are feasible, and the objective is to find such a set
of minimum cardinality. The cost of $Y$ is therefore $|Y|$.

A \textbf{verifier} for $\Pi$ checks a candidate $Y$ and returns
accept or reject at every vertex.
The set $Y$ is feasible exactly when every vertex accepts it.
We require the verifier to terminate on every finite input and write
$F_G(v,Y)$ for the condition that it accepts $Y$ at $v$.

The following notions come from~\cite{BDGPSW26}.
We state them in our model, using a computable verifier and the
extended views defined above.

\begin{definition}\label{def:problem}
For a fixed integer $\rho\ge0$, the problem $\Pi$ is
\textbf{$\rho$-local} if it has a verifier whose answer $F_G(v,Y)$
depends only on the extended $\rho$-view at $v$ and on
$Y\cap\N\rho{v}$.
\end{definition}

Throughout the paper we use one fixed verifier and consider target
instances $G$ with a feasible solution.
The analysis also uses sets that satisfy the constraints only in a
specified region, as follows.

\begin{definition}\label{def:feasible}
For $Y,W\subseteq V(G)$, the set $Y$ is \textbf{feasible on} $W$
if $F_G(v,Y)$ holds for every $v\in W$.
We define
\[
\begin{aligned}
 \OPT_\Pi(G,W)&\coloneqq
 \min\{|Y|:Y\subseteq V(G),\ Y\text{ is feasible on }W\},\\
 \OPT_\Pi(G)&\coloneqq\OPT_\Pi(G,V(G)).
\end{aligned}
\]
\end{definition}

Feasibility on $W$ restricts only the vertices where the constraints
are checked. The selected vertices of $Y$ may also lie outside $W$.
The next property ensures that adding vertices during repair
preserves constraints that have already been satisfied.

\begin{definition}\label{def:monotone}
The problem $\Pi$ is \textbf{additive} if $F_G(v,Y)$, meaning acceptance
of $Y$ at $v$, implies $F_G(v,Y')$ for every $Y'\supseteq Y$.
\end{definition}

To bound the retained cost, we need an approximation guarantee that
holds in every region of the graph.
The following local uniformity condition expresses this requirement.

\begin{definition}\label{def:uniform}
An algorithm $A$ is a \textbf{$k$-uniform $\alpha$-approximation}
on $\C$ if its output is feasible on every instance $G\in\C$
and, for every $W\subseteq V(G)$,
\[
 |A(G)\cap W|\le
 \alpha\,\OPT_\Pi(G,N_G^k[W]).
\]
\end{definition}%

Local uniformity compares the number of choices made by $A$ inside $W$
with the minimum cost of satisfying the constraints in $W$ and in a
buffer of width $k$ around $W$.
It is stronger than the global approximation guarantee, which follows
by setting $W=V(G)$.
Below we assume $\alpha\ge1$ and $k\ge r+1$, where $r$ is the number
of communication rounds of the source algorithm $A$.

The class $\C$ may contain \textbf{labelled instances}, namely graphs
with input labels on vertices or edges, and membership in $\C$
may depend on those labels.
Since our proofs pass to induced subgraphs, we require the following property.

\begin{definition}\label{def:hereditary}
A class of instances $\C$ is \textbf{hereditary} if $G[W]\in\C$
for every $G\in\C$ and $W\subseteq V(G)$, with all input data
inherited from $G$.
\end{definition}

We also assume that membership in $\C$ is decidable.

\subsection{The geometry of the target class}

The cost analysis uses partitions of the graph into bounded-diameter
parts. The following property of the target class enables their construction.

\begin{definition}\label{def:control}
A graph class $\D$ has a \textbf{$d$-dimensional control function}
$f:\mathbb N\to\mathbb N$ if, for every $G\in\D$ and $s\ge1$,
there is a colouring with $d+1$ colours whose monochromatic connected
components in $G^s$ have weak diameter at most $f(s)$ in $G$.
\end{definition}

We assume without loss of generality that $f$ is nondecreasing.
The integer argument $s\ge1$ specifies the scale, and $f(s)$ is a
nonnegative integer bound on weak diameter shared by all graphs in $\D$.
We fix one such function for the class.
For paths, the example in \cite[Section~2]{BDGPSW26full} uses
$f(s)=s-1$ for $s\ge1$.
Divide a path into consecutive blocks of $s$ vertices, allowing the last
block to be shorter, and colour the blocks alternately with two colours.
Each block has weak diameter at most $s-1$ in $G$.
The nearest vertices of distinct blocks of the same colour are at
distance at least $s+1$, so these blocks have no edge between them
in $G^s$. They are therefore the monochromatic connected components
of $G^s$.
It was shown in~\cite{BDGPSW26full} that such a function exists exactly
when the class has asymptotic dimension at most $d$.

Repair may be needed at vertices for which the subgraph induced by the ball
lies outside the source class. These vertices form the error set.

\begin{definition}\label{def:error-set}
For an integer $a\ge0$, the \textbf{$a$-error set} of $G$ relative
to $\C$ is
\[
 X_a(G)\coloneqq\{v\in V(G):G[N_G^a[v]]\notin\C\},
\]
where $G[N_G^a[v]]$ inherits its input data from $G$.
Its elements are called the \textbf{$a$-errors} of $G$.
\end{definition}

When $\C$ is the class of planar graphs, $X_a(G)$ consists of all
vertices $v$ for which $G[N_G^a[v]]$ is nonplanar.
This condition is checked separately at each vertex, even when
the subgraphs under consideration contain the same nonplanar subgraph.
To repair the output in a bounded number of rounds, we need control
over the diameters of regions surrounding the errors.
The following definition serves this purpose.

\begin{definition}\label{def:nice}
A graph $G$ is \textbf{$T$-locally $\delta$-nice} relative to $\C$
and $\Pi$ if every connected component of
$G[\N{2\rho}{X_T(G)}]$ has weak diameter at most $\delta$,
where $\rho$ is the radius of the fixed verifier for $\Pi$.
A class is $T$-locally $\delta$-nice if all its graphs have this property.
\end{definition}

Local niceness lets us collect the data needed for repair in a number
of rounds independent of the graph size.
The $2\rho$-neighbourhoods of error vertices may merge into larger
connected components. The definition ensures that any two vertices
of the same component are at distance at most $\delta$ in $G$.
Each vertex can therefore collect information about the entire component
and the neighbourhood needed to check a repair in a bounded number of rounds.
With the same information, the vertices can compute a common repair.

\section{The transfer algorithm and its guarantee}\label{sec:theorem}

\subsection{Assumptions}\label{sec:guarantees}

We use the following assumptions throughout the rest of the paper.
The problem $\Pi$ is an additive $\rho$-local vertex-subset
minimization problem.
The source class $\C$ is hereditary, and membership in $\C$ is decidable.
For the target class $\D$, we fix a $d$-dimensional control function $f$.

We are also given a source algorithm $A$ for $\Pi$ on $\C$ with the
$k$-uniform $\alpha$-approximation guarantee.
The algorithm is deterministic and computable and runs in $r$ rounds,
where $r\in\mathbb N$. Its local rule terminates on every finite input view.
We assume that its uniformity radius $k$ satisfies $k\ge r+1$.
The approximation ratio $\alpha$ satisfies $\alpha\ge1$.

The source rule, verifier, and procedure for testing membership in $\C$
are fixed in advance. For each transfer algorithm, the corresponding
theorem specifies the required scale of local niceness.
To apply the theorem to a graph class, we need a common integer bound
$\delta\ge0$ on the weak diameters of the components in
Definition~\ref{def:nice} at this scale.
The bound $\delta$ may therefore depend on the class and the scale.
With the problem and source algorithm fixed, we choose $\delta$ and the
radii in advance for the target class and the chosen approximation loss.
These values do not depend on the particular input graph or its identifiers.

The analysis assumes that the input graph satisfies this local niceness
condition. The algorithm detects the vertices in the error set,
but does not have to check the condition globally.

Under these assumptions, Theorem~\ref{thm:A} gives a
$(\delta+k)$-uniform $(\alpha+1+\eta)$-approximation when local niceness
holds at scale $\tau(\eta)$. Theorem~\ref{thm:B} gives a
$(\delta+k)$-uniform $(\alpha+\eta)$-approximation at the larger scale
$T(\eta)$. Both algorithms are deterministic, computable, and take a
constant number of rounds for fixed parameters. We  first  describe
\algname{SingleRepair}  and prove  the cost bound shared by both
algorithms.
 Section~\ref{sec:thmB} then introduces \algname{ShiftedRepair}.

\subsection{\algname{SingleRepair}}\label{sec:overview}

The pseudocode of \algname{SingleRepair} ($\Bsingle$) appears below.

\begin{algorithm}[H]
\caption{Algorithm $\Bsingle$ (\algname{SingleRepair}).}
\label{alg:one-repair}
\begin{algorithmic}[1]
\color{black}
\Require 
An instance $G\in\D$ and the source algorithm $A$, under the assumptions of Theorem~\ref{thm:A}.
\Ensure A feasible set $\Bsingle(G)$ with
$|\Bsingle(G)|\le(\alpha+1+\eta)\OPT_\Pi(G)$.
\State Each vertex $u$ checks whether $G[N_G^\tau[u]]\in\C$.
Vertices for which the test fails form $X$.
\State Run $A$ on the whole graph $G$ for $r$ rounds and set
$S\gets A(G)\setminus X$.
\State Let $R'$ be the vertices where the verifier rejects $S$.
In every connected component $H$ of $G[N_G^{2\rho}[X]]$, choose
the minimum repair $S'_H$ of Definition~\ref{def:outputA}.
Set $S'\gets\bigcup_H S'_H$.
\State Return $\Bsingle(G)\gets S\cup S'$.
\end{algorithmic}
\end{algorithm}

In the first step, each vertex $v$ checks whether the subgraph
$G[N_G^\tau[v]]$ induced by the ball of radius $\tau$ centred at $v$
belongs to $\C$.
Vertices at which the test fails form the error set
$X=X_\tau(G)$.
Thus $X$ contains the centres of balls that fail the test.
For example, for the planar source class, these are vertices $v$ for which
$G[N_G^\tau[v]]$ is nonplanar.

In the second step, we run $A$ on the whole graph $G$.
All vertices, including those in $X$, participate in communication.
Only after computing $A(G)$ do we delete its selected vertices in $X$,
namely $A(G)\cap X$, obtaining $S\coloneqq A(G)\setminus X$.

In the third step, every vertex of $G$ runs the verifier to check
its local constraint for $S$.
We write $R'$ for the vertices $v$ where the verifier finds that
$S$ fails the local constraint of $\Pi$, meaning that $F_G(v,S)$ is false.
For example, for the MDS problem, these are precisely the vertices not dominated by $S$.
Deleting the selected vertices in $X$ may cause local constraints
to fail outside $X$, while constraints at vertices in $X$ may still
be satisfied by $S$. We therefore check the constraints at all vertices
and require the repair to satisfy them at every vertex of $R'$.

The value $\rho$ is the locality radius of the verifier.
For MDS, $\rho=1$, since a selected vertex dominates itself and its neighbours.
Deleting choices in $X$ can therefore violate constraints only
in $\N\rho X$. It is consequently enough to look for repairs in
$\N{2\rho}X$. We use the following notation for the rejection set
and the \textbf{repair zone}.
\begin{equation*}
 R'\coloneqq\{v:\neg F_G(v,S)\},\qquad
 Z'\coloneqq\N{2\rho}X.
\end{equation*}
We keep the notation $R'$ and $Z'$ throughout the analysis of
\algname{SingleRepair}.
In each connected component $H$ of $G[Z']$, we perform a search
and obtain a repair $S'_H$.
By local niceness, any two vertices of the same component are at
distance at most $\delta$ in $G$.
Their distance within the component may be much larger because
shortest paths in $G$ can leave the component.
Communication may use these paths, allowing the information needed
for a common repair to be collected in a bounded number of rounds.

In the fourth step, we combine the repairs into $S'$ and return
$\Bsingle(G)=S\cup S'$.
Since the problem is additive, adding the repairs preserves the
constraints already satisfied by $S$.

The following definition specifies the repairs and the tie-breaking rule.

\begin{definition}\label{def:outputA}
For each connected component $H$ of $G[Z']$, let $S'_H$ be a
minimum-cardinality subset of $\N\rho{R'\cap H}$ feasible on $R'\cap H$.
If several such sets exist, choose the one whose increasing list of
identifiers is first in lexicographic order.
The full repair and the algorithm output are
\begin{equation*}
 S'\coloneqq\bigcup_H S'_H,\qquad \Bsingle(G)\coloneqq S\cup S'.
\end{equation*}
\end{definition}

\subsection{Parameters of \algname{SingleRepair}}\label{sec:theoremA}

\begin{definition}\label{def:parameters}
For $\eta>0$, define
\[
 \begin{gathered}
 q\coloneqq k+\rho+1,\quad
 M\coloneqq\left\lceil\frac{2q(d+1)(2\alpha+\eta)}{\eta}\right\rceil,\quad
 \tau\coloneqq\max\{f(2M)+2M,r+\rho+1\}.
 \end{gathered}
\]
\end{definition}

We use $M$ as the scale of the partition in the cost analysis.
The test for errors uses radius $\tau$, and \algname{SingleRepair} requires
local niceness at the same scale.
We also write $\tau(\eta)$ to show that $\tau$ depends on the chosen loss $\eta$.
We use these definitions of $q$, $M$, and $\tau$ throughout the rest
of the paper unless stated otherwise.

\subsection{Correctness of \algname{SingleRepair}}\label{sec:witness}

In the analysis of \algname{SingleRepair}, we assume that $G\in\D$ is
$\tau$-locally $\delta$-nice relative to $\C$ and $\Pi$.

\subsubsection{Retained output}

The next results restrict the region that requires repair
to the $\rho$-neighbourhood of $X$.

\begin{observation}\label{lem:restrict}
If $Y$ is feasible on $U$, then $Y\cap N_G^\rho[U]$ is feasible on $U$.
\end{observation}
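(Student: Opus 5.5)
The argument works vertex by vertex, using only $\rho$-locality (Definition~\ref{def:problem}). Fix $v\in U$. By assumption $F_G(v,Y)$ holds. By $\rho$-locality, the verifier's answer at $v$ depends only on two things: the extended $\rho$-view at $v$, which is a property of $G$ and $v$ alone, and the intersection of the candidate set with $\N\rho v$. We therefore compare $Y$ with $Y'\coloneqq Y\cap N_G^\rho[U]$ through their traces on $\N\rho v$.

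The key step is the set identity $Y'\cap\N\rho v=Y\cap\N\rho v$. It holds because $v\in U$ gives $\N\rho v\subseteq N_G^\rho[U]$. Indeed, every vertex within distance $\rho$ of $v$ is within distance $\rho$ of some vertex of $U$, namely $v$ itself. Hence
\[
 Y'\cap\N\rho v=Y\cap N_G^\rho[U]\cap\N\rho v=Y\cap\N\rho v .
\]
Both candidates also present the same extended $\rho$-view at $v$, since the graph, identifiers, labels and ports are unchanged. The verifier therefore returns the same answer, so $F_G(v,Y')$ holds.

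Since $v\in U$ was arbitrary, $Y'$ is feasible on $U$ in the sense of Definition~\ref{def:feasible}. No step here is a genuine obstacle. The only point needing care is that the locality in Definition~\ref{def:problem} is stated for the fixed verifier and concerns only $Y\cap\N\rho v$, and not, say, $\N{\rho+1}v$. The radius of the extended view is irrelevant to the argument, because the view depends on $G$ and not on the candidate set. Additivity is not needed: we are removing vertices that the verifier cannot see, not adding any.
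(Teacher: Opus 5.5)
Your proof is correct and matches the paper's own argument. For each $v\in U$, both use $N_G^\rho[v]\subseteq N_G^\rho[U]$ to get $(Y\cap N_G^\rho[U])\cap N_G^\rho[v]=Y\cap N_G^\rho[v]$, and then conclude directly from $\rho$-locality (Definition~\ref{def:problem}).
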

\begin{proof}
For every $v\in U$, we have
$(Y\cap N_G^\rho[U])\cap N_G^\rho[v]=Y\cap N_G^\rho[v]$.
The claim follows directly from Definition~\ref{def:problem}.
\end{proof}

\begin{fact}\label{lem:views}
Let $H$ be an induced subgraph of $G$, and let $W\subseteq V(G)$ satisfy
$N_G^{r+1}[W]\subseteq V(H)$. Then $A$ produces the same output on $W$
in $G$ and in $H$.
\end{fact}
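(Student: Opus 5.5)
The plan is to reduce the statement to the fact that, in the deterministic LOCAL model, the output of an $r$-round algorithm at a vertex $v$ is determined by the extended $r$-view at $v$ (Definition~\ref{def:extended-view}). It then suffices to show that, for every $v\in W$, the extended $r$-view at $v$ in $G$ coincides with the extended $r$-view at $v$ in $H$, identifiers, labels and ports included.

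First compare the underlying rooted graphs. Since $N_G^{r+1}[v]\subseteq N_G^{r+1}[W]\subseteq V(H)$, every shortest path in $G$ of length at most $r+1$ starting at $v$ lies in $H$. Because $H$ is induced, all its edges are present in $H$. Hence $\dist_H(v,u)=\dist_G(v,u)$ whenever $\dist_G(v,u)\le r+1$. Conversely, distances in $H$ are never shorter than in $G$. This gives $N_H^{a}[v]=N_G^{a}[v]$ for every $a\le r+1$, and in particular for $a=r$ and $a=r+1$.

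Next compare the displayed edge sets. An edge of $G$ meeting $N_G^r[v]$ has both endpoints in $N_G^{r+1}[v]\subseteq V(H)$. Since $H$ is induced, such an edge is an edge of $H$ meeting $N_H^r[v]$. Conversely, every edge of $H$ is an edge of $G$. So the displayed edge sets coincide. Identifiers, vertex labels, edge labels and port numbers are inherited unchanged by induced subgraphs (Section~\ref{sec:model}). Hence the two extended $r$-views are identical as labelled rooted graphs.

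The only subtle point is the port numbering. One might worry that deleting edges outside the view could change the ports. By the model convention, ports are kept, so the port numbers of every displayed edge agree. Ports of undisplayed edges are not part of the view, so they do not matter. Since $A$ is deterministic, the local rule produces the same output at $v$ on identical views, whether it runs in $G$ or in $H$. This holds even when $H$ lies outside the source class, because the rule terminates on every finite view. Applying this to each $v\in W$ gives the statement.
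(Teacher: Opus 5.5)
Your proof is correct and takes the same route as the paper. The paper only notes that the inclusion $N_G^{r+1}[W]\subseteq V(H)$ preserves the extended $r$-view at every vertex of $W$, including identifiers, labels and ports, and defers to Claim~C.8 of the full version of Bonamy et al. You supply the routine details it leaves implicit: that $N_H^a[v]=N_G^a[v]$ for $a\le r+1$, and that the displayed edge sets coincide.
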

See the argument of \cite[Claim~C.8]{BDGPSW26full}.
The neighbourhood inclusion preserves the extended $r$-view at every
vertex of $W$, including the inherited identifiers, labels, and ports.

\begin{lemma}\label{lem:verifier}
Let $H$ be an induced subgraph of $G$, and let $U,Y\subseteq V(G)$
and $Y_H\subseteq V(H)$ be such that
\[
 N_G^{\rho+1}[U]\subseteq V(H)
 \quad\text{and}\quad
 Y\cap N_G^\rho[U]=Y_H\cap N_G^\rho[U].
\]
Then $Y$ is feasible on $U$ in $G$ if and only if $Y_H$ is feasible on $U$ in $H$.
\end{lemma}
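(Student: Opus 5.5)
The plan is to reduce the statement to a vertex-by-vertex comparison. By Definition~\ref{def:feasible}, $Y$ is feasible on $U$ in $G$ exactly when $F_G(v,Y)$ holds for every $v\in U$, and likewise for $Y_H$ in $H$. So it suffices to fix $v\in U$ and show that $F_G(v,Y)$ holds if and only if $F_H(v,Y_H)$ holds. By Definition~\ref{def:problem}, each of these answers depends only on the extended $\rho$-view at $v$ and on the selected set intersected with the $\rho$-ball at $v$. The proof then has two steps: show that the extended $\rho$-views at $v$ in $G$ and in $H$ coincide, and show that the selected sets agree on this ball.

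First we compare balls. Every vertex of $N_G^{\rho+1}[v]$ lies in $V(H)$, and $H$ is induced. Hence every shortest $G$-path of length at most $\rho+1$ starting at $v$ stays in $N_G^{\rho+1}[v]\subseteq V(H)$, and so it is a path of $H$. This gives $\dist_H(v,u)=\dist_G(v,u)$ for every $u$ with $\dist_G(v,u)\le\rho+1$. Distances in $H$ are never shorter than in $G$, so we get $N_H^{a}[v]=N_G^{a}[v]$ for every $a\le\rho+1$. In particular the ball $N^{\rho+1}[v]$ and the ball $N^\rho[v]$ are the same in both graphs.

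Next we compare the displayed edges. These are the edges with an endpoint in $N^\rho[v]$. Any such edge of $G$ has both endpoints in $N_G^{\rho+1}[v]\subseteq V(H)$, so it is an edge of $H$ because $H$ is induced. The converse holds trivially. Induced subgraphs keep identifiers, labels, and ports, so the two extended $\rho$-views at $v$ are identical as labelled rooted graphs. For the selected sets, $N_G^\rho[v]\subseteq N_G^\rho[U]$, so the hypothesis gives $Y\cap N_G^\rho[v]=Y_H\cap N_G^\rho[v]$. By the ball equality above, this common set is $Y_H\cap N_H^\rho[v]$. The verifier therefore receives identical inputs at $v$ in both graphs and gives the same answer.

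The only delicate point is the distance comparison. Weak distances in $G$ could use paths that leave $H$, so the argument must ensure that the hypothesis $N_G^{\rho+1}[U]\subseteq V(H)$ keeps the short paths inside $H$. The radius $\rho+1$, rather than $\rho$, is exactly what is needed to keep the outer layer of the extended view and its edges intact.
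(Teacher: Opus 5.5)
Your proof is correct and follows the same route as the paper. You fix $v\in U$, check that the extended $\rho$-view at $v$ coincides in $G$ and $H$ because $N_G^{\rho+1}[v]\subseteq V(H)$ and $H$ is induced with inherited data, and use $N_G^\rho[v]\subseteq N_G^\rho[U]$ to match the selected sets; your explicit check that $N_H^a[v]=N_G^a[v]$ for $a\le\rho+1$ fills in a step the paper leaves implicit, since the view in $H$ is defined with distances in $H$.
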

\begin{proof}
For every $u\in U$, the assumption gives
\[
 N_G^{\rho+1}[u]\subseteq N_G^{\rho+1}[U]\subseteq V(H).
\]
Thus $H$ contains all vertices and edges of the extended $\rho$-view at $u$.
Since $H$ is induced and inherits its input data from $G$, this view is
the same in both graphs. Moreover, $N_G^\rho[u]\subseteq N_G^\rho[U]$,
so the assumed equality of intersections implies
\[
 Y\cap N_G^\rho[u]=Y_H\cap N_G^\rho[u].
\]
Thus, for every $u\in U$, the verifier returns the same answer for
$Y$ in $G$ and for $Y_H$ in $H$. This proves the equivalence.
\end{proof}

Fact~\ref{lem:views} and Lemma~\ref{lem:verifier} transfer feasibility from
the subgraphs induced by balls to vertices of $G$ outside $X$.
These are precisely the vertices $v$ for which $G[N_G^\tau[v]]$ belongs
to the source class $\C$. We then use the locality of the verifier to
show that deleting all vertices of $N_G^t[X]$ from $A(G)$ leaves a set
$A(G)\setminus N_G^t[X]$ that is feasible on
$V(G)\setminus N_G^{t+\rho}[X]$.
\algname{SingleRepair} uses only $t=0$. This case occurs in the proof of
\cite[Theorem~4.2]{BDGPSW26full}, where $A(G)\setminus X$ is shown to be
feasible outside $\N\rho X$. We state the lemma for arbitrary $t$
so that it also applies to our second algorithm, \algname{ShiftedRepair},
which has a better approximation ratio.

\begin{lemma}\label{lem:trusted}

Let $G$ be any graph with input data as in Section~\ref{sec:model}.
 For integers $\tau\ge r+\rho+1$ and $t\ge0$, let
\[
 X\coloneqq X_\tau(G),\qquad S(t)\coloneqq A(G)\setminus\N tX.
\]
Then $A(G)$ is feasible on $V(G)\setminus X$, and $S(t)$ is feasible
on $V(G)\setminus\N{t+\rho}X$.
\end{lemma}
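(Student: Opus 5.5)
The plan is to prove the first claim vertex by vertex and then derive the second from it by locality. Fix $v\in V(G)\setminus X$ and set $H\coloneqq G[N_G^\tau[v]]$. Since $v\notin X=X_\tau(G)$, we have $H\in\C$. The source algorithm $A$ is therefore feasible on $H$; in particular $F_H(v,A(H))$ holds.

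Next I transfer this acceptance from $H$ to $G$. Take $U\coloneqq\{v\}$ and $W\coloneqq N_G^\rho[v]$. Because $\tau\ge r+\rho+1$, we get
\[
N_G^{r+1}[W]\subseteq N_G^{r+\rho+1}[v]\subseteq V(H).
\]
By Fact~\ref{lem:views}, $A$ produces the same output on $W$ in $G$ and in $H$, that is, $A(G)\cap N_G^\rho[v]=A(H)\cap N_G^\rho[v]$. Also $N_G^{\rho+1}[v]\subseteq V(H)$, since $\tau\ge\rho+1$. So Lemma~\ref{lem:verifier} applies with $Y=A(G)$ and $Y_H=A(H)$, and the feasibility of $A(H)$ at $v$ in $H$ gives $F_G(v,A(G))$. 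As $v\notin X$ was arbitrary, $A(G)$ is feasible on $V(G)\setminus X$.

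For the second claim, take $v\notin\N{t+\rho}X$. Then $v\notin X$, so $F_G(v,A(G))$ holds by the first part. Moreover, every vertex of $N_G^\rho[v]$ is at distance greater than $t$ from $X$, so $N_G^\rho[v]\cap\N tX=\varnothing$. This gives $S(t)\cap N_G^\rho[v]=A(G)\cap N_G^\rho[v]$. The verifier at $v$ sees the same extended $\rho$-view and the same intersection with $N_G^\rho[v]$, so by Definition~\ref{def:problem} it also accepts $S(t)$. Note that this step uses only $\rho$-locality, not additivity.

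The main point needing care is the first step, where the radius condition must cover two things at once. First, it must preserve the extended $r$-view of every vertex within distance $\rho$ of $v$, which needs radius $r+1+\rho$. Second, it must preserve the extended $\rho$-view at $v$ itself, which needs radius $\rho+1$. The hypothesis $\tau\ge r+\rho+1$ supplies both.
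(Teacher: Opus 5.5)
Your proposal is correct and matches the paper's proof essentially step for step. Like the paper, you get the first claim by applying Fact~\ref{lem:views} with $W=N_G^\rho[v]$ and Lemma~\ref{lem:verifier} with $U=\{v\}$ inside $G[N_G^\tau[v]]$, and the second from $S(t)\cap N_G^\rho[v]=A(G)\cap N_G^\rho[v]$ together with $\rho$-locality alone, without additivity.
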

\begin{proof}[\color{black}Proof]
The proof has two steps. In Step~1, we establish $F_G(v,A(G))$ for
every $v\notin X$. In Step~2, we establish $F_G(v,S(t))$ for every
$v\notin\N{t+\rho}X$.

\paragraph{Step 1.}
Fix $v\notin X$ and write $G_v\coloneqq G[N_G^\tau[v]]$.
By Definitions~\ref{def:error-set} and~\ref{def:uniform}, we have $G_v\in\C$
and $A(G_v)$ is feasible in $G_v$. In particular, $F_{G_v}(v,A(G_v))$ holds.

To transfer this property to $G$, we verify the assumptions of Lemma~\ref{lem:verifier}
with $H=G_v$, $U=\{v\}$, $Y=A(G)$, and $Y_H=A(G_v)$.
The set $A(G_v)$ is a subset of $V(G_v)$, as required.
The first condition, $N_G^{\rho+1}[v]\subseteq V(G_v)$, follows from
$\tau\ge\rho+1$. The second condition is
$A(G)\cap N_G^\rho[v]=A(G_v)\cap N_G^\rho[v]$.
For this equality, we apply Fact~\ref{lem:views} with $H=G_v$ and
$W=N_G^\rho[v]$. Its neighbourhood condition holds because
\[
 N_G^{r+1}[N_G^\rho[v]]=N_G^{r+\rho+1}[v]
 \subseteq N_G^\tau[v]=V(G_v),
\]
where the inclusion follows from $\tau\ge r+\rho+1$.
Thus $A$ produces the same output on $N_G^\rho[v]$ in
$G$ and $G_v$.
Both conditions of Lemma~\ref{lem:verifier} hold, so
$F_{G_v}(v,A(G_v))$ implies $F_G(v,A(G))$.

\paragraph{Step 2.}
Fix $v\notin\N{t+\rho}X$, so $\dist_G(v,X)>t+\rho$.
Since $v\notin X$, Step~1 gives $F_G(v,A(G))$.
The set $S(t)$ is obtained by deleting the selected vertices in $\N tX$.
None of these vertices lies in $N_G^\rho[v]$.
Hence all selected vertices of $A(G)$ in $N_G^\rho[v]$ remain selected,
and $S(t)\cap N_G^\rho[v]=A(G)\cap N_G^\rho[v]$.
Both sets are considered in the same graph $G$. By
Definition~\ref{def:problem}, the verifier has the same answer for both
sets at $v$. Therefore $F_G(v,S(t))$ holds.
\end{proof}

\subsubsection{Repair}

Recall that $R'=\{v\in V(G):\neg F_G(v,S)\}$ and $Z'=N_G^{2\rho}[X]$.
The set $R'$ contains the vertices at which $S$ fails the local constraint,
and $Z'$ contains the vertices at distance at most $2\rho$ from $X$.
For example, for MDS, $R'$ is the set of vertices not dominated by $S$.

We prove the repair lemma for a general deletion radius $t$ and a general
zone $Z$. Its specialization to $t=0$ gives the properties needed here.
The same statement will later provide the repairs for \algname{ShiftedRepair}.

\Needspace{23\baselineskip}
\begin{lemma}\label{lem:zoneA}\label{lem:repair-exists}
Let $\tau\ge r+\rho+1$, $t\ge0$, $T\ge\tau$, and $\delta\ge0$ be integers.
Suppose that $G\in\D$ is a feasible instance of $\Pi$ and is
$T$-locally $\delta$-nice with respect to $\C$ and $\Pi$.
Let
\[
 X\coloneqq X_\tau(G),\qquad S(t)\coloneqq A(G)\setminus N_G^t[X],
 \qquad R(t)\coloneqq\{v\in V(G):\neg F_G(v,S(t))\}.
\]
Suppose that $Z\subseteq V(G)$ satisfies
\[
 N_G^{t+2\rho}[X]\subseteq Z\subseteq N_G^{2\rho}[X_T(G)].
\]
Then the following properties hold.
\begin{enumerate}[label=\textup{(\alph*)},leftmargin=*]
\item $R(t)\subseteq N_G^{t+\rho}[X]$ and
$N_G^\rho[R(t)]\subseteq N_G^{t+2\rho}[X]\subseteq Z$.
\item Every component of $G[Z]$ has weak diameter at most $\delta$ in $G$.
\item For every component $H$ of $G[Z]$, we have
$N_G^\rho[R(t)\cap H]\subseteq H$, and some subset of
$N_G^\rho[R(t)\cap H]$ is feasible on $R(t)\cap H$.
\end{enumerate}
\end{lemma}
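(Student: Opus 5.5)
The plan is to prove the three parts in order, each resting on Lemma~\ref{lem:trusted}, the locality of the verifier, and the monotonicity of error sets in the radius.

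\emph{Part (a).} Since $\tau\ge r+\rho+1$, Lemma~\ref{lem:trusted} says that $S(t)$ is feasible on $V(G)\setminus N_G^{t+\rho}[X]$. Hence every vertex rejecting $S(t)$ lies in $N_G^{t+\rho}[X]$, which gives $R(t)\subseteq N_G^{t+\rho}[X]$. Taking $\rho$-neighbourhoods and using $N^\rho[N^{a}[X]]=N^{a+\rho}[X]$ gives $N_G^\rho[R(t)]\subseteq N_G^{t+2\rho}[X]$. The last inclusion in (a) is the hypothesis $N_G^{t+2\rho}[X]\subseteq Z$.

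\emph{Part (b).} I first note that $X_\tau(G)\subseteq X_T(G)$ for $T\ge\tau$. If $G[N^\tau[v]]\notin\C$, then $G[N^T[v]]\notin\C$, because $G[N^\tau[v]]$ is an induced subgraph of $G[N^T[v]]$ with inherited data, and $\C$ is hereditary. This monotonicity is not needed for (b) itself, but it is consistent with the sandwich on $Z$.

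For (b), take a component $K$ of $G[Z]$. Since $Z\subseteq N^{2\rho}[X_T(G)]$, the graph $G[Z]$ is an induced subgraph of $G[N^{2\rho}[X_T(G)]]$, so $K$ is connected inside $G[N^{2\rho}[X_T(G)]]$. It is therefore contained in a single component $K'$ of that graph. By $T$-local $\delta$-niceness, $K'$ has weak diameter at most $\delta$ in $G$, and so does its subset $K$, since weak diameter is measured in $G$.

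\emph{Part (c), inclusion.} This is the main step. Let $u\in R(t)\cap H$ and $w\in N_G^\rho[u]$, and fix a shortest $u$--$w$ path $P$ of length at most $\rho$. Every vertex of $P$ lies in $N^\rho[u]\subseteq N^\rho[R(t)]\subseteq N^{t+2\rho}[X]\subseteq Z$ by (a). So $P$ is a path in $G[Z]$ starting in $H$, and it stays in the component $H$; in particular $w\in H$. The point is to use the whole path, not only its endpoint, because $Z$ might not be connected along arbitrary routes.

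\emph{Part (c), existence of a feasible subset.} Since $G$ is a feasible instance, there is a set $Y^*$ feasible on $V(G)$, hence on $R(t)\cap H$. By Observation~\ref{lem:restrict}, the set $Y^*\cap N_G^\rho[R(t)\cap H]$ is still feasible on $R(t)\cap H$, and it is a subset of $N_G^\rho[R(t)\cap H]$. By the inclusion just proved, this subset also lies inside $H$. This completes (c).
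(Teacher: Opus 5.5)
Your proposal is correct and follows essentially the same route as the paper. Part (a) uses Lemma~\ref{lem:trusted} plus the triangle inequality, part (b) places each component of $G[Z]$ inside one component of $G[N_G^{2\rho}[X_T(G)]]$, and part (c) walks a shortest path through $N_G^\rho[R(t)]\subseteq Z$ and then restricts a global feasible set using Observation~\ref{lem:restrict}.
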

\begin{proof}
For part~(a), Lemma~\ref{lem:trusted} shows that
$S(t)=A(G)\setminus N_G^t[X]$ is feasible on
$V(G)\setminus N_G^{t+\rho}[X]$.
Since $R(t)$ contains only vertices at which $S(t)$ fails,
$R(t)\subseteq N_G^{t+\rho}[X]$.
Every vertex $u\in N_G^\rho[R(t)]$ has a vertex $v\in R(t)$ with
$\dist_G(u,v)\le\rho$. By the inclusion just proved,
$\dist_G(v,X)\le t+\rho$. The triangle inequality gives
\[
 \dist_G(u,X)\le\dist_G(u,v)+\dist_G(v,X)
 \le\rho+(t+\rho)=t+2\rho.
\]
The condition $N_G^{t+2\rho}[X]\subseteq Z$ in the statement of the lemma
means that $Z$ contains every vertex at distance at most $t+2\rho$ from $X$.
The bound just proved therefore implies $u\in Z$.
Since $u$ was arbitrary,
$N_G^\rho[R(t)]\subseteq N_G^{t+2\rho}[X]\subseteq Z$.

For part~(b), consider a component $H$ of $G[Z]$.
The assumption $Z\subseteq N_G^{2\rho}[X_T(G)]$ implies that every path
in $G[Z]$ is also a path in $G[N_G^{2\rho}[X_T(G)]]$.
Thus $H$ lies in one component $K$ of the latter graph.
Since $G$ is $T$-locally $\delta$-nice, Definition~\ref{def:nice}
bounds the weak diameter of $K$ in $G$ by $\delta$.
For any $u,v\in H\subseteq K$, we therefore have $\dist_G(u,v)\le\delta$.
The same bound holds for the weak diameter of $H$.

For part~(c), consider the same component $H$ and a vertex
$x\in N_G^\rho[R(t)\cap H]$.
Then there is a vertex
$v\in R(t)\cap H$ such that $\dist_G(v,x)\le\rho$.
A shortest path from $v$ to $x$ in $G$ has length at most $\rho$.
Every vertex on this path is reached from $v$ by an initial segment
of length at most $\rho$, and hence belongs to $N_G^\rho[v]$.
Since $v\in R(t)$, part~(a) gives
\[
 N_G^\rho[v]\subseteq N_G^\rho[R(t)]\subseteq Z.
\]
The path therefore connects $v\in H$ to $x$ inside $G[Z]$.
Both vertices lie in the same component $H$, so $x\in H$.
Since $x$ was arbitrary, $N_G^\rho[R(t)\cap H]\subseteq H$.

By feasibility of the instance, some set $Y\subseteq V(G)$ is feasible
on $V(G)$. In particular, it is feasible on $R(t)\cap H$.
Observation~\ref{lem:restrict} shows that
$Y\cap N_G^\rho[R(t)\cap H]$ is feasible on $R(t)\cap H$.
The inclusion just proved shows that the candidate repair $Y\cap N_G^\rho[R(t)\cap H]$ is contained in $H$.
Thus it is the required candidate repair. If
$R(t)\cap H=\varnothing$, the empty set is a candidate repair.
\end{proof}

For \algname{SingleRepair}, we apply Lemma~\ref{lem:zoneA} with $t=0$,
$T=\tau$, and $Z=Z'$. Part~(c) guarantees that the repairs $S'_H$
in Definition~\ref{def:outputA} exist.

\subsubsection{Minimum repairs and correctness of the output}

The next fact shows that each repair lies in its component
and that their union is feasible on $R'$.

\begin{fact}\label{fact:minimum-repairA}
Let $\mathcal H$ be the set of components of $G[Z']$.
For $H\in\mathcal H$, let $S'_H$ be a minimum-cardinality subset of
$N_G^\rho[R'\cap H]$ feasible on $R'\cap H$, and let
$S'=\bigcup_{H\in\mathcal H}S'_H$.
Then the following properties hold:
\begin{enumerate}[label=\textup{(\alph*)},leftmargin=*,topsep=3pt,itemsep=2pt]
\item For every $H\in\mathcal H$, we have $S'_H\subseteq H$.
Moreover, $|S'|=\sum_{H\in\mathcal H}|S'_H|$.
\item $S'$ is feasible on $R'$, so $F_G(v,S')$ holds for every $v\in R'$.
\end{enumerate}
\end{fact}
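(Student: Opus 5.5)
The plan is to specialize Lemma~\ref{lem:zoneA} to $t=0$, $T=\tau$, $Z=Z'$, so that $S(0)=S$ and $R(0)=R'$. Its hypotheses hold under the standing assumptions: $\tau\ge r+\rho+1$, and $G$ is $\tau$-locally $\delta$-nice. The inclusions $N_G^{2\rho}[X]\subseteq Z'\subseteq N_G^{2\rho}[X_\tau(G)]$ are equalities. Part~(c) of that lemma then gives two things for every component $H$: the inclusion $N_G^\rho[R'\cap H]\subseteq H$, and the existence of a candidate repair. Hence $S'_H$ is well defined.

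For part~(a), $S'_H\subseteq N_G^\rho[R'\cap H]\subseteq H$ is immediate from this inclusion. The components of $G[Z']$ are pairwise vertex-disjoint, so the sets $S'_H$ are pairwise disjoint. Cardinality is therefore additive over the union, which gives $|S'|=\sum_H|S'_H|$.

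For part~(b), fix $v\in R'$.
\begin{itemize}
\item By Lemma~\ref{lem:zoneA}(a), $v\in N_G^\rho[X]\subseteq Z'$, so $v$ lies in a unique component $H$ of $G[Z']$, and $v\in R'\cap H$.
\item By the choice of $S'_H$, the condition $F_G(v,S'_H)$ holds.
\item Since $S'_H\subseteq S'$, additivity (Definition~\ref{def:monotone}) gives $F_G(v,S')$.
\end{itemize}
Additivity thus avoids any need to argue that repairs from other components stay outside $N_G^\rho[v]$. That locality argument would also work, because $N_G^\rho[v]\subseteq H$ and the other repairs lie in other components, but it is not needed.

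The only potentially delicate step is checking that the hypotheses of Lemma~\ref{lem:zoneA} match this setting, and in particular that every $v\in R'$ really lies in $Z'$. That membership is exactly what part~(a) of the lemma provides.
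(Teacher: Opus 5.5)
Your proposal is correct and follows essentially the same route as the paper. You specialize Lemma~\ref{lem:zoneA} to $t=0$, $T=\tau$, $Z=Z'$, use part~(c) for $S'_H\subseteq N_G^\rho[R'\cap H]\subseteq H$ and disjointness of components for the cardinality identity, and use part~(a) to place each $v\in R'$ in a component before applying additivity; your explicit check of the lemma's hypotheses is a minor addition that the paper leaves to the remark preceding the fact.
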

\begin{proof}[\color{black}Proof]
For part~(a), Definition~\ref{def:outputA} and Lemma~\ref{lem:zoneA}(c),
applied as above, give $S'_H\subseteq\N\rho{R'\cap H}\subseteq H$
for every $H\in\mathcal H$. The components are pairwise disjoint, so
the repairs are also pairwise disjoint. By the definition of $S'$,
\[
 |S'|=\left|\bigcup_{H\in\mathcal H}S'_H\right|
 =\sum_{H\in\mathcal H}|S'_H|.
\]

For part~(b), fix $v\in R'$. By Lemma~\ref{lem:zoneA}(a) with $t=0$,
every vertex of $R'$ is at distance at most $\rho$ from $X$.
The repair zone $Z'=N_G^{2\rho}[X]$ consists of all vertices at distance
at most $2\rho$ from $X$. Since $\rho\le2\rho$, we have
\[
 v\in R'\subseteq N_G^\rho[X]\subseteq N_G^{2\rho}[X]=Z'.
\]
Thus $v$ belongs to some component $H$ of $G[Z']$ and belongs to $R'\cap H$.
The repair $S'_H$ from Definition~\ref{def:outputA} is feasible on $R'\cap H$,
so $F_G(v,S'_H)$ holds. Since $S'_H\subseteq S'$, additivity implies $F_G(v,S')$.
This proves part~(b).

\end{proof}

\begin{lemma}\label{lem:feasibleA}
The set $\Bsingle(G)$ returned by $\Bsingle$ on $G$ is feasible.
\end{lemma}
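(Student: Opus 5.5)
We check feasibility of $\Bsingle(G)=S\cup S'$ vertex by vertex, splitting $V(G)$ according to whether the verifier accepts $S$ and using additivity to pass from a subset to the whole output. By the definition of $R'$, every vertex of $V(G)$ lies either in $V(G)\setminus R'$, where $F_G(v,S)$ holds, or in $R'$.

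For the first case, fix $v\notin R'$. Then $F_G(v,S)$ holds by the definition $R'=\{v:\neg F_G(v,S)\}$. Since $S\subseteq S\cup S'=\Bsingle(G)$, Definition~\ref{def:monotone} gives $F_G(v,\Bsingle(G))$.

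For the second case, fix $v\in R'$. By Fact~\ref{fact:minimum-repairA}(b), applied to the repairs $S'_H$ of Definition~\ref{def:outputA}, the set $S'$ is feasible on $R'$, so $F_G(v,S')$ holds. The lexicographic tie-breaking only selects one among the minimum-cardinality candidates, so the Fact applies. The existence of these repairs is guaranteed by Lemma~\ref{lem:zoneA}(c) with $t=0$, $T=\tau$, $Z=Z'$. Its hypotheses hold because $\tau\ge r+\rho+1$ by Definition~\ref{def:parameters}, $G$ is assumed $\tau$-locally $\delta$-nice and a feasible instance, and $N^{2\rho}_G[X]=Z'\subseteq N^{2\rho}_G[X_\tau(G)]$ trivially. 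Since $S'\subseteq\Bsingle(G)$, additivity again yields $F_G(v,\Bsingle(G))$.

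Combining both cases, every vertex accepts $\Bsingle(G)$, so it is feasible. There is no real obstacle here. The only point needing care is confirming that the parameter and niceness hypotheses of Lemma~\ref{lem:zoneA} are met so that each $S'_H$ is well defined; the substantive work is already contained in Lemma~\ref{lem:trusted} and Fact~\ref{fact:minimum-repairA}.
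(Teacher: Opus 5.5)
Your proof is correct and matches the paper's argument. You split on whether $v\in R'$, using the definition of $R'$ plus additivity in one case and Fact~\ref{fact:minimum-repairA}(b) plus additivity in the other. Your check of the hypotheses of Lemma~\ref{lem:zoneA} is the specialization $t=0$, $T=\tau$, $Z=Z'$ that the paper uses to guarantee that the repairs $S'_H$ exist.
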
%
\begin{proof}
Let $v\in V(G)$. If $v\notin R'$, the definition of $R'$ gives $F_G(v,S)$,
and hence $F_G(v,\Bsingle(G))$.
If $v\in R'$, Fact~\ref{fact:minimum-repairA}(b) gives $F_G(v,S')$,
and hence $F_G(v,\Bsingle(G))$.
Thus $\Bsingle(G)$ is feasible.%
\end{proof}

\section{Approximation ratio of \algname{SingleRepair}}\label{sec:charge}\label{sec:thmA}

In this section, we prove that $\Bsingle$ returns an
$(\alpha+1+\eta)$-approximation. We first choose a colouring and
build a partition of $V(G)$ from the components of each colour.
We use this partition to bound the size of the set $S$ that we keep,
then add the cost of the repair $S'$ to get the required guarantee
for every region $W\subseteq V(G)$.
The partition is used only in the proof, so the algorithm does not compute it.

Fix $G\in\D$.
Throughout this section, distances and weak diameters are measured
in $G$, and balls refer to these distances, unless another graph is specified.
We use $G^{2M}$ to define
the monochromatic components.

\subsection{Constructing the partition}\label{sec:partition}

We first define $D$ in terms of the $d$-dimensional control function $f$
of $\D$ from Definition~\ref{def:control}.
By \cite[Proposition~2.1]{BDGPSW26full}, such a function exists exactly
when the asymptotic dimension of $\D$ is at most $d$.

\begin{definition}\label{def:partition-parameters}
Define
\[
 D\coloneqq f(2M)+2M,\qquad
 \theta\coloneqq\frac{2q(d+1)}{M},\qquad
 \alpha_\theta\coloneqq\frac{\alpha}{1-\theta}.
\]
\end{definition}

\begin{fact}\label{fact:partition-parameters}
We have
\[
 D\le\tau,\qquad 0\le\theta\le\frac{\eta}{2\alpha+\eta}<1.
\]
\end{fact}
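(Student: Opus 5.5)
The proof is a direct unwinding of Definitions~\ref{def:parameters} and~\ref{def:partition-parameters}. No earlier lemma is needed beyond the standing assumptions $\alpha\ge1$, $\eta>0$, $q=k+\rho+1\ge1$ and $d\ge0$. I will check the two claims separately: first the bound $D\le\tau$, then the chain of inequalities for $\theta$.

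For $D\le\tau$, note that $D=f(2M)+2M$ is exactly the first argument of the maximum defining $\tau=\max\{f(2M)+2M,\,r+\rho+1\}$. Hence $D\le\tau$ holds by the definition of the maximum. The only point to confirm is that $f(2M)$ is defined, which requires $2M\ge1$. Since $2q(d+1)(2\alpha+\eta)/\eta>0$, the ceiling gives $M\ge1$, so $2M\ge2$.

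For $\theta$, nonnegativity is immediate: $q\ge1$, $d+1\ge1$ and $M\ge1$, so $\theta=2q(d+1)/M>0$. For the upper bound, I use $M\ge 2q(d+1)(2\alpha+\eta)/\eta$, which holds because the ceiling of a number is at least that number. Since all quantities involved are positive, inverting gives
\[
 \theta=\frac{2q(d+1)}{M}\le\frac{2q(d+1)\,\eta}{2q(d+1)(2\alpha+\eta)}=\frac{\eta}{2\alpha+\eta}.
\]
Finally, $\eta/(2\alpha+\eta)<1$ is equivalent to $2\alpha>0$, which follows from $\alpha\ge1$.

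There is no real obstacle in this argument. The only care needed is to record why $M\ge1$ holds, so that the division by $M$ is legitimate and $f$ is evaluated at a valid scale $s=2M\ge1$. Under the paper's standing assumptions this is immediate from $\eta>0$, $\alpha\ge1$, $q\ge1$ and $d\ge0$.
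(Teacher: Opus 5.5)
Your proof is correct and matches the paper's argument. You get $D\le\tau$ directly from the maximum defining $\tau$, and the bound on $\theta$ from $M\ge 2q(d+1)(2\alpha+\eta)/\eta$ together with positivity and $\alpha\ge1$. Your check that $M\ge1$, so that $2M$ is a valid scale and dividing by $M$ is legitimate, is a harmless addition that the paper leaves implicit.
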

\begin{proof}
The definition of $\tau$ gives $\tau\ge f(2M)+2M=D$.
Also, $M\ge2q(d+1)(2\alpha+\eta)/\eta$ by the definition of $M$.
Since $q$, $d+1$, and $M$ are positive, this gives
\[
 0\le\theta=\frac{2q(d+1)}{M}\le\frac{\eta}{2\alpha+\eta}<1,
\]
where the last inequality follows from $\alpha\ge1$ and $\eta>0$.
\end{proof}

As in \cite{BDGPSW26full}, we start the partition construction with
the following colouring.

\begin{definition}\label{def:fixed-colouring}
Fix a colouring $\chi\colon V(G)\to\{0,\ldots,d\}$ given by
Definition~\ref{def:control} with $s=2M$.
The colouring remains fixed throughout the construction.
\end{definition}

\begin{fact}\label{fact:fixed-colouring-distance}
For every monochromatic component $Q$ in $G^{2M}$ and all $x,y\in Q$,
\[
 \dist_G(x,y)\le f(2M).
\]
\end{fact}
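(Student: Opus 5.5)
This fact is a direct unpacking of Definition~\ref{def:control} applied to the colouring fixed in Definition~\ref{def:fixed-colouring}. The plan is to recall that $\chi$ was chosen as a colouring supplied by Definition~\ref{def:control} at scale $s=2M$. That definition applies because $M\ge1$: indeed $q\ge1$, $d+1\ge1$, $\eta>0$ and $\alpha\ge1$, so the ceiling defining $M$ is a positive integer, and $2M\ge1$ is a valid scale.

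By Definition~\ref{def:control}, every monochromatic connected component of $G^{2M}$ under $\chi$ has weak diameter at most $f(2M)$ in $G$. The weak diameter of $Q$ is the maximum of $\dist_G(x,y)$ over $x,y\in Q$, measured in the whole graph $G$. Hence for all $x,y\in Q$ we get $\dist_G(x,y)\le f(2M)$, which is exactly the claim.

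There is no real obstacle here. The only point that needs a word is that the bound holds for distances in $G$ and not in $G^{2M}$ or in $G[Q]$. This is precisely what the term weak diameter means in Section~\ref{sec:model}. I would also note explicitly that $x=y$ is trivial. Any finite distance bound forces $x$ and $y$ to lie in the same component of $G$; this is consistent with the convention that distances between components are $\infty$, since $Q$ is connected in $G^{2M}$.
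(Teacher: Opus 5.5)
Your proposal is correct and follows essentially the paper's own proof, which simply invokes the choice of $\chi$ from Definition~\ref{def:control} at scale $s=2M$ and reads off the weak-diameter bound as a bound on $\dist_G(x,y)$. Your extra checks, that $M\ge1$ makes $s=2M$ a valid scale and that weak diameter is measured in $G$, are accurate but only make explicit what the paper leaves implicit.
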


\begin{proof}
By the choice of $\chi$ in Definition~\ref{def:fixed-colouring},
the component $Q$ has weak diameter at most $f(2M)$ in $G$.
Thus any two vertices $x,y\in Q$ satisfy the stated bound.
\end{proof}

We then add a step that is not used in the analysis of
\cite{BDGPSW26full}. We modify the partition formed by these components to limit
the number of balls of radius $q$ cut by its parts.
This will allow us to bound the size of $S$.

\begin{definition}\label{def:radius-vector}
For $i\in\{0,\ldots,d\}$ and $v\in V(G)$, define
\[
 C_i\coloneqq\{u\in V(G)\mid\chi(u)=i\},\qquad
 d_i(v)\coloneqq\min_{u\in C_i}\dist_G(v,u),\quad
 \text{where }\min\varnothing=\infty.
\]
A \textbf{radius vector} is any vector
$\mathbf a=(a_0,\ldots,a_d)\in\{1,\ldots,M\}^{d+1}$.
\end{definition}

The sets $C_i$ are the colour classes of the fixed colouring $\chi$.
The value $d_i(v)$ is the distance in $G$ from $v$ to a nearest vertex
of colour $i$, or $\infty$ if no such vertex is reachable from $v$.
Thus $d_i(v)\le a_i$ means that $N_G^{a_i}[v]$ contains a vertex of
colour $i$, or equivalently that $C_i\cap N_G^{a_i}[v]\ne\varnothing$.
We obtain $G^{2M}[C_i]$ by first taking the power of the whole graph $G$
and then taking the subgraph induced by $C_i$.
Two distinct vertices of $C_i$ are adjacent in this graph when their
distance in $G$ is at most $2M$. A path in $G$ realizing this distance
may contain vertices of other colours.

The next fact shows that, when $d_i(v)\le a_i\le M$,
the set $C_i\cap N_G^{a_i}[v]$ lies in one component of $G^{2M}[C_i]$.

\begin{fact}\label{fact:component-unique}
Let $v\in V(G)$, let $i$ be a colour, and let $a_i\le M$ be an integer
such that $d_i(v)\le a_i$. Then there exists a component $Q$ of
$G^{2M}[C_i]$ such that
\[
 C_i\cap N_G^{a_i}[v]\subseteq Q.
\]
\end{fact}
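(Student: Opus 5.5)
The plan is to show that any two vertices of $C_i\cap N_G^{a_i}[v]$ are adjacent, or equal, in $G^{2M}[C_i]$. Then this set is a clique in that graph and lies in a single component $Q$. Nonemptiness comes first: since $d_i(v)\le a_i$, the set $C_i\cap N_G^{a_i}[v]$ contains at least one vertex, as noted after Definition~\ref{def:radius-vector}. A component $Q$ containing it is therefore well defined.

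The key step is the distance bound. Take $x,y\in C_i\cap N_G^{a_i}[v]$. Both satisfy $\dist_G(x,v)\le a_i$ and $\dist_G(y,v)\le a_i$, and the triangle inequality gives
\[
 \dist_G(x,y)\le\dist_G(x,v)+\dist_G(v,y)\le 2a_i\le 2M.
\]
In particular, $x$ and $y$ lie in the same component of $G$, so the distance is finite. By the definition of the power, distinct $x$ and $y$ are then adjacent in $G^{2M}$. Both have colour $i$, so the edge $xy$ survives in the induced subgraph $G^{2M}[C_i]$.

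Hence every two vertices of the nonempty set lie in a common component of $G^{2M}[C_i]$. Fixing one vertex $x_0$ of the set and letting $Q$ be its component, every other vertex of the set is adjacent to $x_0$ and therefore lies in $Q$. This gives the inclusion.

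There is no real obstacle. The only points to watch are that paths realizing the distance in $G$ may pass through vertices of other colours, which is harmless because adjacency in $G^{2M}$ is defined via distance in $G$, and that the hypothesis $a_i\le M$ is exactly what yields $2a_i\le 2M$.
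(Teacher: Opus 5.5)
Your proposal is correct and follows essentially the same route as the paper: the triangle inequality through $v$ gives $\dist_G(x,y)\le 2a_i\le 2M$, so any two vertices of $C_i\cap N_G^{a_i}[v]$ are adjacent in $G^{2M}[C_i]$ and therefore lie in one component. Your explicit use of $d_i(v)\le a_i$ to make the set nonempty, and hence to fix a well-defined $Q$, is a small addition; the paper makes this remark after Definition~\ref{def:radius-partition} rather than in the proof itself.
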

\begin{proof}
For two distinct vertices $x,y\in C_i\cap N_G^{a_i}[v]$, the triangle
inequality gives
\[
 \dist_G(x,y)\le\dist_G(x,v)+\dist_G(v,y)\le2a_i\le2M.
\]
Since $x,y\in C_i$, they are adjacent in $G^{2M}[C_i]$ by the definition
of a graph power. All vertices of $C_i\cap N_G^{a_i}[v]$ therefore
lie in one component of this graph.
\end{proof}

\begin{definition}\label{def:radius-partition}
Let $\mathbf a=(a_0,\ldots,a_d)$ be a radius vector.
Assign to each $v\in V(G)$ the smallest colour
\[
 i\coloneqq\min\{j\in\{0,\ldots,d\}\mid d_j(v)\le a_j\}
\]
and the component $Q$ of $G^{2M}[C_i]$ containing
$C_i\cap N_G^{a_i}[v]$. The \textbf{partition} $\PP_{\mathbf a}$ of $V(G)$
consists of the sets of vertices assigned to the same pair $(i,Q)$.
\end{definition}

A colour satisfying $d_i(v)\le a_i$ always exists, since
$d_{\chi(v)}(v)=0\le a_{\chi(v)}$.
By Fact~\ref{fact:component-unique}, there exists a component $Q$
containing $C_i\cap N_G^{a_i}[v]$. This component is unique because
the set is nonempty and distinct components have disjoint vertex sets.
The chosen colour $i$ may differ from $\chi(v)$.
A nearest vertex of colour $i$ to $v$ in $G$ does not have to be unique,
but the pair $(i,Q)$ is unique. Adjacent vertices in $G$ assigned the
same colour may belong to different parts if they are assigned to
different components of $G^{2M}[C_i]$.
Figure~\ref{fig:partition-pairs} illustrates the parts defined by $(i,Q)$
on a path.
\makeatletter
\begin{figure}[H]
\centering
\color{black}
\let\default@color\current@color

\tikzset{zeromark/.style={draw=black,fill=black},
  onemark/.style={draw=black,fill=white}}

\begin{tikzpicture}[x=1.25cm,y=1cm,color=black,
  every node/.style={font=\small,text=black},
  zero/.style={circle,zeromark,inner sep=2.5pt},
  one/.style={circle,onemark,inner sep=2.5pt}]
 \filldraw[fill=gray!7,draw=gray!65,rounded corners=5pt]
   (-.45,-.65) rectangle (2.55,.85);
 \filldraw[fill=gray!7,draw=gray!65,rounded corners=5pt]
   (2.65,-.65) rectangle (6.55,.85);
 \filldraw[fill=gray!7,draw=gray!65,rounded corners=5pt]
   (6.65,-.65) rectangle (9.65,.85);
 \node at (1.05,.58) {$P_0$};
 \node at (4.6,.58) {$P_1$};
 \node at (8.15,.58) {$P_2$};
 \draw (0,0)--(.95,0);
 \node at (1.3,0) {$\cdots$};
 \draw (1.65,0)--(3.1,0);
 \draw (3.1,0)--(4.2,0);
 \node at (4.6,0) {$\cdots$};
 \draw (5,0)--(7.1,0);
 \draw (7.1,0)--(7.3,0);
 \node at (7.65,0) {$\cdots$};
 \draw (8,0)--(9.2,0);
 \node[zero,label=below:{$v_0$}] at (0,0) {};
 \node[one,label=below:{$v_1$}] at (.65,0) {};
 \node[one,label=below:{$v_M$}] at (2.1,0) {};
 \node[one,label=below:{$v_{M+1}$}] at (3.1,0) {};
 \node[one,label=below:{$v_{3M+1}$}] at (6.1,0) {};
 \node[one,label=below:{$v_{3M+2}$}] at (7.1,0) {};
 \node[one,label=below:{$v_{4M+1}$}] at (8.25,0) {};
 \node[zero,label=below:{$v_{4M+2}$}] at (9.2,0) {};
 \node at (1.05,-1) {$(0,Q_0)$};
 \node at (4.6,-1) {$(1,Q_1)$};
 \node at (8.15,-1) {$(0,Q'_0)$};
 \node[zero] at (1.2,-1.65) {};
 \node[anchor=west] at (1.35,-1.65) {colour $0$};
 \node[one] at (3.9,-1.65) {};
 \node[anchor=west] at (4.05,-1.65) {colour $1$};
 \node at (7.3,-1.65) {$a_0=a_1=M$};
\end{tikzpicture}
\caption{The partition of Definition~\ref{def:radius-partition}.
Vertex colours show the original colouring $\chi$.
The labels $(i,Q)$ identify the partition parts.}
\label{fig:partition-pairs}
\end{figure}
\makeatother

\Needspace{6\baselineskip}
\begin{fact}\label{fact:part-diameter}
Every part of $\PP_{\mathbf a}$ has weak diameter at most $D$ in $G$.
\end{fact}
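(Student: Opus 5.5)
The plan is a direct triangle-inequality argument through the component $Q$ that defines a part. Fix a part $P$ of $\PP_{\mathbf a}$. By Definition~\ref{def:radius-partition}, $P$ is the set of vertices assigned to a fixed pair $(i,Q)$, where $Q$ is a component of $G^{2M}[C_i]$. Take two vertices $x,y\in P$. The goal is $\dist_G(x,y)\le D=f(2M)+2M$.

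First, I will use the assignment rule to attach each of $x$ and $y$ to a nearby vertex of $Q$.

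1. Since $x$ is assigned colour $i$, we have $d_i(x)\le a_i$. Hence $C_i\cap N_G^{a_i}[x]$ is nonempty.
2. Pick $x'$ in this set. By the definition of the assignment, $C_i\cap N_G^{a_i}[x]\subseteq Q$, so $x'\in Q$.
3. Moreover $\dist_G(x,x')\le a_i\le M$, since radius vectors take values in $\{1,\ldots,M\}$.
4. In the same way, choose $y'\in Q$ with $\dist_G(y,y')\le M$.

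Next, I will bound the middle segment. The component $Q$ is a monochromatic component of $G^{2M}$ under $\chi$. It is a component of $G^{2M}[C_i]$ and therefore consists of vertices of colour $i$ connected in $G^{2M}$. Fact~\ref{fact:fixed-colouring-distance} therefore gives $\dist_G(x',y')\le f(2M)$.

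The triangle inequality then yields
\[
\dist_G(x,y)\le \dist_G(x,x')+\dist_G(x',y')+\dist_G(y',y)\le M+f(2M)+M=f(2M)+2M=D.
\]
Since $x,y\in P$ were arbitrary, $P$ has weak diameter at most $D$ in $G$.

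There is no real obstacle here. The only point to check is that the components of $G^{2M}[C_i]$ coincide with the monochromatic components of colour $i$ in $G^{2M}$ in the sense of Definition~\ref{def:control}, which holds because both are defined by adjacency in $G^{2M}$ among vertices of colour $i$. One also needs nonemptiness of $C_i\cap N_G^{a_i}[x]$, and this is immediate from $d_i(x)\le a_i$.
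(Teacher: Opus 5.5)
Your proposal is correct and follows the paper's proof essentially verbatim. You attach $x$ and $y$ to vertices of $Q$ within distance $a_i\le M$ via the assignment rule, bound the middle distance by $f(2M)$ using Fact~\ref{fact:fixed-colouring-distance}, and conclude with the triangle inequality; your remark identifying components of $G^{2M}[C_i]$ with the monochromatic components of Definition~\ref{def:control} makes explicit a check the paper leaves implicit.
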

\begin{proof}
Fix the part $P$ corresponding to $(i,Q)$, where $Q$ is a component
of $G^{2M}[C_i]$. By Definition~\ref{def:radius-partition}, every $v\in P$
has some $w\in Q$ with $\dist_G(v,w)\le a_i\le M$, where
$a_i\le M$ because $\mathbf a$ is a radius vector.
In particular, if $v\in Q$, we may choose $w=v$.
For $x,y\in P$, choose $a,b\in Q$ with
$\dist_G(x,a)\le M$ and $\dist_G(y,b)\le M$.
By Fact~\ref{fact:fixed-colouring-distance}, we have
$\dist_G(a,b)\le f(2M)$, so the triangle inequality in $G$ yields
\[
 \dist_G(x,y)
 \le\dist_G(x,a)+\dist_G(a,b)+\dist_G(b,y)
 \le M+f(2M)+M=D.\qedhere
\]
\end{proof}

The partition in Definition~\ref{def:radius-partition} depends on the
radius vector $\mathbf a$, as indicated by the subscript in
$\PP_{\mathbf a}$. The next fact holds for any fixed radius vector.

\begin{fact}\label{fact:part-in-class}
Let $P\in\PP_{\mathbf a}$ and $u\in P\setminus X$. Then
$P\subseteq N_G^\tau[u]$ and $G[N_G^\tau[u]]\in\C$.
\end{fact}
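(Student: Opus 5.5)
The plan is to derive the first inclusion from the diameter bound on parts, and the second from the definition of the error set.

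First, fix $P\in\PP_{\mathbf a}$ and $u\in P\setminus X$. By Fact~\ref{fact:part-diameter}, $P$ has weak diameter at most $D$ in $G$. Hence every $x\in P$ satisfies $\dist_G(u,x)\le D$, since $u$ itself lies in $P$. Fact~\ref{fact:partition-parameters} gives $D\le\tau$. Therefore $\dist_G(u,x)\le\tau$, which means $x\in N_G^\tau[u]$. As $x\in P$ was arbitrary, this proves $P\subseteq N_G^\tau[u]$.

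Second, recall that in this section $X=X_\tau(G)$ is the error set computed in the first step of \algname{SingleRepair}. By Definition~\ref{def:error-set}, a vertex $u\notin X$ is exactly one for which the subgraph $G[N_G^\tau[u]]$, with input data inherited from $G$, belongs to $\C$. Since $u\in P\setminus X$, we obtain $G[N_G^\tau[u]]\in\C$.

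I do not expect a genuine obstacle. The only point that needs care is that the weak diameter in Fact~\ref{fact:part-diameter} is measured in $G$. This matches the definition of the ball $N_G^\tau[u]$, which also uses distances in the whole graph $G$. So no statement about distances inside $P$ is needed, and the inclusion holds even if $G[P]$ is disconnected.
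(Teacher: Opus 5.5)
Your proposal is correct and follows the paper's proof exactly: you combine the weak-diameter bound of Fact~\ref{fact:part-diameter} with $D\le\tau$ to get $P\subseteq N_G^\tau[u]$, and then read off $G[N_G^\tau[u]]\in\C$ from Definition~\ref{def:error-set}, since $X=X_\tau(G)$. Your remark that weak diameter is measured in $G$, the same metric as the ball $N_G^\tau[u]$, is the right point to check.
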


\begin{proof}
By Fact~\ref{fact:part-diameter}, every vertex of $P$ is at distance
at most $D$ from $u$. Since $D\le\tau$ by
Fact~\ref{fact:partition-parameters}, we have
$P\subseteq N_G^{D}[u]\subseteq N_G^\tau[u]$.
The membership $G[N_G^\tau[u]]\in\C$ follows directly from
Definition~\ref{def:error-set}.
\end{proof}

The next fact considers the case where, for each colour $i$, the
distance $d_i(v)$ is small enough that $d_i(v)\le a_i-q$, large enough
that $d_i(v)\ge a_i+q+1$, or infinite.
These conditions ensure that moving the centre of the ball from $v$
by distance at most $q$ in $G$ does not change whether the ball of
radius $a_i$ contains a vertex of colour $i$.

\begin{fact}\label{fact:uncut-ball}
Let $v\in V(G)$ and suppose that $\mathbf a\in\{1,\ldots,M\}^{d+1}$
satisfies, for every colour $i$,
\[
 d_i(v)=\infty
 \quad\text{or}\quad d_i(v)\le a_i-q
 \quad\text{or}\quad d_i(v)\ge a_i+q+1.
\]
Then the whole ball $N_G^q[v]$ lies in one part of $\PP_{\mathbf a}$.
\end{fact}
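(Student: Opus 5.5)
Fix $w\in N_G^q[v]$; we show that $w$ receives the same pair $(i,Q)$ as $v$. Throughout we use that $d_j$ is $1$-Lipschitz in $G$: $\dist_G(v,w)\le q$ gives $|d_j(w)-d_j(v)|\le q$ for every colour $j$. Here $d_j(v)=\infty$ holds exactly when $d_j(w)=\infty$, since $v$ and $w$ lie in the same connected component of $G$.

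The first step is to show that the set $J(x)\coloneqq\{j: d_j(x)\le a_j\}$ is the same for $x=v$ and $x=w$. We check each colour $j$ against the three alternatives of the hypothesis.
\begin{itemize}[leftmargin=*]
\item If $d_j(v)=\infty$, then $d_j(w)=\infty$, so $j$ lies in neither set.
\item If $d_j(v)\le a_j-q$, then $d_j(w)\le a_j$, so $j$ lies in both sets.
\item If $d_j(v)\ge a_j+q+1$, then $d_j(w)\ge a_j+1$, so $j$ lies in neither set.
\end{itemize}
Hence $J(v)=J(w)$, and the minima coincide. By Definition~\ref{def:radius-partition}, both vertices are therefore assigned the same colour $i$.

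The second step is to show that the assigned components agree. Since $i\in J(v)$, the third and first alternatives are excluded for colour $i$, so $d_i(v)\le a_i-q$. Choose $u\in C_i$ with $\dist_G(v,u)=d_i(v)\le a_i-q$. Then $u\in N_G^{a_i}[v]$, and moreover $\dist_G(w,u)\le q+(a_i-q)=a_i$, so $u\in N_G^{a_i}[w]$ as well. Thus $u\in C_i\cap N_G^{a_i}[v]\cap N_G^{a_i}[w]$.

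By Fact~\ref{fact:component-unique}, applicable because $a_i\le M$, each of these two intersections lies in a unique component of $G^{2M}[C_i]$. Both components contain $u$, so they are equal. Therefore $v$ and $w$ are assigned the same pair $(i,Q)$ and lie in the same part of $\PP_{\mathbf a}$. Since $w\in N_G^q[v]$ was arbitrary, the whole ball $N_G^q[v]$ lies in this part.

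The only subtle point is the second step. Agreeing on the colour is not sufficient, because the component is also part of the label. This is handled by choosing a witness $u$ of colour $i$ that is close to both centres, and it is exactly why the hypothesis has slack $q$ on the side $d_i(v)\le a_i-q$.
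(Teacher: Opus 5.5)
Your proof is correct and follows the paper's argument essentially step for step. You first use the $1$-Lipschitz property of each $d_j$, handling the infinite case through connectivity, to show that $v$ and $w$ accept the same colours and so receive the same smallest colour $i$; you then observe that $i\in J(v)$ forces $d_i(v)\le a_i-q$, take a nearest vertex of $C_i$ to $v$ as a common witness within distance $a_i$ of both centres, and conclude with Fact~\ref{fact:component-unique}, exactly as the paper does.
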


\begin{proof}
Let $z\in N_G^q[v]$. We first check that $z$ accepts the same colours
as $v$. That is, $d_i(z)\le a_i$ holds exactly for those colours $i$
for which $d_i(v)\le a_i$.
For finite distances, the triangle inequality in $G$ gives
$|d_i(v)-d_i(z)|\le\dist_G(v,z)\le q$.
If $d_i(v)\le a_i-q$, then $d_i(z)\le d_i(v)+q\le a_i$,
so both $v$ and $z$ accept $i$.
If $d_i(v)\ge a_i+q+1$, then $d_i(z)\ge d_i(v)-q\ge a_i+1$,
so neither accepts $i$.
If $d_i(v)=\infty$, the component of $G$ containing $v$ has no vertex
of colour $i$. Since $z\in N_G^q[v]$ lies in the same component,
$d_i(z)=\infty$ as well.
Thus $v$ and $z$ have the same smallest colour $i$ satisfying the
condition of Definition~\ref{def:radius-partition}.
For this colour, $d_i(v)\le a_i-q$.

It remains to check that the vertices $v$ and $z$ are assigned to the same component
of $G^{2M}[C_i]$.
For the colour $i$ assigned to both vertices, we have already shown
that $d_i(v)\le a_i-q<\infty$.
Since $d_i(v)$ is the minimum distance from $v$ to $C_i$, we can choose
a vertex $w\in C_i$ with $\dist_G(v,w)=d_i(v)\le a_i-q$.
Also, $z\in N_G^q[v]$ implies $\dist_G(z,v)\le q$.
The triangle inequality in $G$ now gives
\[
 \dist_G(z,w)\le\dist_G(z,v)+\dist_G(v,w)
 \le q+(a_i-q)=a_i.
\]
Thus $w$ belongs to both $C_i\cap N_G^{a_i}[v]$ and
$C_i\cap N_G^{a_i}[z]$.
By Fact~\ref{fact:component-unique}, the components assigned to $v$
and $z$ both contain $w$, so they are equal.
Both vertices are therefore assigned to the same pair $(i,Q)$.
Since $z\in N_G^q[v]$ was arbitrary, the whole ball lies in the part
of $\PP_{\mathbf a}$ containing $v$.
\end{proof}

\begin{definition}\label{def:cut-ball}
We say that $\PP_{\mathbf a}$ \textbf{cuts} $N_G^q[v]$ if there exist
vertices $u,w\in N_G^q[v]$ and distinct parts $P_1,P_2\in\PP_{\mathbf a}$
such that $u\in P_1$ and $w\in P_2$.
\end{definition}

By Fact~\ref{fact:uncut-ball}, a cut is possible only when, for some colour $i$,
$a_i\in[d_i(v)-q,d_i(v)+q-1]$.
This interval contains only $2q$ integers, which gives the following count.

\begin{fact}\label{fact:cut-count}
For every $v\in V(G)$, at most $2q(d+1)M^d$ vectors
$\mathbf a\in\{1,\ldots,M\}^{d+1}$ have the property that
$\PP_{\mathbf a}$ cuts $N_G^q[v]$.
\end{fact}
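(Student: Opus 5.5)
The plan is to take the contrapositive of Fact~\ref{fact:uncut-ball} and then apply a union bound over the colours. Fix $v\in V(G)$. Suppose $\PP_{\mathbf a}$ cuts $N_G^q[v]$. Then the hypothesis of Fact~\ref{fact:uncut-ball} fails, since otherwise the whole ball would lie in one part, contradicting Definition~\ref{def:cut-ball}. So there is a colour $i$ for which all three alternatives fail. That means $d_i(v)<\infty$ and $a_i-q<d_i(v)<a_i+q+1$. Since $d_i(v)$ and $a_i$ are integers, this is equivalent to
\[
 d_i(v)-q\le a_i-1\quad\text{and}\quad a_i\le d_i(v)+q,
\]
that is, $a_i\in I_i\coloneqq\{d_i(v)-q+1,\ldots,d_i(v)+q\}$. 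This interval contains exactly $2q$ integers. (The remark before the statement writes the interval as $[d_i(v)-q,d_i(v)+q-1]$. Either way the length is $2q$, so I will carry out the integer rearrangement carefully and state whichever form is correct.)

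Next, bound the set $B_i$ of radius vectors with $a_i\in I_i$. The coordinate $a_i$ ranges over at most $|I_i\cap\{1,\ldots,M\}|\le 2q$ values. Each of the other $d$ coordinates is free in $\{1,\ldots,M\}$. Hence $|B_i|\le 2q\,M^d$. If $d_i(v)=\infty$, no vector falls into $B_i$ at all, which is consistent with this bound.

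Finally, every cutting vector lies in $\bigcup_{i=0}^d B_i$. The union bound therefore gives at most $(d+1)\cdot 2qM^d$ cutting vectors, which is the claimed count. There is no real obstacle here. The only point requiring care is the translation of the strict inequalities into an integer interval of exactly $2q$ values. Intersecting this interval with $\{1,\ldots,M\}$ can only decrease the count.
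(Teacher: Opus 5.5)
Your strategy is the same as the paper's: negate the hypothesis of Fact~\ref{fact:uncut-ball} for some colour $i$, confine $a_i$ to a window of $2q$ integers, let the other $d$ coordinates range freely over $\{1,\ldots,M\}$, and take a union bound over the $d+1$ colours. However, the integer rearrangement, the one step you flagged as delicate, is wrong, and your doubt about the paper's interval is misplaced.

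From $a_i-q<d_i(v)$ you get $a_i\le d_i(v)+q-1$, not $a_i\le d_i(v)+q$. From $d_i(v)<a_i+q+1$ you get $d_i(v)\le a_i+q$, that is, $a_i\ge d_i(v)-q$, not $a_i\ge d_i(v)-q+1$. So the values of $a_i$ for which all three alternatives fail are exactly $\{d_i(v)-q,\ldots,d_i(v)+q-1\}$. This is the interval written before the statement.

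Your $I_i$ is shifted up by one. Take $a_i=d_i(v)-q$, when it lies in $\{1,\ldots,M\}$. Then $a_i-q=d_i(v)-2q<d_i(v)<d_i(v)+1=a_i+q+1$, so this value violates all three alternatives, yet it is not in your $I_i$. Conversely, $a_i=d_i(v)+q\in I_i$ satisfies $d_i(v)\le a_i-q$, so it is not a failing value. Hence, as written, a cutting vector need not belong to $\bigcup_i B_i$, and the union-bound step is not justified.

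The fix is to take $I_i\coloneqq\{d_i(v)-q,\ldots,d_i(v)+q-1\}$. It still has $2q$ elements, so $|B_i|\le 2qM^d$. The rest of your argument then goes through unchanged and coincides with the paper's proof.
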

\begin{proof}
If $\PP_{\mathbf a}$ cuts $N_G^q[v]$, the condition of
Fact~\ref{fact:uncut-ball} fails for some colour $i$.
Hence $a_i\in[d_i(v)-q,d_i(v)+q-1]$.
For a fixed colour, there are at most $2q$ such values of $a_i$.
The remaining $d$ coordinates are arbitrary, so this accounts for
at most $2qM^d$ vectors. Summing over the $d+1$ colours gives the bound.
\end{proof}

We now choose the radii so that, for at least $(1-\theta)|J|$ vertices
$v\in J$, the whole ball $N_G^q[v]$ lies in one part of the partition.
For $J\subseteq S$, we will use this property to apply the source algorithm's
guarantee locally and bound $|J|$.

\begin{lemma}\label{lem:padding}
For every $J\subseteq V(G)$, there are a partition $\PP$ of $V(G)$
into sets of weak diameter at most $D$ in $G$ and a set $I\subseteq J$
such that
\[
 |I|\ge(1-\theta)|J|
 \qquad\text{and}\qquad
 \forall v\in I\ \exists P\in\PP:\quad N_G^q[v]\subseteq P.
\]
Moreover, for every $P\in\PP$ and $u\in P\setminus X$,
\[
 P\subseteq N_G^\tau[u],\qquad G[N_G^\tau[u]]\in\C.
\]
\end{lemma}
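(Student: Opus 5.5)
We prove the statement by averaging over all radius vectors. The partition $\PP$ will be $\PP_{\mathbf a}$ for a well-chosen $\mathbf a\in\{1,\ldots,M\}^{d+1}$. For each such vector, call a vertex $v\in J$ \emph{bad} for $\mathbf a$ if $\PP_{\mathbf a}$ cuts $N_G^q[v]$ in the sense of Definition~\ref{def:cut-ball}, and \emph{good} otherwise. We will count the pairs $(v,\mathbf a)$ with $v\in J$ and $v$ bad for $\mathbf a$ in two ways.

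By Fact~\ref{fact:cut-count}, each fixed $v$ is bad for at most $2q(d+1)M^d$ vectors. The number of bad pairs is therefore at most $2q(d+1)M^d|J|$. There are exactly $M^{d+1}$ radius vectors. Hence some vector $\mathbf a$ has at most
\[
 \frac{2q(d+1)M^d|J|}{M^{d+1}}=\frac{2q(d+1)}{M}\,|J|=\theta|J|
\]
bad vertices, using Definition~\ref{def:partition-parameters}. Fix this $\mathbf a$, set $\PP\coloneqq\PP_{\mathbf a}$, and let $I$ be the set of good vertices of $J$. Then $|I|\ge(1-\theta)|J|$.

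For $v\in I$, the ball $N_G^q[v]$ is not cut. Since $\PP$ is a partition of $V(G)$, every vertex of the ball lies in some part. Because all these vertices lie in a single part $P$, we get $N_G^q[v]\subseteq P$. This is just the negation of Definition~\ref{def:cut-ball}. Alternatively, one can observe that for good vectors the hypothesis of Fact~\ref{fact:uncut-ball} holds, but the direct negation suffices.

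The remaining claims do not depend on the choice of $\mathbf a$. The weak diameter bound $D$ for every part is Fact~\ref{fact:part-diameter}. The inclusion $P\subseteq N_G^\tau[u]$ and the membership $G[N_G^\tau[u]]\in\C$ for every $u\in P\setminus X$ are Fact~\ref{fact:part-in-class}.

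The main point needing care is the averaging step, including the degenerate case $J=\varnothing$, where any $\mathbf a$ works. We must also check that the bound in Fact~\ref{fact:cut-count} is uniform in $v$, which it is.
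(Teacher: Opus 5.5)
Your proposal is correct and follows the paper's proof essentially verbatim: you average the cut count from Fact~\ref{fact:cut-count} over the $M^{d+1}$ radius vectors to find one that cuts at most $\theta|J|$ balls centred in $J$, and take $I$ to be the uncut centres. The diameter and source-class claims then come from Facts~\ref{fact:part-diameter} and~\ref{fact:part-in-class}, exactly as in the paper.
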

\begin{proof}
By Fact~\ref{fact:cut-count}, there are at most $2q(d+1)M^d|J|$
pairs $(v,\mathbf a)$ with $v\in J$ for which $\PP_{\mathbf a}$ cuts
$N_G^q[v]$.
There are $M^{d+1}$ radius vectors, so the average number of cut balls
centred in $J$ per vector is at most
$2q(d+1)M^d|J|/M^{d+1}=\theta|J|$.%

Consequently, some vector $\mathbf a$ cuts at most $\theta|J|$ balls
centred in $J$. Set $\PP\coloneqq\PP_{\mathbf a}$, and let $I$ consist
of the vertices $v\in J$ whose balls $N_G^q[v]$ are not cut.
Then $|I|\ge|J|-\theta|J|=(1-\theta)|J|$, and for every $v\in I$
there is a part $P\in\PP$ such that $N_G^q[v]\subseteq P$.
By Fact~\ref{fact:part-diameter}, each $P\in\PP$ has weak diameter
at most $D$ in $G$. Fact~\ref{fact:part-in-class} gives
$P\subseteq N_G^\tau[u]$ and $G[N_G^\tau[u]]\in\C$
for $u\in P\setminus X$.
\end{proof}

\subsection{Bounding the size of the retained set}\label{sec:retained-cost}

Our aim is to bound the size of any $J\subseteq S=A(G)\setminus X$ by
$\alpha_\theta|Y\cap N_G^q[J]|$, where $Y\subseteq V(G)$ is feasible
on $N_G^k[J]$ in $G$.
Recall that the $k$-uniform guarantee means that, for every $H\in\C$
and $W\subseteq V(H)$, $A$ chooses at most $\alpha$ times as many vertices
in $W$ as are needed to satisfy the constraints on $N_H^k[W]$.
Thus $k$ specifies how far beyond $W$ the constraints must be satisfied.
We use $q=k+\rho+1$, where $\rho$ is the verifier's radius.

\begin{fact}\label{fact:alpha-theta}
The coefficient $\alpha_\theta=\alpha/(1-\theta)$ satisfies
$1\le\alpha_\theta\le\alpha+\eta/2$.
\end{fact}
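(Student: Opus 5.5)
The plan is a direct computation from Fact~\ref{fact:partition-parameters}, which gives $0\le\theta\le\eta/(2\alpha+\eta)<1$.

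For the lower bound, $0\le\theta<1$ gives $0<1-\theta\le1$. Hence $\alpha_\theta=\alpha/(1-\theta)\ge\alpha\ge1$, using the standing assumption $\alpha\ge1$.

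For the upper bound, the function $\theta\mapsto\alpha/(1-\theta)$ is nondecreasing on $[0,1)$. Therefore it suffices to evaluate it at the largest admissible value $\theta_0\coloneqq\eta/(2\alpha+\eta)$. We compute
\[
 1-\theta_0=\frac{2\alpha}{2\alpha+\eta},
 \qquad
 \frac{\alpha}{1-\theta_0}=\frac{\alpha(2\alpha+\eta)}{2\alpha}=\alpha+\frac{\eta}{2}.
\]
Combining this with monotonicity gives $\alpha_\theta\le\alpha+\eta/2$.

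No real obstacle is expected. The only point needing care is that $1-\theta>0$, so that the division is legitimate and the inequalities keep their direction; this is exactly the strict bound $\theta<1$ from Fact~\ref{fact:partition-parameters}.
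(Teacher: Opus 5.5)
Your proposal is correct and follows the paper's proof: both take $0\le\theta\le\eta/(2\alpha+\eta)<1$ from Fact~\ref{fact:partition-parameters}. The lower bound is then $\alpha_\theta\ge\alpha\ge1$, and the upper bound comes from $1-\theta\ge2\alpha/(2\alpha+\eta)$, which is exactly your evaluation of the nondecreasing map $\theta\mapsto\alpha/(1-\theta)$ at $\theta_0$.
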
%

\begin{proof}
Fact~\ref{fact:partition-parameters} gives $0\le\theta<1$ and
$1-\theta\ge2\alpha/(2\alpha+\eta)$.
Thus $\alpha_\theta\ge\alpha\ge1$, and
\[
 \alpha_\theta
 =\frac{\alpha}{1-\theta}
 \le\frac{\alpha}{\frac{2\alpha}{2\alpha+\eta}}
 =\frac{\alpha(2\alpha+\eta)}{2\alpha}
 =\frac{2\alpha+\eta}{2}
 =\alpha+\frac{\eta}{2}.
\]
\end{proof}%

\begin{lemma}\label{lem:charge}
If $J\subseteq S$ and $Y\subseteq V(G)$ is feasible on $N_G^k[J]$
in $G$, then
\[
 |J|\le\alpha_\theta\,|Y\cap N_G^q[J]|.
\]
\end{lemma}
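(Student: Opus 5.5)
Apply Lemma~\ref{lem:padding} to $J$ to obtain a partition $\PP$ and a set $I\subseteq J$ with $|I|\ge(1-\theta)|J|$, where every ball $N_G^q[v]$ with $v\in I$ lies in a single part. It then suffices to show
\[
 |I|\le\alpha\,|Y\cap N_G^q[J]|,
\]
because then $|J|\le|I|/(1-\theta)\le\alpha_\theta|Y\cap N_G^q[J]|$.

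To prove this bound, group $I$ by parts. For $P\in\PP$ let $I_P\coloneqq I\cap P$. We will show
\[
 |I_P|\le\alpha\,|Y\cap N_G^q[I_P]|\qquad\text{for every }P\in\PP.
\]
The sets $N_G^q[I_P]$ are pairwise disjoint, since each lies inside its part $P$ by the choice of $I$. They are also contained in $N_G^q[J]$. Summing over $P$ therefore gives the bound above.

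Fix $P$ with $I_P\ne\varnothing$. Choose $u\in I_P$. Since $J\subseteq S=A(G)\setminus X$, we have $u\notin X$. Lemma~\ref{lem:padding} then gives $P\subseteq N_G^\tau[u]$ and $G_u\coloneqq G[N_G^\tau[u]]\in\C$. We then compare with the subgraph $H\coloneqq G[P]$, which lies in $\C$ by heredity. Passing to $P$ keeps all relevant balls inside a region of the graph where we control the input.

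\begin{enumerate}[label=(\roman*),leftmargin=*]
\item \emph{The output of $A$ on $I_P$ is the same in $G$ and in $H$.} For $v\in I_P$ we have $N_G^{r+1}[v]\subseteq N_G^q[v]\subseteq P$, using $q=k+\rho+1\ge r+1$. Fact~\ref{lem:views} then gives $A(H)\cap I_P=A(G)\cap I_P=I_P$.
\item \emph{Apply uniformity in $H$.} Take $W=I_P$. Since $H\in\C$, Definition~\ref{def:uniform} gives $|I_P|\le\alpha\,\OPT_\Pi(H,N_H^k[I_P])$.
\item \emph{Transfer $Y$ into $H$.} Set $U\coloneqq N_G^k[I_P]$ and $Y_H\coloneqq Y\cap N_G^\rho[U]\subseteq N_G^{k+\rho}[I_P]\subseteq P$. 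We have $N_G^{\rho+1}[U]\subseteq N_G^q[I_P]\subseteq P=V(H)$. By Lemma~\ref{lem:verifier}, $Y_H$ is feasible on $U$ in $H$. Distances in $H$ are at least distances in $G$, so $N_H^k[I_P]\subseteq U$, and $Y_H$ is feasible on $N_H^k[I_P]$ in $H$. Hence $\OPT_\Pi(H,N_H^k[I_P])\le|Y_H|\le|Y\cap N_G^q[I_P]|$.
\end{enumerate}

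Combining (ii) and (iii) gives the per-part bound.

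The main obstacle is the bookkeeping of radii in (i) and (iii). Every ball we use must stay inside $P$, and the only tool for this is the padding property $N_G^q[v]\subseteq P$; the choice $q=k+\rho+1$ is exactly what makes both verifications fit. A second subtlety is that $H=G[P]$ may have larger distances than $G$. We only need the inclusion $N_H^k[I_P]\subseteq N_G^k[I_P]$, which holds in this direction, so the argument is unaffected. The membership $G_u\in\C$ is used only to obtain $H\in\C$ via heredity.
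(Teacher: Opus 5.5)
Your proof is correct and follows the paper's argument: you use Lemma~\ref{lem:padding} for padding, obtain the per-part bound $|I_P|\le\alpha|Y\cap N_G^q[I_P]|$ from $k$-uniformity after transferring $Y$ with Lemma~\ref{lem:verifier}, and sum over the disjoint sets $N_G^q[I_P]\subseteq P$. The only deviation is your host graph $G[P]$, which needs heredity to lie in $\C$; the paper works directly in the tested ball $G_P=G[N_G^\tau[u_P]]$, which lies in $\C$ because $u_P\notin X$, so that step of the paper does not use heredity.
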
%
\begin{proof}[\color{black}Proof]
Choose a partition $\PP$ and a set $I\subseteq J$ by applying
Lemma~\ref{lem:padding} to $J$.
In particular, $(1-\theta)|J|\le|I|$.
For every $P\in\PP$, write $I_P\coloneqq I\cap P$ and
$Y_P\coloneqq Y\cap N_G^q[I_P]$.
If $I_P\ne\varnothing$, choose $u_P\in I_P$ and let
$G_P\coloneqq G[N_G^\tau[u_P]]$.
The following three claims establish $|I_P|\le\alpha|Y_P|$ for every
$P\in\PP$.%

\begin{proofclaim}\label{claim:charge-subgraph}
For every $P\in\PP$, if $I_P\ne\varnothing$, then
\[
 G_P\in\C\qquad\text{and}\qquad
 N_G^q[I_P]\subseteq P\subseteq V(G_P).
\]
\end{proofclaim}%
\begin{proof}[\color{black}Proof of Claim~\ref{claim:charge-subgraph}]
Choose any vertex $v\in I_P$. Since $I_P=I\cap P$, we have both
$v\in I$ and $v\in P$. We chose $\PP$ and $I$ using
Lemma~\ref{lem:padding}, so the ball $N_G^q[v]$ lies in one part of $\PP$.
This ball contains $v$, and the parts are disjoint, so that part is $P$.
This holds for every $v\in I_P$, hence $N_G^q[I_P]\subseteq P$.%

Now consider the vertex $u_P\in I_P$ chosen at the start of the proof of
Lemma~\ref{lem:charge}. It belongs to $P$, and moreover
$I_P\subseteq J\subseteq S=A(G)\setminus X$ shows that $u_P\notin X$.
Thus $u_P\in P\setminus X$, so we can apply the last part of
Lemma~\ref{lem:padding}. Using the definition of $G_P$, we obtain
\[
 P\subseteq N_G^\tau[u_P]=V(G_P)
 \qquad\text{and}\qquad
 G_P=G[N_G^\tau[u_P]]\in\C.
\]
\end{proof}%

\begin{proofclaim}\label{claim:charge-local-transfer}
For every $P\in\PP$ with $I_P\ne\varnothing$,
\begin{enumerate}[label=(\alph*),leftmargin=*]
\item $I_P\subseteq A(G_P)$.
\item $Y_P$ is feasible on $N_{G_P}^k[I_P]$ in $G_P$.
\end{enumerate}
\end{proofclaim}%
\begin{proof}[\color{black}Proof of Claim~\ref{claim:charge-local-transfer}]
By the assumption $k\ge r+1$ in Section~\ref{sec:guarantees} and
$\rho\ge0$, the definition of $q$ gives $r+1\le k\le k+\rho+1=q$.
Hence
\[
 N_G^{r+1}[I_P]
 \overset{\text{Def.~\ref{def:parameters}}}{\subseteq}
 N_G^q[I_P]
 \overset{\text{Claim~\ref{claim:charge-subgraph}}}{\subseteq}V(G_P).
\]
Fact~\ref{lem:views} therefore gives
$A(G_P)\cap I_P=A(G)\cap I_P$.
Since $I_P\subseteq J\subseteq S\subseteq A(G)$,
we obtain $I_P\subseteq A(G_P)$.%

For part~(b), fix any $z\in N_{G_P}^k[I_P]$.
Every path in $G_P$ is also a path in $G$, so
\[
 z\in N_{G_P}^k[I_P]\subseteq N_G^k[I_P]\subseteq N_G^k[J].
\]
The assumption on $Y$ gives $F_G(z,Y)$.
For every $x\in N_G^{\rho+1}[z]$, the triangle inequality in $G$ implies
\[
 \dist_G(x,I_P)\le\dist_G(x,z)+\dist_G(z,I_P)
 \le\rho+1+k=q.
\]%
Together with Claim~\ref{claim:charge-subgraph}, this gives
\[
 N_G^{\rho+1}[z]\subseteq N_G^q[I_P]\subseteq V(G_P).
\]
Thus $G_P$ contains the extended $\rho$-view at $z$.
Recall that $Y_P=Y\cap N_G^q[I_P]$, so the inclusion
$N_G^\rho[z]\subseteq N_G^q[I_P]$ implies
\[
 Y_P\cap N_G^\rho[z]=Y\cap N_G^\rho[z].
\]%
This equality means that $Y$ and $Y_P$ select the same vertices
in $N_G^\rho[z]$.
Since $G_P$ contains the extended $\rho$-view at $z$ and
$Y_P\subseteq N_G^q[I_P]\subseteq V(G_P)$,
Lemma~\ref{lem:verifier} implies that the verifier gives the same answer
at $z$ for $Y$ in $G$ and for $Y_P$ in $G_P$. Thus
\[
 F_G(z,Y)\quad\Longleftrightarrow\quad F_{G_P}(z,Y_P).
\]
We have already proved $F_G(z,Y)$, so $F_{G_P}(z,Y_P)$ holds.
Since $z$ was arbitrary in $N_{G_P}^k[I_P]$,
Definition~\ref{def:feasible} now implies that $Y_P$ is feasible on
$N_{G_P}^k[I_P]$ in $G_P$.
\end{proof}%

\begin{proofclaim}\label{claim:charge-part}
For every $P\in\PP$, we have $|I_P|\le\alpha|Y_P|$.
\end{proofclaim}%

\begin{proof}[\color{black}Proof of Claim~\ref{claim:charge-part}]
Fix $P\in\PP$. If $I_P=\varnothing$, the inequality holds immediately.
Otherwise, Claim~\ref{claim:charge-subgraph} gives $G_P\in\C$ and
$N_G^q[I_P]\subseteq V(G_P)$.
Apply the $k$-uniform $\alpha$-approximation guarantee of $A$ from
Definition~\ref{def:uniform} with $W=I_P$. This gives
\[
 |A(G_P)\cap I_P|\le\alpha\OPT_\Pi(G_P,N_{G_P}^k[I_P]).
\]
By Claim~\ref{claim:charge-local-transfer}(a),
\[
 |I_P|\le\alpha\OPT_\Pi(G_P,N_{G_P}^k[I_P]).
\]
Part~(b) of that claim shows that $Y_P$ is feasible on $N_{G_P}^k[I_P]$
in $G_P$. By Definition~\ref{def:feasible}, this optimum is the minimum
size of a subset of $V(G_P)$ that is feasible on $N_{G_P}^k[I_P]$ in $G_P$.
Since $Y_P$ is one of these sets, we obtain
$\OPT_\Pi(G_P,N_{G_P}^k[I_P])\le|Y_P|$.
Combining these bounds proves $|I_P|\le\alpha|Y_P|$.
\end{proof}%

We now complete the proof of Lemma~\ref{lem:charge} by combining the
bounds from Claim~\ref{claim:charge-part} over all parts of $\PP$.
The parts of $\PP$ are disjoint and $Y_P\subseteq P$, so the sets $Y_P$
are pairwise disjoint. Moreover, $I_P\subseteq J$, so every vertex at
distance at most $q$ from $I_P$ is also at distance at most $q$ from $J$.
Thus $N_G^q[I_P]\subseteq N_G^q[J]$, and the definition of $Y_P$ gives
\[
 Y_P=Y\cap N_G^q[I_P]\subseteq Y\cap N_G^q[J].
\]
Therefore
\[
 \sum_{P\in\PP}|Y_P|
 =\left|\bigcup_{P\in\PP}Y_P\right|
 \le|Y\cap N_G^q[J]|.
\]%
Every vertex of $I$ belongs to exactly one set $I_P=I\cap P$, so
$|I|=\sum_{P\in\PP}|I_P|$.
Using Claim~\ref{claim:charge-part} and the preceding bound, we obtain
\[
 |I|=\sum_{P\in\PP}|I_P|
 \le\alpha\sum_{P\in\PP}|Y_P|
 \le\alpha|Y\cap N_G^q[J]|.
\]
Lemma~\ref{lem:padding} gives $(1-\theta)|J|\le|I|$, and
Fact~\ref{fact:partition-parameters} gives $1-\theta>0$.
Dividing by $1-\theta$ and using the definition of $\alpha_\theta$ in
Definition~\ref{def:partition-parameters}, we obtain
\[
 |J|\le\frac{\alpha}{1-\theta}|Y\cap N_G^q[J]|
 =\alpha_\theta|Y\cap N_G^q[J]|.
\]
This proves Lemma~\ref{lem:charge}.
\end{proof}%

\subsection{Total cost}\label{sec:repair-budget}%

The set $S=A(G)\setminus X$ is obtained from the output of the source
algorithm $A$ by removing all vertices in $X$.
Lemma~\ref{lem:charge} bounds the size of any subset of $S$.
We combine this bound with the cost of repairs. Recall that $Z'=N_G^{2\rho}[X]$ is the repair
zone and $G[Z']$ is the subgraph it induces.
In each component $H$ of this subgraph, the algorithm chooses a repair
$S'_H$ according to Definition~\ref{def:outputA}.
We first consider a region $U$ such that every component of $G[Z']$
is either contained in $U$ or has no vertex in $U$. Distances and feasibility in the following bound refer to $G$.%

\begin{lemma}\label{lem:budgetA}
Let $U\subseteq V(G)$ satisfy $H\subseteq U$ or $H\cap U=\varnothing$
for every component $H$ of $G[Z']$.
Let $Y\subseteq V(G)$ be feasible on $N_G^k[U]$ in $G$. Then
\[
 |\Bsingle(G)\cap U|
 \le\alpha_\theta\,|Y\cap N_G^q[U]|+|Y\cap U|
 \le(\alpha_\theta+1)|Y|.
\]
\end{lemma}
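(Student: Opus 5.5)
We split $\Bsingle(G)\cap U=(S\cap U)\cup(S'\cap U)$ and bound the two parts separately. The second inequality is then immediate: the sets $N_G^q[U]$ and $U$ are subsets of $V(G)$, so each of $|Y\cap N_G^q[U]|$ and $|Y\cap U|$ is at most $|Y|$.

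\textbf{Retained part.} We apply Lemma~\ref{lem:charge} with $J\coloneqq S\cap U$. Since $J\subseteq U$, we have $N_G^k[J]\subseteq N_G^k[U]$, so $Y$ is feasible on $N_G^k[J]$. The lemma gives $|S\cap U|\le\alpha_\theta|Y\cap N_G^q[J]|$. Because $N_G^q[J]\subseteq N_G^q[U]$, this yields $|S\cap U|\le\alpha_\theta|Y\cap N_G^q[U]|$. This step needs no use of the hypothesis on components.

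\textbf{Repair part.} By Fact~\ref{fact:minimum-repairA}(a), every $S'_H$ lies in $H$, and the repairs are disjoint. By the hypothesis on $U$, the set $S'\cap U$ is the disjoint union of the sets $S'_H$ over the components $H\subseteq U$. For each such $H$, we compare $S'_H$ with the candidate $Y_H\coloneqq Y\cap N_G^\rho[R'\cap H]$. We show that $Y_H$ is feasible on $R'\cap H$ as follows.
\begin{itemize}
\item We have $R'\cap H\subseteq H\subseteq U\subseteq N_G^k[U]$, so $Y$ is feasible on $R'\cap H$.
\item Observation~\ref{lem:restrict} then shows that $Y_H$ is feasible on $R'\cap H$.
\end{itemize}
By construction, $Y_H$ is a subset of $N_G^\rho[R'\cap H]$. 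Hence it is an admissible candidate in Definition~\ref{def:outputA}, and minimality gives $|S'_H|\le|Y_H|$. By Lemma~\ref{lem:zoneA}(c) with $t=0$, $T=\tau$ and $Z=Z'$, we have $N_G^\rho[R'\cap H]\subseteq H$, so $Y_H\subseteq Y\cap H$. The components are disjoint and all contained in $U$, so summing gives
\[
 |S'\cap U|=\sum_{H\subseteq U}|S'_H|\le\sum_{H\subseteq U}|Y\cap H|\le|Y\cap U|.
\]

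Adding the two bounds, and using $|\Bsingle(G)\cap U|\le|S\cap U|+|S'\cap U|$, gives the first inequality. The only delicate point is the repair part. We need the hypothesis that each component is entirely inside or entirely outside $U$, so that $S'\cap U$ is exactly the union of whole repairs. We also need Lemma~\ref{lem:zoneA}(c) to place the comparison set $Y_H$ inside $H$, hence inside $U$. Without this, a repair straddling $U$ could not be charged to $Y\cap U$.
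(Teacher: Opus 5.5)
Your proposal is correct and follows essentially the same argument as the paper. You bound $|S\cap U|$ by Lemma~\ref{lem:charge} with $J=S\cap U$, and you charge each repair $S'_H$ with $H\subseteq U$ to the candidate $Y\cap N_G^\rho[R'\cap H]\subseteq H$, using Observation~\ref{lem:restrict} and Lemma~\ref{lem:zoneA}(c), before summing over the disjoint components.
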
%
\begin{proof}[\color{black}Proof]
Let $\mathcal H$ be the set of components of $G[Z']$.
By Definition~\ref{def:outputA},
$\Bsingle(G)=S\cup\bigcup_{H\in\mathcal H}S'_H$.
Intersecting both sides with $U$, we obtain
\[
 \Bsingle(G)\cap U
 =(S\cap U)\cup\bigcup_{H\in\mathcal H}(S'_H\cap U).
\]
The union above runs over all components $H\in\mathcal H$.
By the assumption on $U$, each such component is either contained in $U$
or has no vertex in $U$.
Fact~\ref{fact:minimum-repairA}(a) gives $S'_H\subseteq H$.
If $H\subseteq U$, then the whole repair $S'_H$ lies in $U$,
so $S'_H\cap U=S'_H$.
If $H\cap U=\varnothing$, then $S'_H$ has no vertex in $U$,
so $S'_H\cap U=\varnothing$.
Thus only repairs in components contained in $U$ remain in the union, and
\[
 \Bsingle(G)\cap U
 =(S\cap U)\cup\bigcup_{\substack{H\in\mathcal H\\H\subseteq U}}S'_H.
\]
The size of a union is at most the sum of the sizes, so
\[
 |\Bsingle(G)\cap U|\le|S\cap U|
 +\sum_{\substack{H\in\mathcal H\\H\subseteq U}}|S'_H|.
\]
We bound the two terms separately.%

By the definition of $S$, we have $S\cap U\subseteq S=A(G)\setminus X$.
Also, $S\cap U\subseteq U$, so $N_G^k[S\cap U]\subseteq N_G^k[U]$.
The assumption of this lemma states that $Y$ is feasible on $N_G^k[U]$.
Thus the local constraint holds at every vertex of $N_G^k[U]$,
and hence at every vertex of its subset $N_G^k[S\cap U]$.
By Definition~\ref{def:feasible}, $Y$ is therefore feasible on $N_G^k[S\cap U]$.
Applying Lemma~\ref{lem:charge} with $J=S\cap U$ yields
$|S\cap U|\le\alpha_\theta|Y\cap N_G^q[S\cap U]|
\le\alpha_\theta|Y\cap N_G^q[U]|$.%

We now bound the total cost of repairs in $U$.
Fix any component $H\in\mathcal H$ with $H\subseteq U$.
Recall that $R'=\{v\in V(G):\neg F_G(v,S)\}$ is the set of vertices
where the verifier rejects $S$.
Since $R'\cap H\subseteq U$, the set $Y$ is feasible on $R'\cap H$.
By Observation~\ref{lem:restrict}, $Y\cap N_G^\rho[R'\cap H]$ is a candidate
for $S'_H$, so $|S'_H|\le|Y\cap N_G^\rho[R'\cap H]|$.
Lemma~\ref{lem:zoneA}(c), with $t=0$, $T=\tau$, and $Z=Z'$,
gives $N_G^\rho[R'\cap H]\subseteq H$.
These candidates therefore lie in disjoint components of $G[Z']$
contained in $U$. The sum of their sizes is at most $|Y\cap U|$.%

Combining the bounds on the retained set and the repairs gives
\[
 |\Bsingle(G)\cap U|
 \le\alpha_\theta|Y\cap N_G^q[U]|+|Y\cap U|.
\]
Both intersections are subsets of $Y$, so
\[
 \alpha_\theta|Y\cap N_G^q[U]|+|Y\cap U|
 \le\alpha_\theta|Y|+|Y|=(\alpha_\theta+1)|Y|.
\]
\end{proof}%
\subsection{The guarantee for an arbitrary region}\label{sec:uniform-guarantee}

The preceding lemmas show that $\Bsingle$ returns an
$(\alpha+1+\eta)$-approximation. To use $\Bsingle$ as the source algorithm
in another transfer, we also need a uniform approximation guarantee.
We now prove that it is a $(\delta+k)$-uniform
$(\alpha+1+\eta)$-approximation. Section~\ref{sec:iteration} describes
how such a guarantee can be used in a further transfer without errors.

Uniformity means that, for every $W\subseteq V(G)$, the number of
vertices in $\Bsingle(G)\cap W$ is at most $(\alpha+1+\eta)$ times the
minimum size of a set feasible on $N_G^{\delta+k}[W]$ in $G$.
The following lemma isolates the extension from regions containing
whole components to arbitrary regions. Both algorithms use this argument.

\begin{lemma}\label{lem:uniform-extension}
Let $G$ be an instance for which $\Pi$ has a feasible solution.
Let $B,Z\subseteq V(G)$, $c\ge0$, and $\delta,k\in\mathbb N$
satisfy the following two conditions.
\begin{enumerate}[label=\textup{(\alph*)},leftmargin=*]
\item Every component of $G[Z]$ has weak diameter at most $\delta$ in $G$.
\item For every $U\subseteq V(G)$ such that each component of $G[Z]$
is either contained in $U$ or has no vertex in $U$, we assume that
\[
 |B\cap U|\le c\,\OPT_\Pi(G,N_G^k[U]).
\]
\end{enumerate}
Then, for every $W\subseteq V(G)$,
\[
 |B\cap W|\le c\,\OPT_\Pi(G,N_G^{\delta+k}[W]).
\]
\end{lemma}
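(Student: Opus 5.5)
Given $W$, the plan is to enlarge it to a region $U$ that respects the components of $G[Z]$ and then apply hypothesis~(b) to $U$. Let $U$ be the union of $W$ with every component $H$ of $G[Z]$ that meets $W$. For each component $H$ of $G[Z]$, either $H$ meets $W$, in which case $H\subseteq U$, or $H\cap W=\varnothing$. In the second case $H\cap U=\varnothing$: $H$ is disjoint from $W$ and from every other component, since distinct components are disjoint. So $U$ satisfies the condition in~(b), and $|B\cap U|\le c\,\OPT_\Pi(G,N_G^k[U])$.

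Next I compare the two sides with those for $W$. Since $W\subseteq U$, we have $|B\cap W|\le|B\cap U|$. For the right side I will show $U\subseteq N_G^{\delta}[W]$. Take $u\in U\setminus W$. Then $u$ lies in a component $H$ that contains some $w\in W$. By~(a), $\dist_G(u,w)\le\delta$, where weak diameter means distance in $G$. Hence $u\in N_G^\delta[W]$. The triangle inequality then gives $N_G^k[U]\subseteq N_G^k[N_G^\delta[W]]=N_G^{\delta+k}[W]$.

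The last step is monotonicity of $\OPT_\Pi(G,\cdot)$ in the region, which needs no additivity. A set feasible on a larger region $N_G^{\delta+k}[W]$ is feasible on every subset, in particular on $N_G^k[U]$, so $\OPT_\Pi(G,N_G^k[U])\le\OPT_\Pi(G,N_G^{\delta+k}[W])$. These minima are finite because $G$ has a feasible solution, which is why that hypothesis appears; $c\ge0$ keeps the inequality direction. Chaining the inequalities gives the claim. No step is a real obstacle. The point needing care is to check that the enlarged $U$ really satisfies the dichotomy of~(b), and that distances are measured in $G$, not inside $H$, so that~(a) applies.
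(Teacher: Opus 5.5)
Your proof is correct and is essentially the paper's argument. You build the same enlarged set $U$ (your $W\cup\bigcup_{H\cap W\ne\varnothing}H$ coincides with the paper's $(W\setminus Z)\cup\bigcup_{H\cap W\ne\varnothing}H$), check the dichotomy in (b), derive $U\subseteq N_G^\delta[W]$ from the weak-diameter bound, and finish with monotonicity of $\OPT_\Pi(G,\cdot)$ and $c\ge0$.
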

\begin{proof}[\color{black}Proof]
Fix $W\subseteq V(G)$, and let $\mathcal H$ be the family of vertex sets
of the components of $G[Z]$.
Define
\[
 U\coloneqq(W\setminus Z)\cup
 \bigcup_{\substack{H\in\mathcal H\\H\cap W\ne\varnothing}}H.
\]
Fix any $v\in W$. If $v\notin Z$, then $v\in W\setminus Z\subseteq U$.
If $v\in Z$, let $H\in\mathcal H$ be the component of $G[Z]$ containing $v$.
Since $v\in H\cap W$, this intersection is nonempty, so we include all of $H$
in $U$. Hence $v\in H\subseteq U$.
Thus every vertex of $W$ lies in $U$, so $W\subseteq U$.
Since the components are pairwise disjoint and contained in $Z$,
each $H\in\mathcal H$ satisfies $H\subseteq U$ or $H\cap U=\varnothing$.
By (a), every component $H$ added to $U$ has weak diameter at most $\delta$ in $G$.
We add $H$ only if $H\cap W\ne\varnothing$, so $H$ contains a vertex of $W$.
Thus each of its vertices is at distance
at most $\delta$ from $W$, and
\[
 U\subseteq N_G^\delta[W],\qquad
 N_G^k[U]\subseteq N_G^{\delta+k}[W].
\]
The second inclusion follows by taking $k$-neighbourhoods of both sides of the first
inclusion and using $N_G^k[N_G^\delta[W]]=N_G^{\delta+k}[W]$.

Choose a set $Y\subseteq V(G)$ of minimum size that is feasible on $N_G^{\delta+k}[W]$.
By Definition~\ref{def:feasible}, $|Y|=\OPT_\Pi(G,N_G^{\delta+k}[W])$.
Since $N_G^k[U]\subseteq N_G^{\delta+k}[W]$, the set $Y$ is also feasible on $N_G^k[U]$.
Thus a set of size $|Y|$ is feasible on $N_G^k[U]$.
The minimum size of a set feasible on this region is therefore at most $|Y|$, so
\[
 \OPT_\Pi(G,N_G^k[U])\le |Y|=\OPT_\Pi(G,N_G^{\delta+k}[W]).
\]
Using $W\subseteq U$, condition (b), and $c\ge0$, we obtain
\[
 |B\cap W|\le|B\cap U|\le c\,\OPT_\Pi(G,N_G^k[U])
 \le c\,\OPT_\Pi(G,N_G^{\delta+k}[W]).
\]
The argument includes $W=\varnothing$, when $U=\varnothing$.
\end{proof}

\begin{lemma}\label{lem:uniformA}
For every $W\subseteq V(G)$, the output $\Bsingle(G)$ satisfies
\[
 |\Bsingle(G)\cap W|
 \le(\alpha+1+\eta)\OPT_\Pi(G,N_G^{\delta+k}[W]).
\]
\end{lemma}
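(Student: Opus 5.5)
The plan is to apply Lemma~\ref{lem:uniform-extension} with $B\coloneqq\Bsingle(G)$, $Z\coloneqq Z'=N_G^{2\rho}[X]$, $c\coloneqq\alpha+1+\eta$, and the given $\delta$ and $k$. The target instance is assumed feasible, as Lemma~\ref{lem:uniform-extension} requires. It then remains to verify its two hypotheses.

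For condition (a), invoke Lemma~\ref{lem:zoneA}(b) with $t=0$, $T=\tau$, and $Z=Z'$. The chain of inclusions $N_G^{2\rho}[X]\subseteq Z'\subseteq N_G^{2\rho}[X_\tau(G)]$ holds trivially because $X=X_\tau(G)$. Since $G$ is $\tau$-locally $\delta$-nice, every component of $G[Z']$ has weak diameter at most $\delta$ in $G$.

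For condition (b), fix $U$ such that every component of $G[Z']$ is contained in $U$ or disjoint from $U$. Choose $Y$ to be a minimum-size set feasible on $N_G^k[U]$, so that $|Y|=\OPT_\Pi(G,N_G^k[U])$; such a set exists because the instance is feasible. Lemma~\ref{lem:budgetA} then gives
\[
 |\Bsingle(G)\cap U|\le(\alpha_\theta+1)|Y|.
\]
Fact~\ref{fact:alpha-theta} gives $\alpha_\theta\le\alpha+\eta/2\le\alpha+\eta$, so
\[
 |\Bsingle(G)\cap U|\le(\alpha+1+\eta)\OPT_\Pi(G,N_G^k[U]).
\]
This is exactly hypothesis (b) with $c=\alpha+1+\eta\ge0$.

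Lemma~\ref{lem:uniform-extension} now yields the stated bound for every $W$. There is no real obstacle: all the work is in Lemma~\ref{lem:budgetA} and Lemma~\ref{lem:charge}. The only point needing care is to match the parameters in Lemma~\ref{lem:zoneA}, where $T=\tau\ge r+\rho+1$ holds by Definition~\ref{def:parameters}, and to note that $\alpha_\theta+1$ is even bounded by $\alpha+1+\eta/2$, so the slack is not needed.
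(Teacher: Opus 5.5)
Your proposal is correct and matches the paper's proof: it also applies Lemma~\ref{lem:uniform-extension} with $Z=Z'$, verifies condition (a) via Lemma~\ref{lem:zoneA}(b) with $t=0$, $T=\tau$, and verifies condition (b) via Lemma~\ref{lem:budgetA} with a minimum $Y$. The only cosmetic difference is that the paper takes $c=\alpha_\theta+1$ and invokes Fact~\ref{fact:alpha-theta} after the extension step, whereas you apply that bound first and use $c=\alpha+1+\eta$ directly.
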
%

\begin{proof}
We first check the assumptions of Lemma~\ref{lem:uniform-extension} for
$B=\Bsingle(G)$, $Z=Z'=N_G^{2\rho}[X]$, and $c=\alpha_\theta+1$.
Applying Lemma~\ref{lem:zoneA}(b) with $t=0$, $T=\tau$, and $Z=Z'$
shows that every component of $G[Z']$ has weak diameter at most $\delta$ in $G$.
This verifies condition (a).

To check condition (b), fix any $U\subseteq V(G)$ such that each component of $G[Z']$
is either contained in $U$ or has no vertex in $U$.
Choose a set $Y\subseteq V(G)$ of minimum size that is feasible on $N_G^k[U]$.
By Definition~\ref{def:feasible}, $|Y|=\OPT_\Pi(G,N_G^k[U])$.
The condition on $U$ and the feasibility of $Y$ allow us to apply Lemma~\ref{lem:budgetA}.
We obtain
\[
 |\Bsingle(G)\cap U|\le(\alpha_\theta+1)|Y|
 = (\alpha_\theta+1)\OPT_\Pi(G,N_G^k[U]).
\]
Since $B=\Bsingle(G)$ and $c=\alpha_\theta+1$, this is the inequality in
condition (b) of Lemma~\ref{lem:uniform-extension}.
The choice of $U$ was arbitrary, so condition (b) of Lemma~\ref{lem:uniform-extension}
holds for every such $U$.

Applying Lemma~\ref{lem:uniform-extension} now gives the first inequality below.
\[
 \begin{aligned}
 |\Bsingle(G)\cap W|
 &\overset{\mathclap{\text{Lem.~\ref{lem:uniform-extension}}}}{\le}
 (\alpha_\theta+1)\OPT_\Pi(G,N_G^{\delta+k}[W])\\
 &\overset{\mathclap{\text{Fact~\ref{fact:alpha-theta}}}}{\le}
 (\alpha+1+\eta/2)\OPT_\Pi(G,N_G^{\delta+k}[W])\\
 &\overset{\mathclap{\eta>0}}{\le}
 (\alpha+1+\eta)\OPT_\Pi(G,N_G^{\delta+k}[W]).
 \end{aligned}
\]
The second inequality uses Fact~\ref{fact:alpha-theta}, and the last
uses $\eta>0$.
\end{proof}%

\section{Analysis of \algname{ShiftedRepair}}\label{sec:thmB}%

In the cost analysis of \algname{SingleRepair}, we bounded the size of
$S=A(G)\setminus X$ and the total cost of repairs separately using the same set $Y$.
To obtain ratio $\alpha+\eta$, \algname{ShiftedRepair} considers buffers of
different widths around the error set $X$.
For each width, it deletes the source choices in the buffer and computes
a repair. It then chooses a smallest variant in each component of the
repair zone. The spacing between the buffer widths bounds the cost
of the part of $Y$ counted in both comparisons.%

We first fix the parameters and describe the algorithm.
We then prove the existence of repairs and feasibility of the output,
bound its cost, and establish the uniform guarantee.%

\subsection{Parameters}%

The error set $X$ is still determined using balls of radius $\tau$.
To bound the weak diameters of the components of the enlarged repair
zone, we require local niceness at a larger scale $T$.
The parameters below satisfy
\[
 T=\tau+\lceil2/\eta\rceil(q+2\rho),\qquad
 \tau<T<\frac74\tau.
\]
Thus, for the same $\eta$, the radius increases by less than $75\%$.%

\begin{definition}\label{def:shift-parameters}
For $\eta>0$, let
\begin{equation*}
 m\coloneqq\lceil2/\eta\rceil,\qquad
 w\coloneqq q+2\rho,\qquad
 L\coloneqq mw,\qquad
 T\coloneqq\tau+L.
\end{equation*}
For each $j\in\{0,\ldots,m-1\}$, the deletion radius is
\[
 t_j\coloneqq jw+q.
\]
\end{definition}%

The number $m$ specifies how many deletion widths the algorithm considers,
and $w$ is the difference between consecutive radii $t_j$.
All repairs are computed within a zone of radius $L$ around $X$.
We also write $L(\eta)$ and $T(\eta)$ to make their dependence on the
chosen loss explicit.%

The next fact records the parameter relations used to locate repairs
and to bound the final cost.
We retain $\alpha_\theta=\alpha/(1-\theta)$ from
Definition~\ref{def:partition-parameters}.%

\begin{fact}\label{fact:shift-parameters}
The parameters of Definition~\ref{def:shift-parameters} satisfy the following.
\begin{enumerate}[label=\textup{(\alph*)},leftmargin=*]
\item $m\ge1$ and, for every $j\in\{0,\ldots,m-1\}$,
\[
 t_j+2\rho=(j+1)w\le L,\qquad
 \max_{0\le j<m}t_j=t_{m-1}=L-2\rho.
\]
\item $1/m\le\eta/2$, $1\le\alpha_\theta$,
$\alpha_\theta+1/m\le\alpha+\eta$.
\item $0<L<\frac34\tau$, $\tau<T<\frac74\tau$.
\end{enumerate}
\end{fact}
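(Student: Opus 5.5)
The plan is to verify the three parts directly from Definition~\ref{def:shift-parameters}, Definition~\ref{def:parameters} and Fact~\ref{fact:alpha-theta}. No new construction is needed. The only step with any content is the comparison between $L$ and $\tau$ in part~(c).

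For part~(a), I use $\eta>0$ to get $m=\lceil2/\eta\rceil\ge1$. Then I expand
\[
t_j+2\rho=jw+q+2\rho=(j+1)w.
\]
Since $j+1\le m$, this gives $(j+1)w\le mw=L$. The radii $t_j$ increase with $j$ because $w=q+2\rho\ge q\ge1$, so the maximum is attained at $j=m-1$. There
\[
t_{m-1}=(m-1)w+q=L-w+q=L-2\rho.
\]

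For part~(b), the inequality $m\ge2/\eta$ gives $1/m\le\eta/2$ immediately. Fact~\ref{fact:alpha-theta} gives $1\le\alpha_\theta\le\alpha+\eta/2$. Adding the two bounds yields $\alpha_\theta+1/m\le\alpha+\eta$.

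For part~(c), positivity of $L$ follows from $m\ge1$ and $w\ge q\ge1$. This also gives $T=\tau+L>\tau$, and $T<\frac74\tau$ follows once $L<\frac34\tau$ is shown. That bound is the main (though still routine) step, and I would carry it out as follows.
\begin{itemize}[leftmargin=*]
\item Bound $L$ from above. First, $m\le 2/\eta+1=(2+\eta)/\eta$. Next, $q=k+\rho+1\ge\rho+1$ gives $2\rho\le 2q-2<2q$, so $w<3q$. Together, $L<3q(2+\eta)/\eta$.
\item Bound $\tau$ from below. By Definition~\ref{def:parameters}, $\tau\ge f(2M)+2M\ge 2M$. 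Also $M\ge 2q(d+1)(2\alpha+\eta)/\eta$, and using $d+1\ge1$ and $\alpha\ge1$ this is at least $2q(2+\eta)/\eta$.
\item Compare. The previous step gives $\frac34\tau\ge\frac32M\ge 3q(2+\eta)/\eta$, which exceeds $L$ by the first step.
\end{itemize}
The strict inequality comes from $w<3q$.
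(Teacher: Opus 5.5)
Your proposal is correct and follows the paper's proof essentially step for step. Parts~(a) and~(b) are the same direct computations. For part~(c), the paper uses your chain $L=mw<3q(1+2/\eta)\le\frac32M\le\frac34\tau$, based on $w<3q$, $m<1+2/\eta$, $M\ge2q(1+2/\eta)$ and $\tau\ge2M$; you use $m\le1+2/\eta$ non-strictly and get strictness from $w<3q$ alone, which works equally well.
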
%

\begin{proof}
Since $\eta>0$, we have $m=\lceil2/\eta\rceil\ge1$.
By the definitions of $w$ and $t_j$,
\[
 t_j+2\rho=jw+q+2\rho=(j+1)w\le mw=L.
\]
The radii increase with $j$ because $w=q+2\rho>0$.
The largest is $t_{m-1}=(m-1)w+q=L-2\rho$, proving part~(a).

The bound $m\ge2/\eta$ and Fact~\ref{fact:alpha-theta} give
$1/m\le\eta/2$ and $1\le\alpha_\theta\le\alpha+\eta/2$.
Thus $\alpha_\theta+1/m\le\alpha+\eta$, proving part~(b).

For part~(c), recall that $q=k+\rho+1>\rho$, so $w=q+2\rho<3q$.
Also, $m=\lceil2/\eta\rceil<1+2/\eta$.
Since $\alpha\ge1$ and $d\ge0$, Definition~\ref{def:parameters} gives
\[
 M\ge2q(d+1)(1+2\alpha/\eta)\ge2q(1+2/\eta).
\]
The same definition gives $\tau\ge2M$, since $f(2M)\ge0$.
Hence
\[
 0<L=mw<3q(1+2/\eta)\le\frac32M\le\frac34\tau.
\]
Since $T=\tau+L$, this proves part~(c).
\end{proof}%

We will use part~(a) to place all repairs in $N_G^L[X]$ and part~(b)
to obtain ratio $\alpha+\eta$.
Part~(c) bounds the increase in the required local niceness scale.%

\subsection{\algname{ShiftedRepair}}\label{sec:overviewB}%

We give pseudocode for \algname{ShiftedRepair} ($\Bshift$) below.
\algname{SingleRepair} deletes the choices in $X$ from $A(G)$ and computes one
repair in each component of $N_G^{2\rho}[X]$.
\algname{ShiftedRepair} considers $m$ variants that delete choices from
neighbourhoods $N_G^{t_j}[X]$ of different radii.
For each variant, it computes repairs in the same enlarged zone
$N_G^L[X]$. In each component of this zone, it chooses a variant
whose union of retained choices and repair has minimum cardinality.%

\begin{algorithm}[H]
\caption{Algorithm $\Bshift$ (\algname{ShiftedRepair}).}
\label{alg:shifted-repair}
\begin{algorithmic}[1]
\color{black}
\Require An instance $G\in\D$ with a feasible solution that is
$T$-locally $\delta$-nice with respect to $\C$ and $\Pi$,
and a source algorithm $A$ as in Section~\ref{sec:guarantees},
with the parameters of Definition~\ref{def:shift-parameters}.
\Ensure A feasible set $\Bshift(G)$ with
$|\Bshift(G)|\le(\alpha+\eta)\OPT_\Pi(G)$.
\State Each vertex $u$ checks whether $G[N_G^\tau[u]]\in\C$.
Vertices for which the test fails form $X$.
\State Run $A$ on the whole graph $G$ for $r$ rounds and, for each
$j\in\{0,\ldots,m-1\}$, set
$S_j\gets A(G)\setminus N_G^{t_j}[X]$, where $t_j=jw+q$.
For each $j$, run the verifier on $S_j$ and compute
$R_j\gets\{v\in V(G):\neg F_G(v,S_j)\}$.
\State Set $Z\gets N_G^L[X]$.
For every $j$ and every connected component $H$ of $G[Z]$,
find a minimum-cardinality vertex set
$S'_{H,j}\subseteq N_G^\rho[R_j\cap H]$ feasible on $R_j\cap H$.
Break ties as in Definition~\ref{def:outputA}.
\State In each component $H$, choose an index $j$ minimizing
$|(S_j\cap H)\cup S'_{H,j}|$, taking the smallest $j$ at a tie.
For this index, set $C_H\gets(S_j\cap H)\cup S'_{H,j}$.
\State Return $\Bshift(G)\gets(A(G)\setminus Z)\cup\bigcup_H C_H$.
\end{algorithmic}
\end{algorithm}%

As in \algname{SingleRepair}, we start by detecting errors and running the source
algorithm. Each vertex $v$ tests whether $G[N_G^\tau[v]]$ belongs to $\C$.
The vertices with a negative answer form $X=X_\tau(G)$.
We run $A$ on the whole graph $G$, with all vertices participating,
and compute its output only once.

All variants use the same repair zone $Z\coloneqq N_G^L[X]$.
Let $\mathcal H$ be the family of vertex sets of the components of $G[Z]$.
In each $H\in\mathcal H$, we compare $m$ variants.
Variant $j$ keeps the set $S_j\coloneqq S(t_j)$ and requires repair on
the set $R_j\coloneqq R(t_j)$, where $S(t)$ and $R(t)$ are the families
of Lemma~\ref{lem:trusted} and Lemma~\ref{lem:zoneA}.
Thus
\[
 S_j=A(G)\setminus N_G^{t_j}[X],\qquad
 R_j=\{v\in V(G):\neg F_G(v,S_j)\}\qquad(0\le j<m).
\]

The set $S(0)=S$ is the retained set of \algname{SingleRepair}.
For example, for MDS, $R_j$ is the set of vertices not dominated by $S_j$.

For each index $j$ and each connected component $H$ of $G[Z]$,
we perform a search and obtain a repair $S'_{H,j}$.
We then compare the sizes of the unions of retained choices and repairs.
If a vertex belongs to both the repair and the set we keep, then it is
counted only once. The next definition specifies the choice rule.

\begin{definition}\label{def:outputB}
For each $H\in\mathcal H$ and $j\in\{0,\ldots,m-1\}$, let $S'_{H,j}$
be a minimum-cardinality subset of $N_G^\rho[R_j\cap H]$
that is feasible on $R_j\cap H$ in $G$.
Break ties by choosing the
lexicographically first increasing list of identifiers.
Let
\[
 C_{H,j}\coloneqq(S_j\cap H)\cup S'_{H,j}\qquad(0\le j<m).
\]
Let $C_H$ be a set of minimum cardinality among
$C_{H,0},\ldots,C_{H,m-1}$.
If several indices give this minimum, we take the smallest one.
We call this index the \textbf{index selected in $H$}.
The output is
\begin{equation*}
 \Bshift(G)\coloneqq(A(G)\setminus Z)\cup\bigcup_{H\in\mathcal H}C_H.
\end{equation*}
\end{definition}

Each component $H$ selects its own index and contributes
the set $C_H$ to the output.
Outside the repair zone $Z=N_G^L[X]$, which consists of all vertices at
distance at most $L$ from $X$, we keep the source choices.
If $X=\varnothing$, then $Z=\varnothing$, so there are no components
to repair and the output is $A(G)$.

To select the index in $H$, we compare the sizes of the sets $C_{H,j}$
computed for $H$. This uses only sets computed by the algorithm.
The comparison set $Y$ is used later, when we bound the cost of the output.
Once the required data for $H$ have been collected, both the search
for repairs and the comparison of all $m$ variants are local computations.
Section~\ref{sec:implementation} explains how to collect these data
and bounds the number of rounds.

\subsection{Correctness of \algname{ShiftedRepair}}\label{sec:witnessB}

We keep the notation $X,Z,\mathcal H,S_j,R_j$ from Section~\ref{sec:overviewB}.
Fix $\eta>0$ and let $T=T(\eta)=\tau+L$.
Let $G\in\D$ be an instance for which $\Pi$ has a feasible solution.
We assume that $G$ is $T$-locally $\delta$-nice with respect to $\C$ and $\Pi$,
where $\delta\ge0$ is the fixed integer diameter bound described in
Section~\ref{sec:guarantees}.
We first bound the diameters of the components of $G[Z]$ and establish
the existence of the repairs $S'_{H,j}$ from Definition~\ref{def:outputB}.
We then check feasibility of $\Bshift(G)$, allowing different components
to choose different indices.

\Needspace{5\baselineskip}
\subsubsection{The repair zone and existence of repairs}

\begin{lemma}\label{lem:absorb}
For nonnegative integers $a,b$,
\[
 N_G^b[X_a(G)]\subseteq X_{a+b}(G).
\]
Moreover, every component of $G[Z]$ has weak diameter at most $\delta$ in $G$.
\end{lemma}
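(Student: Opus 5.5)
The proof has two parts: first the inclusion, then the diameter bound, which will follow by feeding the inclusion into Lemma~\ref{lem:zoneA}(b).

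\textbf{The inclusion.} Fix $v\in N_G^b[X_a(G)]$ and choose $x\in X_a(G)$ with $\dist_G(v,x)\le b$. By the triangle inequality, $N_G^a[x]\subseteq N_G^{a+b}[v]$, and hence
\[
 G[N_G^a[x]]=\bigl(G[N_G^{a+b}[v]]\bigr)[N_G^a[x]].
\]
The equality holds because both sides are induced subgraphs of $G$ carrying inherited data. Suppose, for contradiction, that $G[N_G^{a+b}[v]]\in\C$. Since $\C$ is hereditary (Definition~\ref{def:hereditary}), its induced subgraph $G[N_G^a[x]]$ would also lie in $\C$. This contradicts $x\in X_a(G)$. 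Hence $v\in X_{a+b}(G)$.

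The one subtlety is the input data. The vertex and edge labels of $G[N_G^a[x]]$ must coincide whether they are inherited directly from $G$ or through the intermediate ball. This holds because Definition~\ref{def:error-set} and Definition~\ref{def:hereditary} both use inheritance from the ambient graph, and the identifiers and ports are likewise preserved under induced subgraphs (Section~\ref{sec:model}).

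\textbf{The diameter bound.} Apply Lemma~\ref{lem:zoneA}(b) with $\tau$, with $t=t_j$ for any $j$ (say $t=t_{m-1}$), and with $T=\tau+L$. The lemma needs the sandwich
\[
 N_G^{t+2\rho}[X]\subseteq Z\subseteq N_G^{2\rho}[X_T(G)].
\]
The left inclusion follows from Fact~\ref{fact:shift-parameters}(a), since $t_j+2\rho\le L$ and $Z=N_G^L[X]$. For the right inclusion, the first part of this lemma with $a=\tau$ and $b=L$ gives
\[
 Z=N_G^L[X_\tau(G)]\subseteq X_{\tau+L}(G)=X_T(G)\subseteq N_G^{2\rho}[X_T(G)].
\]
The remaining hypotheses of Lemma~\ref{lem:zoneA} hold as well: $\tau\ge r+\rho+1$ by Definition~\ref{def:parameters}, $T\ge\tau$, and $G$ is $T$-locally $\delta$-nice and feasible by the standing assumptions of this section.

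Alternatively, the diameter bound can be argued directly, without Lemma~\ref{lem:zoneA}. Every component of $G[Z]$ lies inside a component of $G[N_G^{2\rho}[X_T(G)]]$, because $G[Z]$ is an induced subgraph of that graph. Each such component has weak diameter at most $\delta$ by Definition~\ref{def:nice}.

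\textbf{Main obstacle.} There is no real difficulty here. The only point needing care is the inherited-data remark in the inclusion step; everything else is a direct application of the triangle inequality and the earlier results.
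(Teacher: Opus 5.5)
Your proof is correct and follows the paper's argument: the inclusion comes from the triangle inequality and heredity, and then the chain $Z=N_G^L[X_\tau(G)]\subseteq X_T(G)\subseteq N_G^{2\rho}[X_T(G)]$ together with local niceness at scale $T$ gives the diameter bound. Routing the second part through Lemma~\ref{lem:zoneA}(b) is valid and not circular, since that lemma is proved earlier and its part (b) uses only the right-hand inclusion. Your ``alternative'' direct argument is exactly the proof the paper gives.
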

\begin{proof}[\color{black}Proof]
For any $y\in N_G^b[X_a(G)]$, there exists $x\in X_a(G)$ such that
$\dist_G(x,y)\le b$. The triangle inequality gives
$N_G^a[x]\subseteq N_G^{a+b}[y]$.
Since $x\in X_a(G)$, the graph $G[N_G^a[x]]$ does not belong to $\C$.
By heredity, the graph $G[N_G^{a+b}[y]]$ containing it as an induced
subgraph also does not belong to $\C$. Thus $y\in X_{a+b}(G)$.

We apply this inclusion with $a=\tau$ and $b=L$ and obtain
\[
 Z=N_G^L[X_\tau(G)]\subseteq X_T(G)
 \subseteq N_G^{2\rho}[X_T(G)].
\]
Every component of $G[Z]$ therefore lies in one component of
$G[N_G^{2\rho}[X_T(G)]]$.
Since $G$ is $T$-locally $\delta$-nice,
every component of $G[N_G^{2\rho}[X_T(G)]]$ has weak diameter
at most $\delta$ in $G$, by Definition~\ref{def:nice}.
The same bound holds for the component of $G[Z]$ that it contains.
\end{proof}

The repair zone $Z=N_G^L[X]$ consists of all vertices at distance at most
$L$ from $X$. By Lemma~\ref{lem:absorb}, local niceness at scale
$T=\tau+L$ ensures that every component of $G[Z]$ has weak diameter
at most $\delta$ in $G$. The error set $X$ is still determined using
balls of radius $\tau$.
To locate the failed constraints and the vertices needed for repairs,
we use the general repair lemma already proved in
Lemma~\ref{lem:repair-exists}. We check its assumptions for every shift
in the following lemma.

\begin{lemma}\label{lem:shift-repairs}
For every $H\in\mathcal H$ and $j\in\{0,\ldots,m-1\}$,
the repair $S'_{H,j}$ of Definition~\ref{def:outputB} exists and lies in $H$.
\end{lemma}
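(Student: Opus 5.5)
The plan is to apply Lemma~\ref{lem:repair-exists} once for each index $j$, with $t=t_j$, with the scale $T=\tau+L$ of Definition~\ref{def:shift-parameters}, with the given $\delta$, and with the zone $Z=N_G^L[X]$. Its part~(c) then says two things for every component $H$ of $G[Z]$: we have $N_G^\rho[R_j\cap H]\subseteq H$, and some subset of $N_G^\rho[R_j\cap H]$ is feasible on $R_j\cap H$. Here $R_j=R(t_j)$ and $S_j=S(t_j)$ are exactly the sets named in that lemma.

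The hypotheses to check are the following.
\begin{itemize}[leftmargin=*]
\item The scalar conditions $\tau\ge r+\rho+1$, $t_j\ge0$ and $T\ge\tau$. The first comes from Definition~\ref{def:parameters}, the second holds because $t_j=jw+q\ge0$, and the third holds because $L>0$ by Fact~\ref{fact:shift-parameters}(c).
\item The instance is feasible, $G\in\D$, and $G$ is $T$-locally $\delta$-nice. These are the standing assumptions of this section.
\item The sandwich $N_G^{t_j+2\rho}[X]\subseteq Z\subseteq N_G^{2\rho}[X_T(G)]$. This is the only substantive step, and it is immediate from earlier results. Fact~\ref{fact:shift-parameters}(a) gives $t_j+2\rho\le L$, so the left inclusion holds. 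Lemma~\ref{lem:absorb} with $a=\tau$ and $b=L$ gives $Z\subseteq X_T(G)\subseteq N_G^{2\rho}[X_T(G)]$, which is the right inclusion.
\end{itemize}

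Existence of $S'_{H,j}$ then follows. The family of subsets of the finite set $N_G^\rho[R_j\cap H]$ that are feasible on $R_j\cap H$ is nonempty by part~(c). It therefore has a member of minimum cardinality, and the lexicographic tie-break selects a unique one. When $R_j\cap H=\varnothing$, this member is the empty set.

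Containment follows directly: $S'_{H,j}\subseteq N_G^\rho[R_j\cap H]\subseteq H$, again by part~(c). I expect no real obstacle here, since all the work was done in the earlier lemmas. The only point needing care is that the bound $t_j+2\rho\le L$ must hold for every index $j$, including the largest one, $j=m-1$. This is exactly the equality $t_{m-1}+2\rho=L$ recorded in Fact~\ref{fact:shift-parameters}(a).
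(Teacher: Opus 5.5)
Your proposal is correct and follows the paper's proof: you apply Lemma~\ref{lem:repair-exists} with $t=t_j$, $Z=N_G^L[X]$ and $T=\tau+L$. You verify the two inclusions on $Z$ via Fact~\ref{fact:shift-parameters}(a) and Lemma~\ref{lem:absorb}, and then read off existence and containment of $S'_{H,j}$ from part~(c).
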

\begin{proof}[\color{black}Proof]
Fix $j\in\{0,\ldots,m-1\}$.
We check Lemma~\ref{lem:repair-exists} with $t=t_j$,
$Z=N_G^L[X]$, and $T=\tau+L$.
By Fact~\ref{fact:shift-parameters}(a) and $j+1\le m$,
$t_j+2\rho=(j+1)w\le mw=L$. Hence
\[
 N_G^{t_j+2\rho}[X]=N_G^{(j+1)w}[X]\subseteq N_G^L[X]=Z.
\]
Then we apply Lemma~\ref{lem:absorb} with $a=\tau$ and $b=L$ and obtain
\[
 Z=N_G^L[X_\tau(G)]\subseteq X_{\tau+L}(G)
 =X_T(G)\subseteq N_G^{2\rho}[X_T(G)].
\]
Moreover, $\tau\ge r+\rho+1$ by Definition~\ref{def:parameters},
and $T=\tau+L\ge\tau$ because $L>0$.
Finally, $G$ is a feasible instance that is $T$-locally $\delta$-nice,
as assumed at the start of Section~\ref{sec:witnessB}.
Thus all assumptions of Lemma~\ref{lem:repair-exists} hold.
Its part~(c) shows that, for every $H\in\mathcal H$,
$N_G^\rho[R_j\cap H]\subseteq H$ and some subset of this neighbourhood
is feasible on $R_j\cap H$.
A minimum-cardinality such subset exists, so $S'_{H,j}$ exists
and is contained in $H$.
\end{proof}

If several minimum repairs exist for fixed $H,j$, we choose the one
whose increasing list of identifiers is lexicographically first.
If several indices minimize $|C_{H,j}|$, we choose the smallest index.
These rules uniquely determine $\Bshift(G)$.

\subsubsection{Feasibility of the combined output}

\begin{lemma}\label{lem:mixed}
The output $\Bshift(G)$ is feasible.
\end{lemma}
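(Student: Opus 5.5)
We fix an arbitrary vertex $v\in V(G)$ and show $F_G(v,\Bshift(G))$, splitting into cases by the position of $v$ relative to the zone $Z=N_G^L[X]$. Throughout, additivity reduces each case to finding a subset of $\Bshift(G)$ that is accepted at $v$. The key structural fact is that, near $v$, the output coincides with a single variant $S_j$ (plus its repair). In the awkward case, where $v$ lies near the boundary of $Z$ but far from $X$, it coincides with $A(G)$ itself.

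\emph{Case 1: $v\in H$ for some component $H\in\mathcal H$.} Let $j$ be the index selected in $H$, so $C_H=(S_j\cap H)\cup S'_{H,j}\subseteq\Bshift(G)$. If $v\in R_j$, then $v\in R_j\cap H$. The repair $S'_{H,j}$ is feasible on $R_j\cap H$ by Definition~\ref{def:outputB}, and it exists by Lemma~\ref{lem:shift-repairs}, so additivity gives the claim. If $v\notin R_j$, then $F_G(v,S_j)$ holds, and we must show $S_j\cap N_G^\rho[v]\subseteq\Bshift(G)$.

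Take $u\in S_j\cap N_G^\rho[v]$. If $u\in Z$, then $u$ and $v$ are joined by a path of length at most $\rho$. We then need $u\in H$, which requires the path to stay in $Z$; this is the main obstacle. I would handle it by a distance split on $v$.

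If $\dist(v,X)\le L-\rho$, the whole $\rho$-ball around $v$ lies in $Z$. Hence $N_G^\rho[v]\subseteq H$, so $S_j\cap N_G^\rho[v]\subseteq S_j\cap H\subseteq C_H$, and Definition~\ref{def:problem} gives acceptance.

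If $\dist(v,X)>L-\rho$, then $\dist(v,X)>t_j+\rho$ by Fact~\ref{fact:shift-parameters}(a), since $t_j\le L-2\rho$. The same bound applies to each $S_i$, so the ball $N_G^\rho[v]$ contains no vertex of $N_G^{t_i}[X]$ for any $i$. Thus every vertex of $A(G)\cap N_G^\rho[v]$ lies outside every deletion region, and so belongs to every $S_i$. By Lemma~\ref{lem:trusted}, $F_G(v,A(G))$ holds, because $v\notin X$. We then show $A(G)\cap N_G^\rho[v]\subseteq\Bshift(G)$. Let $u$ be such a vertex. If $u\notin Z$, then $u\in A(G)\setminus Z$. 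If $u\in Z$, let $H'$ be the component of $u$ and $i$ its selected index; then $u\in S_i\cap H'\subseteq C_{H'}$. So $u\in\Bshift(G)$ in either case, and Definition~\ref{def:problem} gives $F_G(v,\Bshift(G))$. This argument does not need $u$ and $v$ to lie in the same component, which is exactly how it avoids the obstacle.

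\emph{Case 2: $v\notin Z$.} Then $\dist(v,X)>L\ge t_i+\rho$ for every $i$. The argument of the second subcase of Case~1 applies verbatim: every vertex of $A(G)\cap N_G^\rho[v]$ lies in $A(G)\setminus Z$ or in some $S_i\cap H'\subseteq C_{H'}$, and $F_G(v,A(G))$ holds by Lemma~\ref{lem:trusted}.

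Thus the only delicate point is the mixing of indices across components. It is resolved by noting that near the boundary of $Z$ all variants agree with $A(G)$, because $t_{m-1}+2\rho=L$ leaves a margin of $2\rho$. Deep inside $Z$, the $\rho$-ball stays within one component.
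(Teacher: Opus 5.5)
Your proof is correct and follows the paper's argument. The paper also splits at $\dist_G(v,X)\le t_{m-1}+\rho=L-\rho$. In the near case it uses the single component containing $N_G^\rho[v]$, its selected index, and the split on whether $v\in R_j$; in the far case it shows $A(G)\cap N_G^\rho[v]\subseteq\Bshift(G)$ through the selected indices and invokes Lemma~\ref{lem:trusted}. The only difference is the order of the case distinctions: your Case~2 ($v\notin Z$) is simply a repeat of your far subcase.
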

\begin{proof}[\color{black}Proof]
Let $t_{\max}\coloneqq t_{m-1}=L-2\rho$, as in
Fact~\ref{fact:shift-parameters}.
First observe that
\[
 A(G)\setminus N_G^{t_{\max}}[X]\subseteq\Bshift(G).
\]
Indeed, fix $u\in A(G)\setminus N_G^{t_{\max}}[X]$.
If $u\notin Z$, it remains in the output.
Otherwise, $u$ lies in some component $H\in\mathcal H$ of $G[Z]$.
Let $j$ be the index selected in $H$, so that $C_H=C_{H,j}$.
Since $t_j\le t_{\max}$, we have
$u\in S_j\cap H\subseteq C_{H,j}=C_H$.
Thus $u\in\Bshift(G)$ in both cases.

Now fix any $v\in V(G)$.
If $v\notin N_G^{t_{\max}+\rho}[X]$, then
$N_G^\rho[v]\cap N_G^{t_{\max}}[X]=\varnothing$.
The inclusion above gives
$A(G)\cap N_G^\rho[v]\subseteq\Bshift(G)$.
Since $v\notin X$, Lemma~\ref{lem:trusted} gives $F_G(v,A(G))$.
Locality and additivity therefore imply $F_G(v,\Bshift(G))$.

If instead $v\in N_G^{t_{\max}+\rho}[X]$, then
$N_G^\rho[v]\subseteq N_G^{t_{\max}+2\rho}[X]=Z$.
This ball is connected and contains $v$, so it lies in one component
$H\in\mathcal H$. Let $j$ be the index selected in $H$.
Recall that $R_j=\{u\in V(G):\neg F_G(u,S_j)\}$.
If $v\notin R_j$, then $F_G(v,S_j)$ holds by definition.
The inclusion
$S_j\cap N_G^\rho[v]\subseteq S_j\cap H
\subseteq C_{H,j}\subseteq\Bshift(G)$
and locality and additivity imply $F_G(v,\Bshift(G))$.
If $v\in R_j$, then $v\in R_j\cap H$, so $F_G(v,S'_{H,j})$ holds.
Since $S'_{H,j}\subseteq\Bshift(G)$, additivity again gives
$F_G(v,\Bshift(G))$.
The constraint at $v$ is satisfied in every case.
\end{proof}

\subsection{Total cost}

As in Lemma~\ref{lem:budgetA}, we first consider a region $U$ such that
every component of $G[Z]$ is either contained in $U$ or has no vertex in $U$.
We compare the output with a variant that uses one fixed index $j$
in all components.

\begin{fact}\label{fact:common-shift}
Let $U\subseteq V(G)$ satisfy $H\subseteq U$ or $H\cap U=\varnothing$
for every $H\in\mathcal H$.
For every $j\in\{0,\ldots,m-1\}$,
\[
 |\Bshift(G)\cap U|
 \le |S_j\cap U|
 +\sum_{\substack{H\in\mathcal H\\H\subseteq U}}|S'_{H,j}|.
\]
\end{fact}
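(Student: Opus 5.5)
We decompose $\Bshift(G)\cap U$ along the definition of the output. By Definition~\ref{def:outputB},
\[
 \Bshift(G)\cap U=\bigl((A(G)\setminus Z)\cap U\bigr)\cup\bigcup_{H\in\mathcal H}(C_H\cap U).
\]
Each $C_H=C_{H,i}$ is a subset of $H$, because $S_i\cap H\subseteq H$ and $S'_{H,i}\subseteq H$ by Lemma~\ref{lem:shift-repairs}. The assumption on $U$ then lets us argue exactly as in Lemma~\ref{lem:budgetA}. Terms with $H\cap U=\varnothing$ vanish, and terms with $H\subseteq U$ contribute all of $C_H$. Since distinct components are disjoint and disjoint from $V(G)\setminus Z$, we get
\[
 |\Bshift(G)\cap U|\le|(A(G)\setminus Z)\cap U|+\sum_{\substack{H\in\mathcal H\\H\subseteq U}}|C_H|.
\]

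Next we use the selection rule. For each $H\subseteq U$, the set $C_H$ has minimum cardinality among $C_{H,0},\dots,C_{H,m-1}$, so
\[
 |C_H|\le|C_{H,j}|\le|S_j\cap H|+|S'_{H,j}|
\]
for the fixed index $j$. Substituting gives the bound $|(A(G)\setminus Z)\cap U|+\sum_{H\subseteq U}|S_j\cap H|+\sum_{H\subseteq U}|S'_{H,j}|$.

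It remains to show that the first two terms together are at most $|S_j\cap U|$. Since $t_j\le L$ by Fact~\ref{fact:shift-parameters}(a), we have $N_G^{t_j}[X]\subseteq Z$, and therefore $A(G)\setminus Z\subseteq S_j$. This gives $(A(G)\setminus Z)\cap U\subseteq S_j\cap U\setminus Z$. Each set $S_j\cap H$ with $H\subseteq U$ lies in $S_j\cap U\cap H$. All of these pieces are pairwise disjoint subsets of $S_j\cap U$, so their sizes sum to at most $|S_j\cap U|$. This completes the bound.

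The only delicate point is the disjointness bookkeeping: checking that $C_H\subseteq H$, which relies on the repair location in Lemma~\ref{lem:shift-repairs}, and that $A(G)\setminus Z\subseteq S_j$, which relies on $t_j\le L$. Everything else is direct.
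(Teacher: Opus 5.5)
Your proposal is correct and follows the paper's proof essentially step for step: the same decomposition of the output into $A(G)\setminus Z$ and the sets $C_H\subseteq H$, the same selection bound $|C_H|\le|C_{H,j}|\le|S_j\cap H|+|S'_{H,j}|$, and the same use of $N_G^{t_j}[X]\subseteq Z$ to make $A(G)$ and $S_j$ agree outside $Z$. The only difference is that the paper proves equalities where you use inequalities, so you do not need its remark that every vertex of $U\cap Z$ lies in a component contained in $U$.
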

\begin{proof}[\color{black}Proof]
Fix any $j\in\{0,\ldots,m-1\}$.
The output consists of $A(G)\setminus Z$ and the sets $C_H$
chosen in the components $H\in\mathcal H$.
The vertices of $A(G)\setminus Z$ that lie in $U$ form
$A(G)\cap(U\setminus Z)$.
For the sets $C_H$, recall that $C_H\subseteq H$.
If $H\subseteq U$, then $C_H\cap U=C_H$, so we count the whole set $C_H$.
If $H\cap U=\varnothing$, then $C_H\cap U=\varnothing$,
so this set contributes no vertices to the output in $U$.
By the assumption on $U$, these are the only two cases.
Thus the output in $U$ consists of $A(G)\cap(U\setminus Z)$
and the sets $C_H$ for components $H\subseteq U$.
The sets $C_H$ lie in different components inside $Z$,
so all these parts are pairwise disjoint. Hence
\[
 |\Bshift(G)\cap U|
 =|A(G)\cap(U\setminus Z)|
 +\sum_{\substack{H\in\mathcal H\\H\subseteq U}}|C_H|.
\]
By the choice of $C_H$ in Definition~\ref{def:outputB}, $|C_H|\le|C_{H,j}|$.
The size of the union defining $C_{H,j}$ is at most the sum of the
sizes of the two sets, so
\[
 |C_H|\le|S_j\cap H|+|S'_{H,j}|.
\]
Since $t_j\le L$, we have $N_G^{t_j}[X]\subseteq N_G^L[X]=Z$.
Thus $S_j=A(G)\setminus N_G^{t_j}[X]$ and $A(G)$ contain the same
vertices outside $Z$. Therefore,
$A(G)\cap(U\setminus Z)=S_j\cap(U\setminus Z)$.

Every vertex of $U\cap Z$ belongs to exactly one component
$H\in\mathcal H$. This component has a vertex in $U$,
so the assumption on $U$ implies $H\subseteq U$.
Summing $|S_j\cap H|$ over $H\subseteq U$ therefore counts all vertices
of $S_j\cap U$ inside $Z$, each exactly once.
The remaining vertices of $S_j\cap U$ lie outside $Z$ and number
$|A(G)\cap(U\setminus Z)|$.
Thus
\[
 |A(G)\cap(U\setminus Z)|
 +\sum_{\substack{H\in\mathcal H\\H\subseteq U}}|S_j\cap H|
 =|S_j\cap U|.
\]
Substituting the bounds on $|C_H|$ into the output count yields
\[
 \begin{aligned}
 |\Bshift(G)\cap U|
 &\le |A(G)\cap(U\setminus Z)|
 +\sum_{\substack{H\in\mathcal H\\H\subseteq U}}
   \bigl(|S_j\cap H|+|S'_{H,j}|\bigr)\\
 &=|S_j\cap U|
 +\sum_{\substack{H\in\mathcal H\\H\subseteq U}}|S'_{H,j}|.
 \end{aligned}\qedhere
\]
\end{proof}

The next lemma bounds both terms in Fact~\ref{fact:common-shift}.
We charge the repairs inside $N_G^{(j+1)w}[X]$
and the retained choices to vertices outside $N_G^{jw}[X]$.

\begin{lemma}\label{lem:shift-costs}
Let $U\subseteq V(G)$ satisfy $H\subseteq U$ or $H\cap U=\varnothing$
for every $H\in\mathcal H$, and let $Y\subseteq V(G)$ be feasible on
$N_G^k[U]$ in $G$.
For every $j\in\{0,\ldots,m-1\}$,
\begin{enumerate}[label=\textup{(\alph*)},leftmargin=*]
\item $\displaystyle
 \sum_{\substack{H\in\mathcal H\\H\subseteq U}}|S'_{H,j}|
 \le |Y\cap N_G^{(j+1)w}[X]|$,
\item $\displaystyle
 |S_j\cap U|\le\alpha_\theta\,|Y\setminus N_G^{jw}[X]|$.
\end{enumerate}
\end{lemma}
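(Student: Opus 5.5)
Both parts reduce to bounds already proved for \algname{SingleRepair}. The only new ingredient is that the deletion radius $t_j$ separates the region where the repairs are charged from the region where the retained choices are charged. I would fix $j$ and $U$ as in the statement and prove the two parts separately.

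For part~(a), fix a component $H\in\mathcal H$ with $H\subseteq U$.
\begin{itemize}
\item Since $R_j\cap H\subseteq U\subseteq N_G^k[U]$, the set $Y$ is feasible on $R_j\cap H$.
\item By Observation~\ref{lem:restrict}, $Y\cap N_G^\rho[R_j\cap H]$ is feasible on $R_j\cap H$. It is a candidate in Definition~\ref{def:outputB}, so minimality gives $|S'_{H,j}|\le|Y\cap N_G^\rho[R_j\cap H]|$.
\item In the proof of Lemma~\ref{lem:shift-repairs} we checked the hypotheses of Lemma~\ref{lem:repair-exists} with $t=t_j$, $Z=N_G^L[X]$ and $T=\tau+L$. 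Part~(c) of that lemma gives $N_G^\rho[R_j\cap H]\subseteq H$.
\item Part~(a) of the same lemma gives $N_G^\rho[R_j]\subseteq N_G^{t_j+2\rho}[X]=N_G^{(j+1)w}[X]$, using Fact~\ref{fact:shift-parameters}(a).
\end{itemize}
The sets $Y\cap N_G^\rho[R_j\cap H]$ therefore lie in pairwise disjoint components $H$, and all of them lie in $Y\cap N_G^{(j+1)w}[X]$. Summing over $H\subseteq U$ gives~(a).

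For part~(b), I would apply Lemma~\ref{lem:charge} with $J\coloneqq S_j\cap U$.
\begin{itemize}
\item The hypothesis $J\subseteq S=A(G)\setminus X$ holds because $t_j\ge0$ gives $S_j\subseteq S(0)=S$.
\item Since $J\subseteq U$, we have $N_G^k[J]\subseteq N_G^k[U]$, so $Y$ is feasible on $N_G^k[J]$.
\item The lemma then yields $|J|\le\alpha_\theta|Y\cap N_G^q[J]|$.
\end{itemize}
It remains to show $N_G^q[J]\cap N_G^{jw}[X]=\varnothing$. Every $v\in J$ lies outside $N_G^{t_j}[X]$, so $\dist_G(v,X)>t_j=jw+q$. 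For $x\in N_G^q[J]$, the triangle inequality then gives $\dist_G(x,X)>jw$. Hence $Y\cap N_G^q[J]\subseteq Y\setminus N_G^{jw}[X]$, which proves~(b).

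The main point to verify is that Lemma~\ref{lem:charge} is legitimately available in the setting of \algname{ShiftedRepair}. Its proof uses only three ingredients:
\begin{itemize}
\item the partition construction of Lemma~\ref{lem:padding}, which depends on $f$, $M$, $q$ and $D\le\tau$;
\item the error set $X=X_\tau(G)$, which is the same set here, with the same $\tau$;
\item the properties of $A$, namely Fact~\ref{lem:views} and $k$-uniformity.
\end{itemize}
It never uses local niceness, so it applies verbatim to any $G\in\D$ with the same $X$. I would state this explicitly before invoking it. The rest is bookkeeping with the disjointness of the components.
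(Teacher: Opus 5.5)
Your proposal is correct and follows the paper's proof essentially step by step. In part~(a) you bound each $S'_{H,j}$ by the candidate $Y\cap N_G^\rho[R_j\cap H]$ and place these disjoint candidates in $H\cap N_G^{(j+1)w}[X]$ via Lemma~\ref{lem:repair-exists}(a) and~(c); in part~(b) you apply Lemma~\ref{lem:charge} to $J=S_j\cap U$ and use $t_j=jw+q$ with the triangle inequality to show that $N_G^q[J]$ misses $N_G^{jw}[X]$. Your explicit remark that Lemma~\ref{lem:charge} does not use local niceness, only $X=X_\tau(G)$, $D\le\tau$ and the properties of $A$, is a sound justification that the paper leaves implicit.
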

\begin{proof}[\color{black}Proof]
For part~(a), fix $H\in\mathcal H$ with $H\subseteq U$.
Since $R_j\cap H\subseteq U\subseteq N_G^k[U]$,
the set $Y$ is feasible on $R_j\cap H$.
By Observation~\ref{lem:restrict}, so is $Y\cap N_G^\rho[R_j\cap H]$.
The repair $S'_{H,j}$ has minimum cardinality among the sets allowed in
Definition~\ref{def:outputB}, so
\[
 |S'_{H,j}|\le|Y\cap N_G^\rho[R_j\cap H]|.
\]
For the fixed index $j$, we have $R(t_j)=R_j$.
Since $R_j\cap H\subseteq R_j$, Lemma~\ref{lem:repair-exists}(a),
applied with the parameters checked in Lemma~\ref{lem:shift-repairs}, gives
\[
 N_G^\rho[R_j\cap H]\subseteq N_G^\rho[R_j]
 \subseteq N_G^{t_j+2\rho}[X].
\]
Part~(c) of Lemma~\ref{lem:repair-exists} also gives
$N_G^\rho[R_j\cap H]\subseteq H$. Thus
\[
 N_G^\rho[R_j\cap H]\subseteq H\cap N_G^{t_j+2\rho}[X].
\]
These neighbourhoods are disjoint for different $H$ because they lie
in different components.
Each neighbourhood $N_G^\rho[R_j\cap H]$ is contained in
$N_G^{t_j+2\rho}[X]$.
By Fact~\ref{fact:shift-parameters}(a), $t_j+2\rho=(j+1)w$.
Thus all these neighbourhoods lie in $N_G^{(j+1)w}[X]$,
so their union also lies in this set.
Summing the repair bounds proves part~(a).

For part~(b), the assumption that $Y$ is feasible on $N_G^k[U]$
also makes it feasible on $N_G^k[S_j\cap U]$.
Since $S_j\cap U\subseteq A(G)\setminus X=S$,
Lemma~\ref{lem:charge} gives
\[
 |S_j\cap U|\le\alpha_\theta\,|Y\cap N_G^q[S_j\cap U]|.
\]
It remains to show that
$Y\cap N_G^q[S_j\cap U]\subseteq Y\setminus N_G^{jw}[X]$.
For this, we prove $N_G^q[S_j\cap U]\cap N_G^{jw}[X]=\varnothing$.
Suppose instead that a vertex $z$ lies in the intersection of
$N_G^q[S_j\cap U]$ and $N_G^{jw}[X]$.
Then some $u\in S_j\cap U$ satisfies $\dist_G(u,z)\le q$,
and the triangle inequality gives
\[
 \dist_G(u,X)\le\dist_G(u,z)+\dist_G(z,X)
 \le q+jw=t_j.
\]
This contradicts $u\in S_j=A(G)\setminus N_G^{t_j}[X]$,
and proves part~(b).
\end{proof}

The sets $Y\cap N_G^{(j+1)w}[X]$ and $Y\setminus N_G^{jw}[X]$
in Lemma~\ref{lem:shift-costs} overlap in
\[
 Y\cap\bigl(N_G^{(j+1)w}[X]\setminus N_G^{jw}[X]\bigr).
\]
For different indices $j$, these overlaps are pairwise disjoint.
Their total size is therefore at most $|Y|$.
At least one of the $m$ indices has overlap at most $|Y|/m$,
as recorded in the following fact.%

\begin{fact}\label{fact:thin-shell}
For every $Y\subseteq V(G)$, some $j\in\{0,\ldots,m-1\}$ satisfies
\[
 \bigl|Y\cap\bigl(N_G^{(j+1)w}[X]\setminus N_G^{jw}[X]\bigr)\bigr|
 \le |Y|/m.
\]
\end{fact}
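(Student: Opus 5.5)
The plan is a pigeonhole argument over the $m$ shells around $X$. For $j\in\{0,\ldots,m-1\}$ define the shell
\[
 \Sigma_j\coloneqq N_G^{(j+1)w}[X]\setminus N_G^{jw}[X].
\]
First I would show that these shells are pairwise disjoint. Suppose $i<j$. Then $(i+1)w\le jw$, so $\Sigma_i\subseteq N_G^{(i+1)w}[X]\subseteq N_G^{jw}[X]$. By definition $\Sigma_j$ avoids $N_G^{jw}[X]$, hence $\Sigma_i\cap\Sigma_j=\varnothing$. Equivalently, a vertex $y$ with $\dist_G(y,X)=\ell<\infty$ can lie only in the shell with $jw<\ell\le(j+1)w$, and vertices at infinite distance lie in no shell. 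The monotonicity $N_G^a[X]\subseteq N_G^b[X]$ for $a\le b$ uses $w>0$, which Fact~\ref{fact:shift-parameters} provides.

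Next I would sum over the shells. Since the sets $Y\cap\Sigma_j$ are pairwise disjoint subsets of $Y$,
\[
 \sum_{j=0}^{m-1}|Y\cap\Sigma_j|=\Bigl|Y\cap\bigcup_{j}\Sigma_j\Bigr|\le|Y|.
\]

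Finally I would apply the averaging step. By Fact~\ref{fact:shift-parameters}(a) we have $m\ge1$, so the minimum over $j$ of $|Y\cap\Sigma_j|$ is at most the average, and the average is at most $|Y|/m$. Any minimizing index $j$ has the required property.

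The case $X=\varnothing$ needs no separate treatment, because then every shell is empty. The only point needing care is the disjointness of the shells, and it follows directly from the nesting of the balls.
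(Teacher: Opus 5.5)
Your proposal is correct and follows essentially the same pigeonhole argument as the paper: the shells are pairwise disjoint because the neighbourhoods $N_G^{jw}[X]$ are nested, so the intersections with $Y$ sum to at most $|Y|$, and $m\ge1$ gives an index with overlap at most $|Y|/m$. One small remark: the nesting $N_G^a[X]\subseteq N_G^b[X]$ for $a\le b$ needs no assumption on $w$. What uses $w\ge0$ is only the inequality $(i+1)w\le jw$ for $i<j$.
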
%

\begin{proof}
The neighbourhoods satisfy
\[
 N_G^0[X]\subseteq N_G^w[X]\subseteq\cdots\subseteq N_G^{mw}[X].
\]
Their consecutive differences
$N_G^{(i+1)w}[X]\setminus N_G^{iw}[X]$, for $i=0,\ldots,m-1$,
are therefore pairwise disjoint.
The sum of the sizes of their intersections with $Y$ is at most $|Y|$.
Since $m\ge1$, at least one of these sizes is at most $|Y|/m$.
\end{proof}

\begin{lemma}\label{lem:budget}
Let $U\subseteq V(G)$ satisfy $H\subseteq U$ or $H\cap U=\varnothing$
for every $H\in\mathcal H$.
Let $Y\subseteq V(G)$ be feasible on $N_G^k[U]$ in $G$. Then
\[
 |\Bshift(G)\cap U|\le(\alpha_\theta+1/m)|Y|.
\]
\end{lemma}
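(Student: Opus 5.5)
The plan is to pick one common index $j$ using Fact~\ref{fact:thin-shell}, apply Fact~\ref{fact:common-shift} with that index, and then bound its two terms with Lemma~\ref{lem:shift-costs}. The index choice depends only on $Y$ and $X$, not on $U$, and it is used only in the analysis. That is allowed because Fact~\ref{fact:common-shift} holds for every $j$ simultaneously; the algorithm's own per-component minimization is already accounted for there.

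Fix $Y$ feasible on $N_G^k[U]$. By Fact~\ref{fact:thin-shell}, choose $j\in\{0,\ldots,m-1\}$ with
\[
 \bigl|Y\cap\bigl(N_G^{(j+1)w}[X]\setminus N_G^{jw}[X]\bigr)\bigr|\le|Y|/m .
\]
Fact~\ref{fact:common-shift} gives
$|\Bshift(G)\cap U|\le|S_j\cap U|+\sum_{H\subseteq U}|S'_{H,j}|$.
Lemma~\ref{lem:shift-costs}(b) bounds the first term by $\alpha_\theta|Y\setminus N_G^{jw}[X]|$, and part~(a) bounds the sum by $|Y\cap N_G^{(j+1)w}[X]|$.

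The key step is the set decomposition. Write
$Y\cap N_G^{(j+1)w}[X]=(Y\cap N_G^{jw}[X])\cup(Y\cap\text{shell}_j)$,
where $\text{shell}_j\coloneqq N_G^{(j+1)w}[X]\setminus N_G^{jw}[X]$. The first piece is disjoint from $Y\setminus N_G^{jw}[X]$. Summing gives
\[
 |\Bshift(G)\cap U|\le\alpha_\theta|Y\setminus N_G^{jw}[X]|+|Y\cap N_G^{jw}[X]|+|Y\cap\text{shell}_j|.
\]
Since $\alpha_\theta\ge1$ by Fact~\ref{fact:shift-parameters}(b), the first two terms are at most
$\alpha_\theta\bigl(|Y\setminus N_G^{jw}[X]|+|Y\cap N_G^{jw}[X]|\bigr)=\alpha_\theta|Y|$.
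The third term is at most $|Y|/m$ by the choice of $j$. Together these give $(\alpha_\theta+1/m)|Y|$.

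The main obstacle is the bookkeeping in this last step. The inner piece $Y\cap N_G^{jw}[X]$ is charged only once, at coefficient $1\le\alpha_\theta$, and only the thin shell is genuinely double-counted. Once that decomposition is written down, the rest is routine.
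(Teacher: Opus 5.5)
Your proposal is correct and matches the paper's proof step for step. You choose $j$ via Fact~\ref{fact:thin-shell}, apply Fact~\ref{fact:common-shift} and Lemma~\ref{lem:shift-costs}, and split $Y\cap N_G^{(j+1)w}[X]$ into the inner part and the thin shell, absorbing the inner part with $\alpha_\theta\ge1$ (citing Fact~\ref{fact:shift-parameters}(b) for this instead of Fact~\ref{fact:alpha-theta} is equally valid).
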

\begin{proof}[\color{black}Proof]
Choose $j$ by applying Fact~\ref{fact:thin-shell} to $Y$. Then
\[
 \bigl|Y\cap\bigl(N_G^{(j+1)w}[X]\setminus N_G^{jw}[X]\bigr)\bigr|
 \overset{\text{Fact~\ref{fact:thin-shell}}}{\le}|Y|/m.
\]%

Next, we apply Fact~\ref{fact:common-shift} with the chosen index $j$
and bound the two terms using Lemma~\ref{lem:shift-costs}. We obtain
\[
 \begin{aligned}
 |\Bshift(G)\cap U|
 &\overset{\text{Fact~\ref{fact:common-shift}}}{\le}
 |S_j\cap U|+\sum_{\substack{H\in\mathcal H\\H\subseteq U}}|S'_{H,j}|\\
 &\overset{\text{Lemma~\ref{lem:shift-costs}}}{\le}
 \alpha_\theta|Y\setminus N_G^{jw}[X]|+|Y\cap N_G^{(j+1)w}[X]|=(*).
 \end{aligned}
\]
Since $N_G^{jw}[X]\subseteq N_G^{(j+1)w}[X]$,
the set $Y\cap N_G^{(j+1)w}[X]$ is the disjoint union of
$Y\cap N_G^{jw}[X]$ and
$Y\cap(N_G^{(j+1)w}[X]\setminus N_G^{jw}[X])$.
We first split the second term of $(*)$ according to this union.
Since $\alpha_\theta\ge1$ by Fact~\ref{fact:alpha-theta},
we then increase the coefficient of $|Y\cap N_G^{jw}[X]|$
from $1$ to $\alpha_\theta$.
The first two terms now sum to $\alpha_\theta|Y|$.
Finally, the choice of $j$ gives
\[
 \begin{aligned}
 (*)
 &=\alpha_\theta|Y\setminus N_G^{jw}[X]|
   +|Y\cap N_G^{jw}[X]|
   +\bigl|Y\cap\bigl(N_G^{(j+1)w}[X]\setminus N_G^{jw}[X]\bigr)\bigr|\\
 &\overset{\mathclap{\text{Fact~\ref{fact:alpha-theta}}}}{\le}
   \alpha_\theta|Y\setminus N_G^{jw}[X]|
   +\alpha_\theta|Y\cap N_G^{jw}[X]|
   +\bigl|Y\cap\bigl(N_G^{(j+1)w}[X]\setminus N_G^{jw}[X]\bigr)\bigr|\\
 &=\alpha_\theta|Y|
   +\bigl|Y\cap\bigl(N_G^{(j+1)w}[X]\setminus N_G^{jw}[X]\bigr)\bigr|\\
 &\overset{\mathclap{\text{Fact~\ref{fact:thin-shell}}}}{\le}
   (\alpha_\theta+1/m)|Y|.
 \end{aligned}
\]
\end{proof}

The index $j$ in this proof is used only in the analysis.
The algorithm selects the index separately in each component.

\subsection{The guarantee for an arbitrary region}\label{sec:uniformB}

Lemma~\ref{lem:mixed} shows that $\Bshift(G)$ is a feasible solution.
Applying Lemma~\ref{lem:budget} to $U=V(G)$ and a global optimum,
and then using Fact~\ref{fact:shift-parameters}(b), gives the global
ratio $\alpha+\eta$.
We now establish the $(\delta+k)$-uniform $(\alpha+\eta)$-approximation guarantee.%

\begin{lemma}\label{lem:uniform-output}
For every $W\subseteq V(G)$, the output $\Bshift(G)$ satisfies
\[
 |\Bshift(G)\cap W|
 \le(\alpha+\eta)\OPT_\Pi(G,N_G^{\delta+k}[W]).
\]
\end{lemma}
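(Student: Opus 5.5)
The plan mirrors the proof of Lemma~\ref{lem:uniformA}: we apply Lemma~\ref{lem:uniform-extension} with $B=\Bshift(G)$, $Z=N_G^L[X]$, and $c=\alpha_\theta+1/m$. This reduces the claim to the two conditions of that lemma, which are already supplied by results proved for \algname{ShiftedRepair}. The only new ingredient is the final numerical step, where Fact~\ref{fact:shift-parameters}(b) replaces Fact~\ref{fact:alpha-theta}.

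Condition~(a) is the second part of Lemma~\ref{lem:absorb}. Under the standing assumption that $G$ is $T$-locally $\delta$-nice with $T=\tau+L$, every component of $G[Z]$ has weak diameter at most $\delta$ in $G$.

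For condition~(b), fix any $U\subseteq V(G)$ such that each component of $G[Z]$ is either contained in $U$ or disjoint from $U$. Choose $Y$ of minimum size feasible on $N_G^k[U]$, so $|Y|=\OPT_\Pi(G,N_G^k[U])$. Such a $Y$ exists because the instance is feasible. Lemma~\ref{lem:budget} then gives
\[
 |\Bshift(G)\cap U|\le(\alpha_\theta+1/m)\OPT_\Pi(G,N_G^k[U]),
\]
which is condition~(b) with $c=\alpha_\theta+1/m\ge0$.

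Lemma~\ref{lem:uniform-extension} now yields
\[
 |\Bshift(G)\cap W|\le(\alpha_\theta+1/m)\OPT_\Pi(G,N_G^{\delta+k}[W])
\]
for every $W\subseteq V(G)$. Fact~\ref{fact:shift-parameters}(b) gives $\alpha_\theta+1/m\le\alpha+\eta$, which completes the argument.

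There is no real obstacle. The only point to check is that Lemma~\ref{lem:budget} is applied with the same zone $Z$ whose components define the admissible regions $U$. This holds, since $\mathcal H$ there is exactly the family of components of $G[Z]$ with $Z=N_G^L[X]$.
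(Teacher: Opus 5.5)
Your proposal is correct and follows the paper's proof: it applies Lemma~\ref{lem:uniform-extension} with $B=\Bshift(G)$, $Z=N_G^L[X]$, and $c=\alpha_\theta+1/m$. Condition~(a) comes from Lemma~\ref{lem:absorb}, condition~(b) from Lemma~\ref{lem:budget} with a minimum $Y$ feasible on $N_G^k[U]$, and the constant is finished with Fact~\ref{fact:shift-parameters}(b).
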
%

\begin{proof}
Apply Lemma~\ref{lem:uniform-extension} with
$B=\Bshift(G)$, $Z=N_G^L[X]$, and $c=\alpha_\theta+1/m$.
Lemma~\ref{lem:absorb} bounds the weak diameters of the components of
$G[Z]$ by $\delta$.
Fix any $U\subseteq V(G)$ such that each component of $G[Z]$ is either
contained in $U$ or has no vertex in $U$,
and choose a set $Y\subseteq V(G)$ of minimum size that is feasible on $N_G^k[U]$.
The condition on $U$ and the feasibility of $Y$ allow us to apply
Lemma~\ref{lem:budget}.
By Definition~\ref{def:feasible}, $|Y|=\OPT_\Pi(G,N_G^k[U])$,
which gives the required bound on $U$.
Thus
\[
 \begin{aligned}
 |\Bshift(G)\cap W|
 &\le(\alpha_\theta+1/m)\OPT_\Pi(G,N_G^{\delta+k}[W])\\
 &\le(\alpha+\eta)\OPT_\Pi(G,N_G^{\delta+k}[W]),
 \end{aligned}
\]
where the last inequality follows from Fact~\ref{fact:shift-parameters}(b).
\end{proof}

The next section bounds the number of rounds needed to compute both outputs.

\section{Main results and proofs}\label{sec:proofs}

We keep the assumptions and fixed source and target data of
Section~\ref{sec:guarantees}. We first state the round bounds for the
two algorithms. We then combine them with the feasibility and cost
bounds proved in Sections~\ref{sec:thmA} and~\ref{sec:thmB} to obtain
the two transfer theorems.

\subsection{Proofs of the transfer theorems}\label{sec:implementation}

Both algorithms collect the data needed in each component of the repair
zone. They then compute the repairs locally. For \algname{ShiftedRepair},
the same data also allow each component to compare all $m$ variants.

\begin{lemma}\label{lem:rounds}
In our model, the output $\Bsingle(G)$ from Definition~\ref{def:outputA}
can be computed in at most $\delta+\tau+r+3\rho+6$ rounds,
and the output $\Bshift(G)$ from Definition~\ref{def:outputB}
in at most $\delta+L+\tau+r+\rho+6$ rounds.
\end{lemma}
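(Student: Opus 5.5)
The plan is to simulate each algorithm phase by phase and add up the rounds each phase needs. The basic tool is that in $a+1$ rounds every vertex can collect its extended $a$-view (Definition~\ref{def:extended-view}). This view determines everything that depends only on $N_G^a[v]$.

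\textbf{Phase 1: detecting $X$.} Vertex $u$ decides $u\in X$ from $G[N_G^\tau[u]]$ with its inherited data, which is contained in its extended $\tau$-view. Collecting that view takes $\tau+1$ rounds. Membership in $\C$ is decidable, so this is a local computation.

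\textbf{Phase 2: retained sets and rejection sets.} Running $A$ takes $r$ further rounds; equivalently, $A(G)\cap\{v\}$ is determined by the extended $r$-view. In $\Bsingle$, vertex $v$ knows $v\in S$ once it knows $v\in A(G)$ and $v\in X$. Deciding $v\in R'$ then needs $S\cap N_G^\rho[v]$ and the extended $\rho$-view, so $\rho+1$ further rounds suffice. In $\Bshift$, $v\in S_j$ needs $\dist_G(v,X)\le t_j$. I will not compute this separately, because it is subsumed by the gathering step below.

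\textbf{Phase 3: gathering on components.} A vertex $v\in H$ must learn the whole component $H$ of $G[Z]$, together with $R'\cap H$ (resp.\ all $R_j\cap H$ and $S_j\cap H$). It also needs enough surrounding structure to check feasibility of candidate repairs at $R\cap H$. By Lemma~\ref{lem:zoneA}(b) and Lemma~\ref{lem:absorb}, $H$ has weak diameter at most $\delta$, so $H\subseteq N_G^\delta[v]$. Candidate repairs lie in $N_G^\rho[R\cap H]\subseteq H$. Checking them needs extended $\rho$-views at vertices of $H$, so $N_G^{\delta+\rho+1}[v]$ suffices.

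The key point is membership information. Knowing which vertices belong to $Z$ requires each $u$ within that radius to know $\dist(u,X)\le 2\rho$ (resp.\ $\le L$). This amounts to knowing $X\cap N_G^{2\rho}[u]$ (resp.\ $X\cap N_G^L[u]$). Identifying components of $G[Z]$ also needs $Z$-membership of vertices within distance $\delta$, hence radius about $\delta+2\rho$ resp.\ $\delta+L$ around $v$. Layering: $X$-membership costs $\tau+1$ rounds, and $A$-output costs $r+1$. Because views can be collected simultaneously, the cleanest accounting is a single view of radius $\delta+\tau+r+3\rho+c$ for $\Bsingle$. I would show that from the extended view at that radius, $v$ determines $X$, $A(G)$, $S$ and $R'$ on $N_G^{\delta+2\rho+1}[v]$, and hence the component $H\ni v$ and $S'_H$. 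For $\Bshift$ the same argument uses radius $\delta+L+\tau+r+\rho+c$; the $2\rho$ of the zone is absorbed into $L$ because $t_j+2\rho\le L$ (Fact~\ref{fact:shift-parameters}(a)), leaving one $\rho$ for rejection checks. Each vertex then computes $S'_H$ (resp.\ all $S'_{H,j}$, the sizes $|C_{H,j}|$, and the selected index) by exhaustive search with the deterministic tie-breaking rules. All vertices of $H$ therefore agree, and $v$ outputs its own membership. Vertices outside $Z$ need only their own $A$-output and $Z$-membership, which is covered by the same radius.

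\textbf{Main obstacle.} The main difficulty is the exact additive constants, i.e.\ verifying that $6$ slack rounds cover all the ``$+1$'' terms from extended views. These arise from Phase 1, from $A$, from the verifier, from component identification, and from the final output. I would carefully compose the radii: a fact about $u$ at distance $b$ from $v$ that depends on $u$'s extended $a$-view is available in $v$'s extended $(a+b)$-view. I would then sum the nested dependencies $\delta$ (component), $2\rho$ or $L$ (zone membership), $\tau$ ($X$), $\rho$ (verifier), and $r$ ($A$), with a $+1$ for each extended-view boundary and for writing the output.
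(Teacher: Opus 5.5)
Your proposal follows essentially the paper's route. You collect one view of radius about $\delta+\ell+\tau+r+\rho$ (with $\ell=2\rho$ or $\ell=L$), use the weak-diameter bound to place the component $H$ inside $N_G^\delta[v]$, read zone and $S(t)$ membership off error indicators and source bits, and use exhaustive search with the fixed tie-breaking so that all of $H$ agrees. The constant you defer is routine, since the data actually needed lie well inside that radius. The one thing the paper adds is totality of the local rule: it checks zone membership on $N_G^{\delta+1}[u]$ to confirm the component is fully visible, and it returns a fixed default bit when this fails or no candidate repair exists. Your exhaustive search should specify such a default as well.
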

The proof appears in Appendix~\ref{app:rounds}.

\begin{maintheorem}\label{thm:A}
For every real $\eta>0$ and integer $\delta\ge0$, there is a
deterministic computable algorithm $\Bsingle$ that runs in at most
$\delta+\tau+r+3\rho+6$ rounds.

\noindent On every feasible instance $G\in\D$ that is
$\tau$-locally $\delta$-nice relative to $\C$ and $\Pi$
at the single scale $\tau=\tau(\eta)$, its output is feasible and,
for every $W\subseteq V(G)$,
\[
 |\Bsingle(G)\cap W|
 \le(\alpha+1+\eta)\OPT_\Pi(G,N_G^{\delta+k}[W]).
\]
\end{maintheorem}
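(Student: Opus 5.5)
The theorem is an assembly of results already proved, so the plan is to take $\Bsingle$ to be Algorithm~\ref{alg:one-repair} with the parameters $q$, $M$, $\tau=\tau(\eta)$ of Definition~\ref{def:parameters} and the repairs of Definition~\ref{def:outputA}, and to verify three properties in order: round complexity, feasibility, and the uniform cost bound. None of the parameters depend on the input graph or on $n$. They are fixed by $\eta$, $\delta$, the problem and the source, as required in Section~\ref{sec:guarantees}. Hence the local rule is the same at every vertex and uses only identifiers, labels and ports.

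First, determinism and computability. Each step of Algorithm~\ref{alg:one-repair} is a computable local procedure:
\begin{enumerate}[label=(\roman*),leftmargin=*]
\item deciding membership of $G[N_G^\tau[u]]$ in $\C$, which is decidable by assumption;
\item running the terminating deterministic rule of $A$;
\item running the terminating verifier;
\item finding a minimum feasible subset of the finite set $N_G^\rho[R'\cap H]$ by exhaustive search, with the lexicographic tie-break of Definition~\ref{def:outputA}.
\end{enumerate}
The tie-break makes $S'_H$ a well-defined function of the collected data, so all vertices of $H$ compute the same repair. The round bound $\delta+\tau+r+3\rho+6$ is exactly the first half of Lemma~\ref{lem:rounds}, which I invoke as given (its proof is in the appendix). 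The only subtlety to check is that the existence of $S'_H$, which the search needs in order to terminate with an answer, is guaranteed. It is, by Lemma~\ref{lem:zoneA}(c) applied with $t=0$, $T=\tau$, $Z=Z'$. The hypotheses of that lemma hold because $G$ is feasible and $\tau$-locally $\delta$-nice, and $\tau\ge r+\rho+1$ by Definition~\ref{def:parameters}.

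Second, feasibility is Lemma~\ref{lem:feasibleA}. It uses Lemma~\ref{lem:trusted} with $t=0$ for vertices outside $R'$, and Fact~\ref{fact:minimum-repairA}(b) together with additivity for vertices in $R'$.

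Third, the uniform cost bound for every $W\subseteq V(G)$ is Lemma~\ref{lem:uniformA}. That lemma combines three ingredients:
\begin{enumerate}[label=(\roman*),leftmargin=*]
\item Lemma~\ref{lem:uniform-extension}, with $Z=Z'$ and $c=\alpha_\theta+1$;
\item the diameter bound of Lemma~\ref{lem:zoneA}(b);
\item the region bound of Lemma~\ref{lem:budgetA}, which in turn rests on the partition-averaging Lemma~\ref{lem:charge}, and then $\alpha_\theta\le\alpha+\eta/2$ from Fact~\ref{fact:alpha-theta}.
\end{enumerate}
I expect no real obstacle, since all the pieces are in place. The point needing care is consistency of hypotheses. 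The analysis in Section~\ref{sec:thmA} fixed an arbitrary $G\in\D$ that is $\tau$-locally $\delta$-nice and feasible, and used $D\le\tau$ (Fact~\ref{fact:partition-parameters}) to place each part of the partition inside a tested ball. So I would confirm that the theorem's hypotheses are exactly these, including that the colouring at scale $2M$ exists because $G\in\D$ has control function $f$. Once this is checked, the theorem follows by combining the three properties above.
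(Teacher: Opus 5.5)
Your proposal is correct and matches the paper's proof: the theorem follows by combining Lemma~\ref{lem:rounds} for the round bound, Lemma~\ref{lem:feasibleA} for feasibility, and Lemma~\ref{lem:uniformA} for the uniform estimate, and your extra checks on the existence of repairs and on the consistency of hypotheses are harmless additions. One small imprecision concerns Lemma~\ref{lem:feasibleA}: feasibility at $v\notin R'$ follows directly from the definition of $R'$ together with additivity, and Lemma~\ref{lem:trusted} enters only through Lemma~\ref{lem:zoneA}(a), which places $R'$ inside $N_G^\rho[X]\subseteq Z'$ so that the repairs cover it.
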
%

\begin{proof}[\color{black}Proof]
For $\Bsingle$ from Definition~\ref{def:outputA},
Lemma~\ref{lem:rounds} provides a local implementation within the stated
round bound. Lemma~\ref{lem:feasibleA} proves feasibility of the output,
and Lemma~\ref{lem:uniformA} proves the required uniform estimate.
\end{proof}

The bound holds for every region $W$, so the output is a
$(\delta+k)$-uniform $(\alpha+1+\eta)$-approximation.
Taking $W=V(G)$ gives the global ratio $\alpha+1+\eta$.
For fixed parameters, the round count is independent of the graph size.

\begin{maintheorem}\label{thm:B}
For every real $\eta>0$ and integer $\delta\ge0$, there is a
deterministic computable algorithm $\Bshift$ running in at most
$\delta+T+r+\rho+6$ rounds.

\noindent On every feasible instance $G\in\D$ that is $T$-locally
$\delta$-nice with respect to $\C$ and $\Pi$ at the single scale
$T=T(\eta)$, its output is feasible and satisfies, for every $W\subseteq V(G)$,
\[
 |\Bshift(G)\cap W|
 \le(\alpha+\eta)\OPT_\Pi(G,N_G^{\delta+k}[W]).
\]
\end{maintheorem}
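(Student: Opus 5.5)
The proof of Theorem~\ref{thm:B} assembles results already established for \algname{ShiftedRepair}, in parallel with the proof of Theorem~\ref{thm:A}. We take $\Bshift$ to be the algorithm of Definition~\ref{def:outputB}, with the parameters $m,w,L,T$ of Definition~\ref{def:shift-parameters} and the error set $X=X_\tau(G)$ computed at radius $\tau=\tau(\eta)$. This algorithm is deterministic and computable. The source rule, the verifier and the membership test for $\C$ all terminate on every finite view. The repair searches for each pair $(H,j)$ range over finitely many subsets of $N_G^\rho[R_j\cap H]$, and the index comparison is among finitely many candidates. The lexicographic tie-breaking rule and the smallest-index rule make every choice unique, so all vertices of a component compute the same $C_H$.

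For the round bound, Lemma~\ref{lem:rounds} gives at most $\delta+L+\tau+r+\rho+6$ rounds. Since $T=\tau+L$, this equals $\delta+T+r+\rho+6$, which is exactly the stated bound. The bound is independent of $n$ and of the identifiers.

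For correctness, fix a feasible instance $G\in\D$ that is $T$-locally $\delta$-nice. These are precisely the standing assumptions of Section~\ref{sec:witnessB}. Lemma~\ref{lem:shift-repairs} guarantees that all the repairs $S'_{H,j}$ exist, so the output is well defined. Lemma~\ref{lem:mixed} then gives feasibility of $\Bshift(G)$, even when different components select different indices.

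For the uniform cost bound we invoke Lemma~\ref{lem:uniform-output}, which yields $|\Bshift(G)\cap W|\le(\alpha+\eta)\OPT_\Pi(G,N_G^{\delta+k}[W])$ for every $W$. That lemma rests on three ingredients: Lemma~\ref{lem:budget}, Lemma~\ref{lem:uniform-extension} with the diameter bound of Lemma~\ref{lem:absorb}, and Fact~\ref{fact:shift-parameters}(b). No new obstacle arises here.

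The only point that needs care is that the niceness hypothesis is imposed at the single scale $T$, while the analysis of the retained set also uses the partition scale. Lemma~\ref{lem:charge} needs $D\le\tau$, and this holds by the definition of $\tau$ (Fact~\ref{fact:partition-parameters}), independently of niceness. Lemma~\ref{lem:repair-exists} is applied with the scale $T\ge\tau$, as already checked in Lemma~\ref{lem:shift-repairs}. With these two checks, the theorem follows by combining the cited lemmas.
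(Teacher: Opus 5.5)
Your proposal is correct and follows the paper's proof: the paper likewise takes the round bound from Lemma~\ref{lem:rounds} (with $T=\tau+L$), feasibility from Lemma~\ref{lem:mixed}, and the uniform estimate from Lemma~\ref{lem:uniform-output}. Your added checks are consistent with how those lemmas are proved, namely existence of the repairs via Lemma~\ref{lem:shift-repairs} and the fact that Lemma~\ref{lem:charge} needs only $D\le\tau$ rather than niceness.
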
%

\begin{proof}[\color{black}Proof]
The algorithm $\Bshift$ in Algorithm~\ref{alg:shifted-repair} computes
the output of Definition~\ref{def:outputB}.
Lemma~\ref{lem:rounds} establishes the stated round bound.
Lemma~\ref{lem:mixed} proves feasibility, and
Lemma~\ref{lem:uniform-output} proves the required uniform estimate.
\end{proof}

\subsection{Transfer without errors}\label{sec:iteration}

Every tested ball may belong to the source class even when the whole
graph does not. For example, every ball of radius $\tau$ may be planar
while the whole graph is nonplanar. In this case, $X=\varnothing$,
so no repairs are needed. The following corollary extends the guarantee
of the same source algorithm to such target graphs, with ratio
$\alpha+\eta/2$ and the same round bound.
We keep all assumptions and notation of Section~\ref{sec:guarantees}.

\begin{corollary}\label{cor:iterate-no-errors}
Let $\eta>0$ and choose $\tau$ by Definition~\ref{def:parameters}.
If $G[N_G^\tau[v]]\in\C$, with inherited input data, for every
feasible $G\in\D$ and $v\in V(G)$, then $A$ is a $k$-uniform
$(\alpha+\eta/2)$-approximation on feasible instances in $\D$
in the same $r$ rounds.
\end{corollary}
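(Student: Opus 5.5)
We apply the machinery of Section~\ref{sec:thmA} with $X=X_\tau(G)=\varnothing$. Fix a feasible $G\in\D$. By hypothesis no vertex fails the test, so $X=\varnothing$, $S=A(G)\setminus X=A(G)$, and the analysis of the retained set covers the whole source output. Feasibility comes directly from Lemma~\ref{lem:trusted}: since $\tau\ge r+\rho+1$ by Definition~\ref{def:parameters}, $A(G)$ is feasible on $V(G)\setminus X=V(G)$. This part needs no local niceness. The round count is that of $A$, since we run $A$ itself and the test for errors enters only the analysis.

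For the uniform bound, fix $W\subseteq V(G)$ and let $Y$ be a minimum set feasible on $N_G^k[W]$, so $|Y|=\OPT_\Pi(G,N_G^k[W])$. We apply Lemma~\ref{lem:charge} with $J=A(G)\cap W\subseteq S$. The set $Y$ is feasible on $N_G^k[J]\subseteq N_G^k[W]$, so
\[
 |A(G)\cap W|\le\alpha_\theta\,|Y\cap N_G^q[J]|\le\alpha_\theta|Y|
 \le(\alpha+\eta/2)\OPT_\Pi(G,N_G^k[W]),
\]
where the last step is Fact~\ref{fact:alpha-theta}. This is exactly $k$-uniformity with ratio $\alpha+\eta/2$, with the original radius $k$. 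Unlike Lemma~\ref{lem:uniform-extension}, we pay no $\delta$ because there are no repair components to absorb.

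The main thing to check is that Lemma~\ref{lem:charge} and its ingredients really use only $G\in\D$ together with $D\le\tau$, and never local niceness. I would trace the dependencies. Lemma~\ref{lem:padding} uses the control function and Facts~\ref{fact:part-diameter} and~\ref{fact:part-in-class}. Claim~\ref{claim:charge-subgraph} needs $u_P\notin X$, which holds trivially here. Claims~\ref{claim:charge-local-transfer} and~\ref{claim:charge-part} use Fact~\ref{lem:views}, Lemma~\ref{lem:verifier}, and the guarantee of $A$ on $G_P\in\C$. None of these invokes $\delta$, so the standing assumption in Section~\ref{sec:witness} that $G$ is locally nice is not needed. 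In fact it holds vacuously, with $\delta=0$, since $X_\tau(G)=\varnothing$.
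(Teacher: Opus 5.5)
Your proposal is correct and follows the paper's proof essentially verbatim. With $X=\varnothing$, Lemma~\ref{lem:trusted} gives feasibility of $A(G)$, and Lemma~\ref{lem:charge} applied to $J=A(G)\cap W$ together with Fact~\ref{fact:alpha-theta} yields the $k$-uniform bound with ratio $\alpha+\eta/2$; your extra check that Lemma~\ref{lem:charge} never uses local niceness, which holds vacuously when $X_\tau(G)=\varnothing$, is a worthwhile detail that the paper leaves implicit.
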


\begin{proof}
We have $X=\varnothing$, so Lemma~\ref{lem:trusted} makes $S=A(G)$
feasible. For any $W\subseteq V(G)$, take a smallest set $Y$ feasible
on $N_G^k[W]$ and apply Lemma~\ref{lem:charge} to $J=S\cap W$.
Fact~\ref{fact:alpha-theta} gives
\[
 |A(G)\cap W|\le\alpha_\theta|Y|
 \le(\alpha+\eta/2)\OPT_\Pi(G,N_G^k[W]).\qedhere
\]
\end{proof}

The source algorithm may also come from an earlier transfer
if it satisfies the assumptions of Section~\ref{sec:guarantees}.

\section{Applications}\label{sec:applications}

\subsection{From planar graphs to bounded Euler genus}

For every fixed $g\ge0$ and $\eps>0$, our transfers give ratios
$\alpha+1+\eps$ and $\alpha+\eps$ on feasible instances of Euler
genus at most $g$. We use  a planar  $k$-uniform $\alpha$-approximation satisfying
Section~\ref{sec:guarantees} and assume that every planar induced ball is an allowed source input.
Both algorithms are computable and deterministic, take a constant
number of rounds, and need no embedding.

The two-dimensional control function $f_g$ for graphs of Euler genus
at most $g$ \cite{BBE24} lets us set $d=2$ and $f=f_g$ in our transfer
theorems. To apply them, we also need local niceness relative to the
planar source class. For $g\ge1$, \cite[Proposition~C.11]{BDGPSW26full}
shows that these graphs are $T$-locally $\delta(T)$-nice with
\[
 \delta(T)\coloneqq g\cdot(2T+4\rho+1).
\]
For $g=0$, every tested ball is planar, so the error set is empty
and the same formula gives $\delta(T)=0$.

We can now apply Theorems~\ref{thm:A} and~\ref{thm:B} with
$\eta=\eps$ to obtain the two ratios stated above.

Both algorithms  detect errors using balls of radius $\tau(\eps)$.
Their local niceness assumptions use different  scales,
$\tau(\eps)$ and $T(\eps)$, so  their uniform guarantees use 
$\delta(\tau(\eps))$ and $\delta(T(\eps))$  , respectively.

\par\addvspace{3.25ex plus 1ex minus .2ex}
\subsection{Cuttability and uniform source algorithms}\label{sec:cuttability}

Cuttability allows us to use an ordinary approximation as a uniform
source algorithm. We recall its definition from \cite{BDGPSW26}.

\begin{definition}\label{def:cuttable}
For $\beta\ge1$, a problem is \textbf{cuttable with parameter $\beta$
on $\C$} if, for every $G\in\C$ and $U\subseteq V(G)$,
there is $U\subseteq U'\subseteq V(G)$ such that
\[
 \OPT_\Pi(G[U'])\le\beta\,\OPT_\Pi(G,U).
\]
The induced instance inherits all input data.
\end{definition}

For a cuttable problem with parameter $\beta$ on a hereditary class
$\C$, an ordinary $r$-round $a$-approximation in our source model
is already an $(r+1)$-uniform $a\beta$-approximation
\cite[proof of Proposition~C.7]{BDGPSW26full}.
We can therefore use $\alpha=a\beta$ and $k=r+1$ in our transfers.
Under their respective target assumptions, Theorems~\ref{thm:A}
and~\ref{thm:B} give ratios $a\beta+1+\eta$ and $a\beta+\eta$.
The following two problems have $\beta=1$, so these ratios become
$a+1+\eta$ and $a+\eta$.

In distance-$h$ domination, every vertex must be within distance $h$
of a selected vertex. For fixed $h\ge1$, this problem is additive,
$h$-local, and cuttable with parameter one
\cite[Proposition~C.6]{BDGPSW26full}.
Our repairs use distances in the original graph and radius $\rho=h$.
In $p$-tuple domination, every vertex outside the selected set must
have at least $p$ selected neighbours. Selected vertices have no
such requirement, as in \cite{CHSWW14,CHSWW17}.
For fixed $p\ge2$, this problem is additive, 1-local, and cuttable
with parameter one \cite[Proposition~C.5]{BDGPSW26full}.
Here we use $\rho=1$.

\subsection{MDS from several source classes}\label{sec:seven-source}

Theorem~\ref{thm:B} allows us to transfer each MDS algorithm of Amiri
and Siebertz \cite{AS26} in Table~\ref{tab:mds-sources} to a wider target
class $\D$. This class must have a $d$-dimensional control function,
and its graphs must be $T$-locally $\delta$-nice relative to the
corresponding source class $\C$ and MDS, for fixed $\delta$ and the
radius $T$ required by the theorem. The table gives the resulting
approximation ratios. The transferred algorithms are deterministic
and take a constant number of rounds. The target graph does not have
to belong to $\C$, and the error set may be nonempty.

\begin{table}[ht]
\color{black}
\centering
\caption{MDS approximation ratios before and after transfer, for every $\eps>0$.
The target class $\D$ satisfies the assumptions above.}
\label{tab:mds-sources}
\begin{tabular}{@{}>{\raggedright\arraybackslash}p{.39\linewidth}ccl@{}}
\toprule
Source class $\C$ & \shortstack{On source\\class $\C$} &
\shortstack{After transfer\\to $\D$} & Source in \cite{AS26}\\
\midrule
Euler genus at most $g$, fixed $g\ge0$ & $7+\eps$ & $7+\eps$ & Theorem 8.4\\
Triangle-free planar graphs & $5+\eps$ & $5+\eps$ & Corollary 8.2\\
Treewidth at most $b$, fixed $b\ge1$ & $2b+1+\eps$ & $2b+1+\eps$ & Corollary 8.8(c)\\
$K_t$-minor-free graphs, fixed $3\le t\le9$ & $2t-3+\eps$ & $2t-3+\eps$ & Corollary 8.8(a)\\
\bottomrule
\end{tabular}
\end{table}

MDS is cuttable with parameter one on every hereditary class
\cite[proof of Claim~C.3]{BDGPSW26full}.
By Section~\ref{sec:cuttability}, each source algorithm is therefore
$(r+1)$-uniform with the same ratio.

To obtain a target ratio $a+\eps$ from a source family with ratio
$a+\zeta$, we choose the source loss $\zeta$ at most $\eps/2$ and
use transfer loss $\eta=\eps/2$. This explains why the same expression
appears in both columns of Table~\ref{tab:mds-sources}.
Each column states a guarantee available for every positive accuracy.

For a concrete target class, take any planar graph $P$ and attach
a clique of arbitrary size at one vertex $z$, identifying one clique
vertex with $z$. Let $\D_*$ contain all graphs obtained in this way.
This class contains arbitrarily large cliques, so its Euler genus is
unbounded. Nevertheless, the planar source applies through our
transfer, as the following argument shows.

Let $\pi:G\to P$ fix $P$ and send every added clique vertex to $z$.
For every two vertices $x,y$,
\[
 \dist_P(\pi(x),\pi(y))\le\dist_G(x,y)
 \le\dist_P(\pi(x),\pi(y))+2.
\]
A colouring of $P$ at scale $s$ therefore lifts along $\pi$ to a
colouring of $G$ whose monochromatic components in $G^s$ have weak
diameter at most $f_0(s)+2$, where $f_0$ is a two-dimensional control
function for planar graphs. Thus $\D_*$ has the common control
function $f_*(s)=f_0(s)+2$. For every integer $s\ge1$, a ball centred
outside $N_G^s[z]$ is planar. Hence the planar error set satisfies
$X_s(G)\subseteq N_G^s[z]$, and every component of
$G[N_G^2[X_s(G)]]$ has weak diameter at most $2s+4$.
The class is therefore $s$-locally $(2s+4)$-nice for MDS at every
such scale. Applying Theorem~\ref{thm:B} with the planar source gives
a deterministic constant-round $(7+\eps)$-approximation on $\D_*$.
When the attached clique has at least five vertices, the error set is
nonempty. This example verifies the target assumptions on a class
outside every fixed Euler genus bound.

\section{Comparison and limitations}\label{sec:limits}

Below, we compare the transfers from the three theorems in
Table~\ref{tab:comparison}. We then show what these guarantees give
for a concrete 2-domination example in Table~\ref{tab:scale-example}.
Meta-Theorem~I of \cite{BDGPSW26} requires the source class to be
hereditary and closed under disjoint unions, while our proofs require
only heredity. We denote the test radius of Meta-Theorem~I by
$T_0=f(2k+2\rho)+\max\{k+\rho,r\}$.

\begin{table}[ht]
\color{black}
\centering
\renewcommand{\arraystretch}{1.15}
\caption{Transfers of a $k$-uniform $\alpha$-approximation.}
\label{tab:comparison}
\begin{tabular}{@{}>{\raggedright\arraybackslash}p{.18\textwidth}
>{\raggedright\arraybackslash}p{.23\textwidth}
>{\raggedright\arraybackslash}p{.25\textwidth}
>{\raggedright\arraybackslash}p{.25\textwidth}@{}}
\toprule
Property & Meta-Theorem I \cite{BDGPSW26} & Theorem~\ref{thm:A} &
Theorem~\ref{thm:B}\\
\midrule
Ratio & $\alpha(d+1)+1$ & $\alpha+1+\eta$ & $\alpha+\eta$\\
Niceness scale & $T_0$ & $\tau(\eta)$ & $T(\eta)=\tau(\eta)+L(\eta)$\\
Uniformity radius & $\max\{k,\rho\}$ & $\delta+k$ & $\delta+k$\\
Rounds & $T_0+\delta+2$ & $\delta+\tau+r+3\rho+6$ &
$\delta+T+r+\rho+6$\\
\bottomrule
\end{tabular}
\end{table}

For the example in Table~\ref{tab:scale-example}, we consider
2-domination, where every vertex outside the selected set must have
at least two neighbours in the set. On planar graphs, Czygrinow and
coauthors give a 6-approximation in two rounds
\cite[Theorem~1]{CHSWW17}. Bonamy and coauthors transfer this source
to graphs of bounded Euler genus with ratio $19$
\cite[Corollary~4.5]{BDGPSW26full}. Using the same source,
Theorems~\ref{thm:A} and~\ref{thm:B} give ratios $7+\eta$ and
$6+\eta$, respectively, for every $\eta>0$ and every fixed genus bound.

\Needspace{3\baselineskip}
In the table, we fix the Euler genus bound at two and use
$\alpha=6$, $r=2$, $k=3$, and $\rho=1$.
Section~\ref{sec:cuttability} gives the required source uniformity.
As in Section~\ref{sec:applications}, we use $d=2$ and $\delta(s)=4s+10$.
We can choose $f_2(s)=cs$ for a fixed integer $c\ge1$
by \cite[Corollary~1.10]{BBE24}. Thus $T_0=8c+4$, while
$\tau(\eta)$ and $T(\eta)$ are chosen by
Definitions~\ref{def:parameters} and~\ref{def:shift-parameters}.
We evaluate $\delta$ at the scale required by each theorem.
The last two rows give numerical upper bounds on the scale divided by $T_0$,
valid for every such $c$.

\begin{table}[H]
\color{black}
\centering
\renewcommand{\arraystretch}{1.15}
\setlength{\tabcolsep}{5pt}
\caption{Transfer of the planar 6-approximation for 2-domination
to graphs of Euler genus at most two. Each round count is an upper bound.}
\label{tab:scale-example}
\begin{tabular}{@{}lccc@{}}
\toprule
Property & \shortstack{Meta-Theorem I\\\cite{BDGPSW26}} &
Theorem~\ref{thm:A} & Theorem~\ref{thm:B}\\
\midrule
Ratio, $\eta=1$ & $19$ & $8$ & $7$\\
Ratio, $\eta=0.1$ & $19$ & $7.1$ & $6.1$\\
Niceness scale & $T_0$ & $\tau(\eta)$ & $T(\eta)$\\
Uniformity radius & $3$ & $4\tau(\eta)+13$ & $4T(\eta)+13$\\
Rounds & $5T_0+12$ & $5\tau(\eta)+21$ & $5T(\eta)+19$\\
\midrule
Scale/$T_0$, $\eta=1$ & $1$ & $\le130$ & $\le131.2$\\
Scale/$T_0$, $\eta=0.1$ & $1$ & $\le1210$ & $\le1221.7$\\
\bottomrule
\end{tabular}
\end{table}

\Needspace{3\baselineskip}
Theorem~\ref{thm:A} improves the retained-cost analysis of
\cite[Algorithm~3]{BDGPSW26full} at a larger test radius.
Theorem~\ref{thm:B} also changes the repair rule, averaging the cost
shared by retained choices and repairs.
The price is an accuracy-dependent niceness scale and an output
uniformity radius that depends on $\delta$.

For $a\le b$, heredity gives $X_a(G)\subseteq X_b(G)$.
As the test radius grows, more vertices may become errors and components
of their $2\rho$-neighbourhood may merge.
A diameter bound at $T_0$ therefore does not by itself give a bound
at $\tau(\eta)$ or $T(\eta)$.
We require a common diameter bound at the scale used by each transfer.
Thus the smaller approximation loss comes with a stronger local niceness
assumption on the target class.

For a class that satisfies local niceness at every fixed scale,
such as graphs of bounded Euler genus relative to the planar source,
the larger scale does not change the class to which the theorem
applies. It increases the diameter bound $\delta$ and the resulting
round and uniformity bounds.

\Needspace{5\baselineskip}
\begin{fact}\label{fact:scale-comparison}
For the parameters of Definitions~\ref{def:parameters}
and~\ref{def:shift-parameters}, we have
\[
 \frac{\tau}{T_0}
 =\frac{f(2M)+2M}{f(2k+2\rho)+k+\rho},\qquad
 \frac{T}{T_0}
 =\frac{\tau}{T_0}
  +\frac{\lceil2/\eta\rceil(k+3\rho+1)}{f(2k+2\rho)+k+\rho}.
\]
For the same $\eta$, we also have $1<T/\tau<7/4$.
\end{fact}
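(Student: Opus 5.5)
The plan is to compute $\tau$, $T$ and $T_0$ in closed form and then divide; no new argument is needed beyond the parameter definitions and Fact~\ref{fact:shift-parameters}(c).

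\textbf{Simplifying $T_0$.} Recall $T_0=f(2k+2\rho)+\max\{k+\rho,r\}$. The standing assumption $k\ge r+1$ from Section~\ref{sec:guarantees}, together with $\rho\ge0$, gives $k+\rho>r$. Hence the maximum equals $k+\rho$, and
\[
T_0=f(2k+2\rho)+k+\rho.
\]
This is positive because $k\ge1$ and $f\ge0$, so dividing by $T_0$ is legitimate.

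\textbf{Simplifying $\tau$.} Definition~\ref{def:parameters} gives $\tau=\max\{f(2M)+2M,\,r+\rho+1\}$, so I will show the first entry dominates. From $\alpha\ge1$, $d\ge0$ and $\eta>0$ we get $(2\alpha+\eta)/\eta>1$, and therefore
\[
M\ge 2q(d+1)\frac{2\alpha+\eta}{\eta}>2q\ge q=k+\rho+1\ge r+\rho+2.
\]
Since $f\ge0$, this yields $f(2M)+2M\ge 2M>r+\rho+1$. So $\tau=f(2M)+2M$, and dividing by the expression for $T_0$ gives the first identity.

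\textbf{Computing $T$.} By Definition~\ref{def:shift-parameters}, $T=\tau+mw$ with $m=\lceil2/\eta\rceil$ and $w=q+2\rho$. Substituting $q=k+\rho+1$ gives
\[
w=k+3\rho+1,\qquad T=\tau+\lceil2/\eta\rceil(k+3\rho+1).
\]
Dividing by $T_0$ and using the first identity for $\tau/T_0$ gives the second identity.

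\textbf{The ratio $T/\tau$.} The bound $1<T/\tau<7/4$ is exactly $\tau<T<\frac74\tau$ from Fact~\ref{fact:shift-parameters}(c), divided by $\tau>0$.

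The only step that is not pure substitution is removing the maxima in $\tau$ and $T_0$. Both reduce to the inequality $k\ge r+1$, and for $\tau$ also to the crude lower bound $M>q$.
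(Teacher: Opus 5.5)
Your proposal is correct and follows essentially the same route as the paper, which also removes the two maxima using $k\ge r+1$ and $M\ge 2q$, substitutes $w=q+2\rho=k+3\rho+1$, and reads off $1<T/\tau<7/4$ from Fact~\ref{fact:shift-parameters}(c). You spell out the steps the paper's one-line proof leaves implicit: positivity of $T_0$ and $\tau$, and the chain $2M>r+\rho+1$.
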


\begin{proof}
The identities follow from Definitions~\ref{def:parameters}
and~\ref{def:shift-parameters}, using $k\ge r+1$ and $M\ge2q$.
The last inequality follows from Fact~\ref{fact:shift-parameters}(c).
\end{proof}

By Fact~\ref{fact:scale-comparison}, moving from \algname{SingleRepair}
to \algname{ShiftedRepair} increases the radius by less than $75\%$.
The source ratio $\alpha$ affects both radii through $M$.
Their increase over $T_0$ also depends on $f$, so no constant bounds
these scale factors for every allowed $f$.
The radii in Table~\ref{tab:scale-example} are sufficient for our guarantees.
Smaller radii may also work.

If $f(s)=cs$ for a fixed constant $c$, then
$\tau=2(c+1)M$ and $T<7\tau/4$. Both scales are consequently
$O_c\bigl(q(d+1)(1+\alpha/\eta)\bigr)$.
For the bounded-Euler-genus application, $\delta$ is linear in the
scale as well, so for a fixed source algorithm the round bounds grow
at most linearly in $1/\eta$.

Our transfers apply to additive local problems whose cost is the number
of selected vertices. Connected domination requires a global condition,
and independent domination is not additive. Weighted problems,
edge-set problems, maximization, and CONGEST need separate proofs.
The same applies to Meta-Theorems~II and~III of \cite{BDGPSW26}
for polynomially bounded identifiers.

Two questions remain. Can the ratio be improved at the original
scale $T_0$? Can a cuttability loss $\beta>1$ be removed?

\label{page:main-end}
\appendix
\section{Additional proofs}\label{app:proofs}
\subsection{Computing both outputs}\label{app:rounds}

\begin{proof}[Proof of Lemma~\ref{lem:rounds}]
Fix all parameters before execution. The control function is used
only through the integer $f(2M)$ in $\tau$; it need not be computable.
Put $\ell=2\rho$ for $\Bsingle$ and $\ell=L$ for $\Bshift$, and write
\[
 Z_\ell=N_G^\ell[X],\qquad
 r_\ell=\delta+\ell+\tau+r+\rho+6.
\]
In $r_\ell$ rounds, a vertex collects its extended $(r_\ell-1)$-view.

An error indicator needs the induced $\tau$-ball and its inherited
input. Membership in $Z_\ell$ can therefore be decided using
radius $\ell+\tau+1$.
A vertex outside $Z_\ell$ returns its source bit.
For $u\in Z_\ell$, its component $H$ in $G[Z_\ell]$ lies in
$N_G^\delta[u]$ by Lemma~\ref{lem:zoneA}(b) or Lemma~\ref{lem:absorb}.
Testing zone membership for all vertices of $N_G^{\delta+1}[u]$
uses radius $\delta+1+\ell+\tau+1$.
This identifies $H$ and verifies that no adjacent zone vertex is missing.
Distances here are in $G$; the communication paths may leave $H$.

For each deletion radius $t$ used by the algorithm, $t\le\ell$ and
\[
 x\in S(t)\quad\Longleftrightarrow\quad
 x\in A(G)\ \text{and}\ N_G^t[x]\cap X=\varnothing.
\]
Thus $u$ can evaluate the verifier at every $v\in H$ for every
retained-set variant, using the source bits in $N_G^\rho[v]$
and the relevant error indicators.
The required data lie within radius
$\delta+\rho+\ell+\tau+r+3<r_\ell-1$ of $u$.
They also determine each allowed repair set
$N_G^\rho[R(t)\cap H]\subseteq H$.

Enumerate its subsets and use the computable verifier to select
a minimum repair. Existence follows from Lemma~\ref{lem:zoneA}(c).
For $\Bshift$, also compare the sizes of
$(S_j\cap H)\cup S'_{H,j}$.
All vertices in $H$ have the same relevant data and identifiers,
so the fixed tie-breaking rules give the same repair and index.
These are finite local computations. Components compute in parallel,
since message size and local computation time are unrestricted.

On any other finite view, return a fixed default bit if the component
cannot be verified within the radius or no candidate repair exists.
The membership test, source rule, and verifier terminate on every
finite input, and the enumerations are finite. This defines a total
local rule with the required round bound.
On instances satisfying the assumptions, no default is needed.
The program uses neither $n$ nor an identifier bound, and computes
none of the colouring, partition, or comparison sets used in the analysis.
\end{proof}

\label{page:appendix-end}
\paragraph{Disclosure of AI use.}
The author used AI to improve the English and presentation
of the paper and to help check technical details of the proofs.


\begin{thebibliography}{BDG$^+$26b}\label{page:references-start}
\bibitem[AS26]{AS26}
S.~Amiri and S.~Siebertz.
\emph{Almost Optimal Constant-Round Approximation of Dominating Set
in Graph Classes with Excluded Minors}.
Preprint, 2026.
\href{https://arxiv.org/abs/2610.09618v2}{arXiv:2610.09618v2}.

\bibitem[BBE$^+$24]{BBE24}
M.~Bonamy, N.~Bousquet, L.~Esperet, C.~Groenland, C.-H.~Liu,
F.~Pirot, and A.~Scott.
\emph{Asymptotic dimension of minor-closed families and
Assouad--Nagata dimension of surfaces}.
Journal of the European Mathematical Society, 26, 3739--3791, 2024.
\href{https://doi.org/10.4171/JEMS/1341}{doi:10.4171/JEMS/1341}.

\bibitem[BDG$^+$26a]{BDGPSW26}
M.~Bonamy, A.~Das, C.~Gavoille, T.~Picavet, J.~Suomela, and A.~Wesolek.
\emph{Meta-Theorems for Cuttable Distributed Problems}.
PODC 2026, pp.~382--389.
\href{https://doi.org/10.1145/3796701.3815958}{doi:10.1145/3796701.3815958}.

\bibitem[BDG$^+$26b]{BDGPSW26full}
M.~Bonamy, A.~Das, C.~Gavoille, T.~Picavet, J.~Suomela, and A.~Wesolek.
\emph{Meta-Theorems for Cuttable Distributed Problems}.
Full version, 2026.
\href{https://arxiv.org/abs/2605.19157v1}{arXiv:2605.19157v1}.

\bibitem[BGPW25]{BGPW25}
M.~Bonamy, C.~Gavoille, T.~Picavet, and A.~Wesolek.
\emph{Distributed Approximation Algorithms for Minimum Dominating Set
in Locally Nice Graphs}.
Preprint, 2025.
\href{https://arxiv.org/abs/2507.04960}{arXiv:2507.04960}.

\bibitem[CHS$^+$14]{CHSWW14}
A.~Czygrinow, M.~Ha\'n\'ckowiak, E.~Szyma\'nska,
W.~Wawrzyniak, and M.~Witkowski.
\emph{Distributed Local Approximation of the Minimum
$k$-Tuple Dominating Set in Planar Graphs}.
OPODIS 2014, LNCS 8878, pp.~49--59.
\href{https://doi.org/10.1007/978-3-319-14472-6_4}{doi:10.1007/978-3-319-14472-6\_4}.

\bibitem[CHS$^+$17]{CHSWW17}
A.~Czygrinow, M.~Ha\'n\'ckowiak, E.~Szyma\'nska,
W.~Wawrzyniak, and M.~Witkowski.
\emph{Improved distributed local approximation algorithm
for minimum 2-dominating set in planar graphs}.
Theoretical Computer Science, 662, 1--8, 2017.
\href{https://doi.org/10.1016/j.tcs.2016.12.001}{doi:10.1016/j.tcs.2016.12.001}.

\bibitem[CHW08]{CHW08}
A.~Czygrinow, M.~Ha\'n\'ckowiak, and W.~Wawrzyniak.
\emph{Fast Distributed Approximations in Planar Graphs}.
DISC 2008, LNCS 5218, pp.~78--92.
\href{https://doi.org/10.1007/978-3-540-87779-0_6}{doi:10.1007/978-3-540-87779-0\_6}.

\bibitem[CHWW18]{CHWW18}
A.~Czygrinow, M.~Ha\'n\'ckowiak, W.~Wawrzyniak, and M.~Witkowski.
\emph{Distributed Approximation Algorithms for the Minimum
Dominating Set in $K_h$-Minor-Free Graphs}.
ISAAC 2018, LIPIcs 123, pp.~22:1--22:12.
\href{https://doi.org/10.4230/LIPIcs.ISAAC.2018.22}{doi:10.4230/LIPIcs.ISAAC.2018.22}.

\bibitem[GKM17]{GKM17}
M.~Ghaffari, F.~Kuhn, and Y.~Maus.
\emph{On the Complexity of Local Distributed Graph Problems}.
STOC 2017, pp.~784--797.
\href{https://doi.org/10.1145/3055399.3055471}{doi:10.1145/3055399.3055471}.

\bibitem[HKO$^+$25]{HKOdMSV25}
O.~Heydt, S.~Kublenz, P.~Ossona de Mendez, S.~Siebertz, and A.~Vigny.
\emph{Distributed domination on sparse graph classes}.
European Journal of Combinatorics, 123, Article 103773, 2025.
\href{https://doi.org/10.1016/j.ejc.2023.103773}{doi:10.1016/j.ejc.2023.103773}.
Full text at \href{https://arxiv.org/abs/2207.02669}{arXiv:2207.02669}.

\bibitem[HLS14]{HLS14}
M.~Hilke, C.~Lenzen, and J.~Suomela.
\emph{Brief announcement: Local approximability of minimum dominating
set on planar graphs}.
PODC 2014, pp.~344--346.
\href{https://doi.org/10.1145/2611462.2611504}{doi:10.1145/2611462.2611504}.


















\bibitem[HSV22]{HSV22}
O.~Heydt, S.~Siebertz, and A.~Vigny.
\emph{Local planar domination revisited}.

SIROCCO 2022, LNCS 13298, pp.~154--173.
\href{https://doi.org/10.1007/978-3-031-09993-9_9}{doi:10.1007/978-3-031-09993-9\_9}.





























Full version at \href{https://arxiv.org/abs/2111.14506}{arXiv:2111.14506}.



\bibitem[KMW16]{KMW16}
F.~Kuhn, T.~Moscibroda, and R.~Wattenhofer.
\emph{Local Computation: Lower and Upper Bounds}.


Journal of the ACM, 63(2), Article 17, 2016.





























\href{https://doi.org/10.1145/2742012}{doi:10.1145/2742012}.

\bibitem[LPW13]{LPW13}
C.~Lenzen, Y.-A.~Pignolet, and R.~Wattenhofer.
\emph{Distributed minimum dominating set approximations in
restricted families of graphs}.
Distributed Computing, 26(2), 119--137, 2013.
\href{https://doi.org/10.1007/s00446-013-0186-z}{doi:10.1007/s00446-013-0186-z}.

\bibitem[RG20]{RG20}
V.~Rozho\v{n} and M.~Ghaffari.
\emph{Polylogarithmic-Time Deterministic Network Decomposition
and Distributed Derandomization}.
STOC 2020, pp.~350--363.
\href{https://doi.org/10.1145/3357713.3384298}{doi:10.1145/3357713.3384298}.
Full text at \href{https://arxiv.org/abs/1907.10937v2}{arXiv:1907.10937v2}.

\bibitem[Waw14]{Waw14}
W.~Wawrzyniak.
\emph{A strengthened analysis of a local algorithm for the
minimum dominating set problem in planar graphs}.
Information Processing Letters, 114(3), 94--98, 2014.
\href{https://doi.org/10.1016/j.ipl.2013.11.008}{doi:10.1016/j.ipl.2013.11.008}.

\bibitem[Waw26]{Wawrzyniak26}
W.~Wawrzyniak.
\emph{Near-Optimal Distributed Domination in Planar Graphs}.
Preprint, 2026.
\href{https://arxiv.org/abs/2609.32736v1}{arXiv:2609.32736v1}.

\label{page:references-end}
\end{thebibliography}
\end{document}